\def\R{\mathbb{R}}
\def\E{\mathbb{E}}
\def\H{\mathcal{H}}
\def\S{S}
\def\U{\mathcal{U}}
\def\GP{\mathcal{GP}}
\DeclareMathOperator*{\argmax}{arg\,max}
\newtheorem{theorem}{Theorem}
\newtheorem{proposition}{Proposition}
\newtheorem{lemma}{Lemma}
\newtheorem{corollary}{Corollary}
\newtheorem{definition}{Definition}
\theoremstyle{definition}
\newtheorem{remark}{Remark}
\Crefname{appendix}{Supplement}{Supplements}
\Crefname{subappendix}{Supplement}{Supplements}
\Crefname{subsubappendix}{Supplement}{Supplements}
\title{
	Sampling as Bandits: Evaluation-Efficient Design for Black-Box Densities
}
\author{
	Takuo Matsubara$^1$ \quad Andrew Duncan$^2$ \quad Simon Cotter$^3$ \quad Konstantinos Zygalakis$^1$
}
\date{
	\small{$^1$The University of Edinburgh \quad $^2$Imperial College London} \quad $^3$The University of Manchester
}
\begin{document}

\maketitle

\begin{abstract}
	We propose bandit importance sampling (BIS), a powerful importance sampling framework tailored for settings in which evaluating the target density is computationally expensive.
	BIS facilitates accurate sampling while minimizing the required number of target-density evaluations.
	In contrast to adaptive importance sampling, which optimizes a proposal distribution, BIS directly optimizes the set of samples through a sequential selection process driven by multi-armed bandits. 
	BIS serves as a general framework that accommodates user-defined bandit strategies. 
	Theoretically, the weak convergence of the weighted samples, and thus the consistency of the Monte Carlo estimator, is established regardless of the specific strategy employed.
	In this paper, we present a practical strategy that leverages Gaussian process surrogates to guide sample selection, adapting the principles of Bayesian optimization for sampling.
	Comprehensive numerical studies demonstrate the superior performance of BIS across multimodal, heavy-tailed distributions, and real-world Bayesian inference tasks involving Markov random fields.
\end{abstract}

\section{Introduction} \label{sec:introduction}

Statistical characterization of complex phenomena often necessitates highly structured, black-box models that are computationally expensive to evaluate \citep{Chen2010}.
While Bayesian inference provides a coherent probabilistic framework for integrating observations with prior knowledge, its application to black-box models is often computationally impractical.
Such challenges are prevalent in fields, such as computational physics, biology, and geostatistics, where likelihood evaluations involve numerical solutions to dynamical equations \citep{Beaumont2002,Girolami2008,Marin2012,Warne2020} or operations on high-dimensional matrices and massive summations \citep{Rue2005,Goodreau2009,Besag1986}.

A fundamental task in Bayesian inference is to sample from a posterior density of the model parameter, which is proportional to the likelihood and the prior density.
Conventional sampling methods, such as Markov chain Monte Carlo \citep[MCMC;][]{Brooks2011}, necessitate extensive evaluations of the target density to explore the parameter space.
The cumulative computational cost easily becomes prohibitive when the target density is expensive to evaluate.
Consequently, there is a pressing need for evaluation-efficient sampling methods tailored for black-box densities, which can operate within a limited computational budget.

Importance sampling---originally introduced as a variance reduction technique in Monte Carlo estimation \citep{Hammersley1954}---offers a compelling approach in this context, approximating the target density by a weighted average of samples drawn from an arbitrary proposal density \citep{Tokdar2010}.
Since target-density evaluations are strictly limited to the preselected samples, this approach allows for precise control over the computational budget.
However, the approximation accuracy of importance sampling is heavily contingent upon the design of the samples.
Adaptive importance sampling (AIS) is arguably the most prevalent approach to this challenge, optimizing the proposal density from which samples are drawn \citep[e.g.][]{Bugallo2017,Cotter2019,Cotter2020}.
Yet, AIS remains impractical for black-box densities, as the optimization of the proposal density reintroduces the need for intensive target-density evaluations.

In machine learning, Bayesian optimization \citep[BO;][]{Mockus1989} has emerged as a prominent framework for optimizing black-box objective functions that are expensive to evaluate \citep{Shahriari2016}.
The core principle involves approximating the underlying objective function with a computationally efficient surrogate built upon the history of observed evaluations.
Gaussian processes \citep[GPs;][]{Rasmussen2005} are a standard and versatile choice for surrogate modeling \citep{Gramacy2020}. 
GPs quantify predictive uncertainty alongside predictions, accounting for the epistemic uncertainty inherent in estimating the true function from finite data.
BO iteratively selects the query point for evaluation by maximizing an acquisition function constructed from the GP surrogate model.
This framework, sometimes referred to as GP bandits, can be cast as a bandit problem with an infinite number of arms \citep{Bull2011}.
Accordingly, this framework has been extensively leveraged for experimental design and active learning tasks \citep{Srinivas2010}.

\paragraph*{Contribution}
This work proposes bandit importance sampling (BIS), a novel importance sampling framework driven by multi-armed bandits.
BIS directly optimizes the samples, sequentially selecting them from a candidate pool via a bandit strategy.
This approach preserves the fundamental advantage of importance sampling, limiting the number of target-density evaluations strictly to the pre-specified sample size. 
BIS is a general framework compatible with arbitrary bandit strategies. 
Theoretically, we establish the weak convergence of the weighted samples to the target density independently of the employed strategy.
We present a practical implementation of the bandit strategy utilizing GP surrogate models, translating the principles of BO into sampling tasks.
We demonstrate the efficiency of BIS across diverse experiments, including sampling from the marginal posterior of a Markov random field (MRF) model applied to large-scale precipitation data in the United States.

\paragraph*{Scope}
In this work, we restrict our attention to computationally expensive densities on low-dimensional spaces.
The scalability and accuracy limitations of GPs in high-dimensional spaces remain open challenges.
Sampling from high-dimensional densities poses its own set of challenges.
The compound challenge of computational intensity and high dimensionality renders the problem intractable.
Therefore, we isolate the challenge of computational intensity.
Crucially, this setting retains substantial practical relevance.
For example, in the Lorenz weather forecast model \citep{Wilks2005}, the posterior evaluation involves 40-dimensional stochastic dynamics, while the parameter space is two-dimensional.
In hierarchical MRF models for massive spatial data \citep{Rue2005}, the high-dimensional latent field can often be marginalized out analytically, reducing the inference task to a few hyperparameters, yet the marginal posterior remains computationally expensive.

\paragraph*{Structure}
The rest of the paper is structured as follows.
\Cref{sec:background} recaps importance sampling and GP regression concisely.
\Cref{sec:methodology} introduces the general framework of BIS.
\Cref{sec:theory} establishes a theoretical analysis of the convergence rate of BIS.
\Cref{sec:BO} develops a practical implementation of the bandit strategy for BIS, which leverages GP surrogate modeling.
\Cref{sec:experiment} examines the empirical performance of BIS, with applications to Bayesian inference of computationally-expensive models.
Relevant work on GP-based experimental designs for Bayesian computation is recapped in \Cref{sec:related}.
Finally, \Cref{sec:conclusion} concludes the paper with a discussion of future directions.

\section{Background} \label{sec:background}

This section provides a concise overview of importance sampling and GP regression.
Let $\Theta \subset \R^d$ for some dimension $d$.
Any probability distribution on $\Theta$ considered in this work is identified with the density with respect to the Lebesgue measure if it exists.

\subsection{Importance Sampling}

Importance sampling is a fundamental technique, particularly effective when direct sampling from a target density $p(\theta) = q(\theta) / Z$ on $\Theta$ is impractical.
The objective is to construct a set of $M$ weighted samples, $\{ \theta_n, w_n  \}_{n=1}^{M}$, which approximates the target density $p(\theta)$.
The samples $\{ \theta_n  \}_{n=1}^{M}$ are drawn from a proposal density $u(\theta)$ instead of $p(\theta)$.
If the density $p(\theta)$ is known exactly, the importance weights $\{ w_n  \}_{n=1}^{M}$ are computed as the ratio $p(\theta_n) / u(\theta_n)$.
If the density $p(\theta)$ is known only up to the normalization constant $Z$, the \emph{self-normalized} weights $\{ w_n  \}_{n=1}^{M}$ are employed, by which the expectation of a test function $f(\theta)$ with respect to $p(\theta)$ is estimated as
\begin{align}
	\E_{\theta \sim p}[f(\theta)] \approx \sum_{n=1}^{M} w_n f(\theta_n) \quad \text{where} \quad w_n = \frac{ q(\theta_n) / u(\theta_n) }{ \sum_{n=1}^{M} q(\theta_n) / u(\theta_n) } . \label{eq:mc_estimate}
\end{align}
The weights depend solely on the unnormalized target $q(\theta)$, as the self-normalization cancels the constant $Z$.
Consistency of this Monte Carlo estimator is well established.

For importance sampling to be efficient, a substantial portion of the samples $\{ \theta_n  \}_{n=1}^{M}$ must be located within the high-probability regions of the target density $p(\theta)$.
Otherwise, the vast majority of samples will be assigned negligible weights, contributing minimally to the representation of the target density $p(\theta)$.
Panel (a) in \Cref{fig:fig_21_1} illustrates a scenario where only approximately 10\% of the samples carry significant weights, while the remaining 90\% are negligible.
AIS enhances the efficiency of importance sampling by optimizing the proposal density $u(\theta)$ to draw samples.
In AIS, a parametric family of proposal densities is typically employed, and samples are drawn from the family after or during the optimization process.
However, this approach is impractical for our setting, as the successful optimization of the proposal density $u(\theta)$ necessitates extensive evaluations of the target density $p(\theta)$.
Furthermore, since the samples are random and independent, they are prone to redundancy and often fail to be the maximally informative locations for evaluating $p(\theta)$.

Suppose that the evaluation of $p(\theta)$ is so computationally expensive that the unnormalized target $q(\theta)$ can be evaluated at most $N$ times.
A natural objective given this constraint is to identify $N$ samples that yield the most accurate approximation of the density $p(\theta)$ via importance sampling.
For instance, in Panel (a) in \Cref{fig:fig_21_1}, where 200 samples are initially drawn, selecting the $N$ samples with the largest weights yields the best accuracy given a budget of $N$ density evaluations.
To this end, we frame the problem as a sequential decision-making task to select the $N$ most informative points from a pool of $M$ well-dispersed candidates.
See Panel (b) in \Cref{fig:fig_21_1} for illustration.

\begin{figure}[t]
	\subcaptionbox{Standard Importance Sampling}{\includegraphics[width=0.475\textwidth]{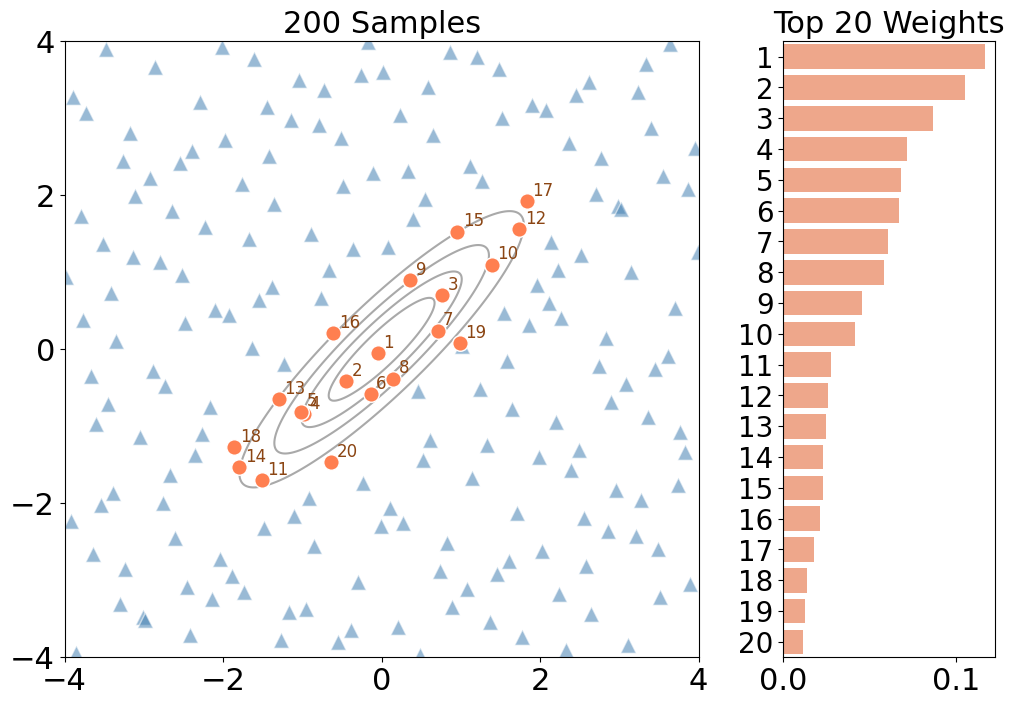}}
	\hfill
	\subcaptionbox{Bandit Importance Sampling}{\includegraphics[width=0.475\textwidth]{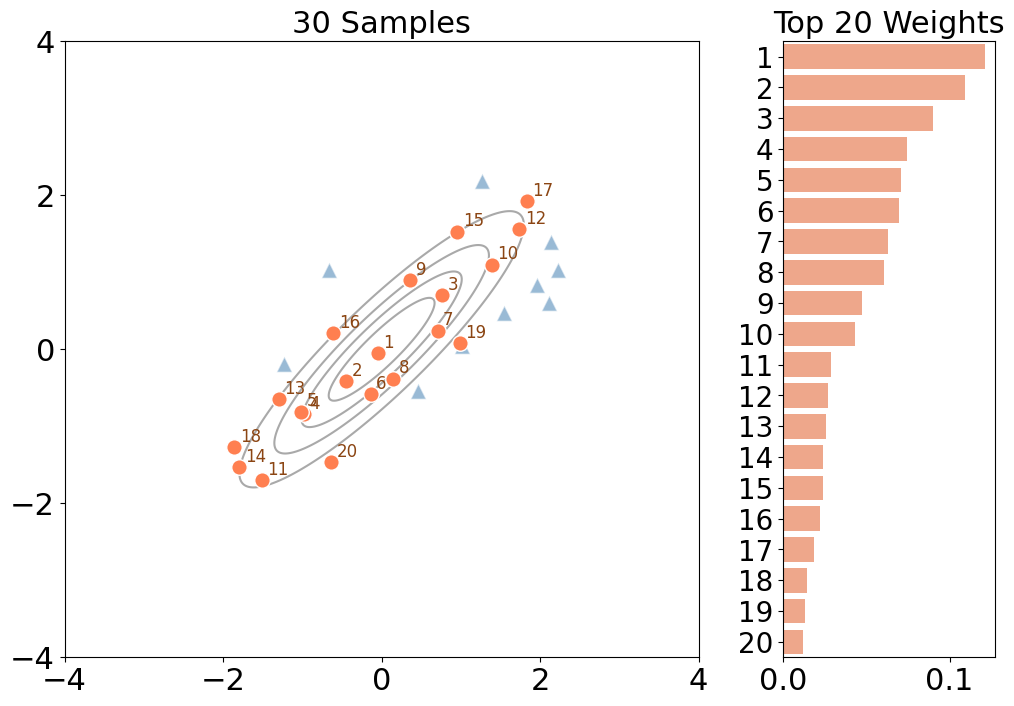}}
	\caption{Illustration of the sampling efficiency of BIS on a normal density (solid contour). Panel (a): Standard importance sampling with 200 quasi-uniform points; circles indicate the 20 largest weights (top 10\%), and triangles indicate the rest. Panel (b): Selection by BIS with a budget of $N = 30$ evaluations; the method identified all the high-weight points among the 200 candidates while evaluating the target only 30 times, achieving a 85\% reduction in computational cost.} \label{fig:fig_21_1}
\end{figure}

\subsection{GP Regression}

A GP, denoted by $\GP(m, k)$, is a stochastic process fully specified by its mean function $m: \Theta \to \R$ and covariance kernel $k: \Theta \times \Theta \to \R$.
It can be interpreted as a random variable $f \sim \GP(m, k)$ taking values in a function space on $\Theta$.
A defining characteristic of GPs is the moment condition:
\begin{align}
	m(\theta) = \E_{f \sim \GP(m, k)}[ f(\theta) ] \quad \text{and} \quad k(\theta, \theta') = \E_{f \sim \GP(m, k)}[ ( f(\theta) - m(\theta) ) ( f(\theta) - m(\theta') ) ] .
\end{align}
Another characteristic of GPs is that the joint distribution of the function values $( f(\theta_1), \dots, f(\theta_n) )$ at any finite set of inputs $( \theta_1, \dots, \theta_n )$ follows a multivariate Gaussian distribution $\mathcal{N}( \bm{m}_n, \bm{K}_n )$.
Here, the mean $\bm{m}_n \in \R^n$ denotes the vector with $i$-th entry $m(\theta_i)$ and the covariance $\bm{K}_n  \in \R^{n \times n}$ denotes the matrix with $(i,j)$-th entry $k(\theta_i, \theta_j)$.
Widely-used covariance kernels include Gaussian and Mat\'{e}rn kernels.
See, e.g.,~\cite{Rasmussen2005} for a detailed introduction.

GPs offer a flexible framework for nonparametric Bayesian regression.
Suppose we observe a set of inputs $( \theta_1, \dots, \theta_n )$ and their corresponding outputs $( y_1, \dots, y_n )$.
Conditioning the GP prior $\GP(m, k)$ on the observed data yields a posterior distribution that is also a GP, denoted by $\GP(m_n, k_n)$, with the updated mean and covariance functions given by
\begin{align}
	m_n(\theta) & = m(\theta) + \bm{k}_n(\theta)^{\mathrm{T}} \bm{K}_n^{-1} ( \bm{y}_n - \bm{m}_n ) , \\
	k_n(\theta, \theta') & = k(\theta, \theta') - \bm{k}_n(\theta)^{\mathrm{T}} \bm{K}_n^{-1} \bm{k}_n(\theta') .
\end{align}
Here, $\bm{k}_n(\theta) = ( k(\theta, \theta_1), \cdots, k(\theta, \theta_n) ) \in \R^n$ is the vector of kernel evaluations between the test point $\theta$ and the observed inputs, and $\bm{y}_n = (y_1, \dots, y_n) \in \R^n$ is the vector of the observed outputs.
In practice, it is common to assume that observations are subject to i.i.d.~Gaussian noise $\epsilon \sim \mathcal{N}(0, \sigma_*^2)$ with the scale $\sigma_*$.
In this case, $\bm{K}_n$ is replaced with $\bm{K}_n + \sigma_*^2 \bm{I}_{n \times n}$ in the aforementioned definition, where $\bm{I}_{n \times n}$ is the $n \times n$ identity matrix.

\section{General Framework} \label{sec:methodology}

We now introduce BIS, a method that frames the adaptive design of samples as a bandit problem.
This section presents a general framework of BIS, which comprises two key components: (i) a pool of candidates from which samples are selected, and (ii) a criterion for selecting a sample at each iteration.
Specifying these two components (i) and (ii) defines a concrete instance of BIS.
The construction of an effective candidate pool for black-box densities is detailed in  \Cref{sec:proposal}.
The construction of an effective point-selection criterion, which is driven by GP bandits in this work, will be elaborated in the subsequent section.
The standing setup for BIS is stated below.

\paragraph*{Setup:}
Suppose $\Theta \subset \R^d$ is a compact hyperrectangular domain with no loss of generality, since reparametrization can be performed otherwise.
Let $p(\theta) = q(\theta) / Z$ be a target density of interest on $\Theta$, where $q(\theta)$ denotes the proportional term and $Z$ denotes the normalization constant.
Assume that $p(\theta)$ is positive over $\Theta$.
A practical scenario in mind is that evaluation of the unnormalized density $q(\theta)$ is possible at each $\theta$ but computationally expensive for one's computational capacity.

\subsection{Bandit Importance Sampling} \label{sec:bis}

BIS aims to construct a set of $N$ weighted samples $\{ \theta^*_n, w^*_n \}_{n=1}^{N}$ to approximate the target density $p(\theta)$, subject to a budget of $N$ evaluations of the unnormalized target $q(\theta)$.
Instead of drawing samples randomly from a proposal density, BIS directly optimizes the samples, sequentially selecting them from a candidate pool based on a given selection criterion. 
Distinct from conventional bandits, once a candidate is selected, it is never revisited in subsequent iterations (see \Cref{rm:discrete_importance}).

We detail the explicit procedure of BIS.
Let $u(\theta)$ be a proposal density on $\Theta$. 
A sequence $\U = \{ \theta_i \}_{i=1}^{\infty}$, whose empirical distribution converges weakly to $u(\theta)$, is called the \emph{proposal sequence} in this work.
Let $\S_n$ denote the candidate pool at iteration $n$.
At the initial iteration $n = 1$, the candidate pool $\S_1$ is initialized with the first $M$ points of the proposal sequence $\U$: $\S_1 = \{ \theta_i \}_{i=1}^{M}$.
After each iteration, the candidate pool $\S_n$ is updated by replacing the selected point with the next available point from the proposal sequence $\U$.
This mechanism guarantees that no point in $\U$ is ever selected more than once.
Finally, let $U_n(\theta)$ be the selection criterion defined at iteration $n$, which governs the choice of the sample.
We repeat the following procedure from $n = 1$:
\begin{enumerate}
	\item Select a point $\theta_n^*$ from the candidate pool $\S_n$ that maximizes the criterion $U_n(\theta)$;
	\item Evaluate the unnormalized target $q(\theta_n^*)$ to compute the corresponding weight $w_n^*$;
	\item Update the candidate pool for the next iteration by $\S_{n+1} = ( \S_{n} \setminus \{ \theta_n^* \} ) \cup \{ \theta_{M+n} \}$.
\end{enumerate}
Following the final iteration $N$, we assign the self-normalized weight $w_n^*$ in \eqref{eq:mc_estimate} to each sample $\theta_n^*$.
By construction, the size of the candidate pool $| \S_n |$ remains constant at $| \S_n | = M$ throughout the entire procedure.
Algorithm \ref{alg:tmp} provides a formal summary of the BIS procedure.

\begin{algorithm}[h]
	\caption{Bandit Importance Sampling} \label{alg:tmp}
	\KwIn{sample size $N$, candidate-pool size $M$, proposal sequence $\{ \theta_i \}_{i=1}^{M+N}$ convergent to proposal density $u$, point-selection criterion $U_n$ for each iteration $n$}
	\KwOut{weighted samples $\{ \theta_n^*, w_n^* \}_{n=1}^{N}$ to approximate the target density $p$}
	$\S_1 \gets \{ \theta_i \}_{i=1}^{M}$ \hfill $\triangleright$ initialize selection pool\\
	\For{$n \gets 1, \cdots, N$}{
		$\theta_n^* \gets \argmax_{ \theta \in \S_n } U_n(\theta)$ \hfill $\triangleright$ optimize point-selection criterion\\
		$w_n^* \gets \frac{ q(\theta_n^*) }{ u(\theta_n^*) }$ \hfill $\triangleright$ evaluate density ratio $\frac{ q(\cdot) }{ u(\cdot) }$\\
		$\S_{n+1} \gets \big( \S_{n} \setminus \{ \theta_n^* \} \big) \cup \{ \theta_{M+n} \}$ \hfill $\triangleright$ update selection pool\\
	}
	$(w_1^*, \cdots, w_N^*) \gets (w_1^*, \cdots, w_N^*) / \sum_{n=1}^{N} w_n^*$ \hfill $\triangleright$ normalize weights\\
\end{algorithm}

\begin{remark}[Importance of Discrete optimization with No Revisit] \label{rm:discrete_importance}
	BIS selects points by maximizing the criterion $U_n$ over the discrete set $S_n$, subject to the constraint that previously selected points are never revisited.
	A possible alternative, analogous to BO, would be to optimize the criterion $U_n$ over the continuous set $\Theta$.
	However, as demonstrated in \Cref{apx:discrete}, such a continuous approach can lead to an overconcentration of the samples around the modes of the target density.
	This clustering behavior results in redundant evaluations and hinders the convergence of the importance-sampling approximation.
	In contrast, the use of the discrete set $S_n$ in BIS acts as a regularizer to prevent the overconcentration of the samples.
	Combined with the no-revisit mechanism, this ensures the convergence of the importance-sampling approximation, c.f., \Cref{sec:theory}.
\end{remark}

\subsection{Proposal Density and Proposal Sequence} \label{sec:proposal}

We discuss the choice of the proposal density $u$ and the corresponding proposal sequence $\U = \{ \theta_i \}_{i=1}^{\infty}$ in BIS.
In standard importance sampling, an optimal proposal density exists in terms with minimizing the variance of the Monte Carlo estimator.
However, this proposal density depends on the target density itself, which is black-box and unavailable in our setup.
Since key properties of the black-box target density---such as its modes or tail behavior---are unknown a priori, there is even little basis for constructing a tailored proposal density.
Therefore, we adopt the uniform density over the domain $\Theta$ as a natural and robust default choice for the proposal density $u$.

Crucially, the proposal sequence $\U$ need not consist of i.i.d.~random samples, since unbiasedness is no longer a primary interest for BIS.
Any dependent random or deterministic sequence can be employed, provided it converges weakly to the proposal density $u$.
Space-filling sequences from Quasi-Monte Carlo \citep[QMC;][]{Dick2010} are ideal for the proposal sequence, exhibiting rapid convergence to the uniform density $u$ \citep{Morokoff1994}.

A celebrated result in QMC theory establishes that numerical integration using $N$ points from a space-filling sequence achieves a convergence rate of $\mathcal{O}( ( \log N )^{d-1} N^{-1} )$.
This rate is asymptotically faster than the rate of standard Monte Carlo integration $\mathcal{O}( N^{-1/2} )$ for large sample sizes $N$ \citep{Morokoff1994}.
Nevertheless, importance sampling with space-filling sequences can be inefficient for small-to-medium sample sizes $N$.
This is because these sequences are designed for uniform exploration, failing to adapt to the high-probability regions of a specific target.
utilizing a space-filling sequence within the BIS framework can be viewed as a sequential permutation of the sequence, such that exploration begins in the high-probability regions of the target.
This ensures robust approximation accuracy across both small and large sample sizes $N$.

Our default recommendation of the proposal sequence is the Halton sequence \citep{Halton1964}, which is arguably the most widely adopted space-filling sequence in practice.
While originally defined for the unit cube, the Halton sequence is readily amenable to arbitrary hyperrectangles.
For clarity, we define the Halton sequence scaled to the hyperrectangular domain $\Theta = \prod_{i=1}^d [a_{(i)}, b_{(i)}]$, where $a_{(i)}$ and $b_{(i)}$ denote the lower and upper boundaries for each coordinate $i$, respectively.

\begin{definition}[Scaled Halton Sequence] \label{def:halton}
	Let $\{ \eta_i \}_{i=1}^{\infty}$ be the standard Halton sequence on the unit cube $[0, 1]^d$.
	The scaled Halton sequence $\{ \theta_i \}_{i=1}^{\infty}$ on the domain $\Theta$ is defined by $\theta_i := a + B \eta_i$, where $a := ( a_{(1)}, \cdots, a_{(d)} ) \in \R^d$ and  $B$ is a $d \times d$ diagonal matrix with $(i,i)$-th entry $b_{(i)} - a_{(i)}$.
\end{definition}

Having established the general framework of BIS, we now turn to the analysis of its theoretical guarantees.
Subsequently, we detail a practical instantiation of the point-selection criterion.

\section{Convergence Guarantee} \label{sec:theory}

We present a rigorous analysis of the convergence properties of the BIS framework.
Our aim is to highlight that BIS does not compromise the convergence rate of standard importance sampling, \emph{regardless of} the choice of the point-selection criterion. 
This resilience stems from the construction, which ensures that no candidate point is selected more than once (c.f.~\Cref{rm:discrete_importance}).
This intrinsic convergence guarantee grants users the flexibility to design bespoke point-selection criteria, safeguarding against the risk of complex designs introducing unquantifiable sampling biases.

First, \Cref{sec:convergence_metric} introduces the Maximum Mean Discrepancy \citep[MMD;][]{Smola2007,Gretton2012}, which serves as a metric to assess the approximation error between the weighted samples and the target density.
\Cref{sec:convergence_uniform} then establishes the convergence rate of Algorithm \ref{alg:tmp} under the default setting of the uniform proposal density.
Finally, \Cref{sec:convergence_general} provides the convergence rate of Algorithm \ref{alg:tmp} in a general setting where the proposal density is arbitrary.
Proofs of all the theoretical results are provided in \Cref{apx:proof}.

\subsection{Metric of Convergence} \label{sec:convergence_metric}

Consider a metric on the space of probability distributions on $\Theta$ that metrizes the weak convergence; that is, a probability distribution converges weakly to a target if and only if their distance under the metric converges to zero.
A standard strategy for establishing the convergence of the weighted samples to the target density is to demonstrate the decay of the metric between them.
In this context, the MMD is a computationally and theoretically well-suited choice of the metric.

The MMD is a metric defined via a kernel function $\kappa: \Theta \times \Theta \to \R$.
Recall that every kernel implies the existence of the existence of a uniquely associated Hilbert space of functions on $\Theta$, known as the reproducing kernel Hilbert space (RKHS).
See, e.g., \cite{Paulsen2016} for an introduction to RKHS.
Let $\mathcal{H}_\kappa$ be the RKHS associated with $\kappa$, equipped with the norm $\| \cdot \|_{\mathcal{H}_\kappa}$.
The MMD is defined as the worst-case error between two expectations with respect to $p$ and $q$:
\begin{align}
	d(p, q) := \sup_{\| f \|_{\mathcal{H}_k} \le 1} \Big| \E_{\theta \sim p}[ f(\theta) ] - \E_{\theta \sim q}[ f(\theta) ] \Big| . \label{eq:mmd}
\end{align}
The MMD satisfies the metric axioms and metrizes weak convergence of probability distributions under mild regularity conditions on the kernel \citep{Simon-Gabriel2023}.
For our theoretical analysis, the specific choice of the kernel $\kappa$ is not critical.
Provided the condition outlined below, our convergence bounds are independent of the kernel choice.
Consequently, the existence of any kernel satisfying this condition suffices to establish the weak convergence.

A kernel $\kappa$ is said to be integrally strictly positive definite if $\int_{\Theta \times \Theta} \kappa(\theta, \theta') d \mu(\theta) d \mu(\theta') > 0$ for all finite non-zero signed Borel measures $\mu$ on $\Theta$.
This is analogous to strictly positive definiteness of kernels.
For example, Gaussian kernels satisfy this property \citep{Sriperumbudur2010}.
To streamline our derivation, we impose the following condition on the kernel $\kappa$ used for MMD.
Without loss of generality, we assume that both the kernel and the derivatives are bounded by $1$.
This is justified because any bounded kernel can be rescaled by a constant factor to satisfy this condition.
The Gaussian kernels is a standard example satisfying this condition.

\vspace{5pt}
\noindent
\textbf{Kernel Condition:}
The kernel $\kappa: \Theta \times \Theta \to \R$ is integrally strictly positive definite.
The kernel $\kappa$ admits a factorization $\kappa(\theta, \theta') = \prod_{i=1}^{d} \kappa_i(\theta_{(i)}, \theta_{(i)}')$ for some univariate kernels $\kappa_1, \dots, \kappa_d$, where $\theta_{(i)}$ denotes the $i$-th coordinate of $\theta$.
For each $i = 1, \dots, d$, the kernel $\kappa_i$ and the mixed partial derivative $(\partial / \partial \theta_{(i)}) (\partial / \partial \theta'_{(i)}) \kappa_i(\theta_{(i)}, \theta'_{(i)})$, assumed to exist, are both uniformly bounded by $1$.

\subsection{Deterministic Rate in Uniform Case} \label{sec:convergence_uniform}

Our first analysis focuses on the case where the proposal density $u$ is uniform, as discussed in \Cref{sec:proposal}.
Notably, the uniformity of $u$ yields a deterministic convergence rate, irrespective of whether the proposal sequence is stochastic or deterministic.
A central quantity in this analysis is the \emph{star discrepancy} $\operatorname{D}^*$ of a set of $K$ points $\{ \theta_n \}_{n=1}^{K}$ in $\Theta$, defined as
\begin{align}
	\operatorname{D}^*\left( \{ \theta_n \}_{n=1}^{K} \right) := \sup_{ B \in \mathcal{B} } \left| \frac{\# \{ \theta_n \in B \} }{K} - \frac{\operatorname{Vol}(B)}{\operatorname{Vol}(\Theta)} \right| ,
\end{align}
where $\mathcal{B}$ denotes the set of all sub-boxes in $\Theta$ anchored at the lower corner of $\Theta$.
Here, $\# \{ \theta_n \in B \}$ denotes the number of points contained in the box $B$, and $\operatorname{Vol}(\cdot)$ denotes the volume of a given region.
Although the star discrepancy is typically defined for the unit cube, we adopt the above definition extended to the domain $\Theta$.

Let $\partial_i f(\theta)$ denote the partial derivative of a function $f(\theta)$ with respect to the i-th coordinate.
Let $\partial_{1:d} f(\theta)$ denote the first mixed partial derivative of $f(\theta)$, that is, $\partial_{1:d} f(\theta) = \partial_1 \cdots \partial_{d} f(\theta)$.
Hereafter, let $\delta_N$ denote the empirical distribution associated with the $N$ weighted samples $\{ \theta_n^*, w_n^* \}_{n=1}^{N}$ obtained from BIS.
We are now in a position to state our main result.
The first term in the bound below is derived by invoking the Koksma-Hlawka inequality, a classical result in QMC theory governing numerical integration error.
The second term arises from the point-selection procedure of BIS, which decays significantly faster than the canonical Monte Carlo integration rate $\mathcal{O}(N^{-1/2})$.

\begin{theorem} \label{thm:convergence}
	Assume that the target density $p$ admits a uniformly continuous mixed partial derivative $\partial_{1:d} p(\theta)$, and that the proposal density $u$ is uniform.
	For any proposal sequence $\{ \theta_n \}_{n=1}^{M+N}$ and selection criterion $U_n$, the weighted samples $\{ \theta_n^*, w_n^* \}_{n=1}^{N}$ of Algorithm \ref{alg:tmp} satisfy
	\begin{align}
		d(p, \delta_N) \le C_1 \operatorname{D}^*\left( \{ \theta_n \}_{n=1}^{N + M} \right) + C_2 \frac{M}{N + M} ,
	\end{align}
	where $C_1$ and $C_2$ are constants dependent sorely on $p$ and $u$.
\end{theorem}

\Cref{thm:convergence} immediately yields an explicit convergence rate for BIS.
To derive this, it suffices to employ a proposal sequence $\{ \theta_n \}_{n=1}^{M+N}$ with a known upper bound on the star discrepancy.
The Halton sequence, for example, satisfies a well-established bound \citep[see e.g.][]{Niederreiter1992}.
Consequently, we obtain the following rate under the Halton sequence in \Cref{def:halton}.

\begin{corollary} \label{col:convergence_Halton}
	Under the assumptions of \Cref{thm:convergence}, suppose that the proposal sequence $\{ \theta_n \}_{n=1}^{M+N}$ is the scaled Halton sequence.
	The weighted samples $\{ \theta_n^*, w_n^* \}_{n=1}^{N}$ satisfy
	\begin{align}
		d(p, \delta_N) \le C_1 \frac{\log ( N + M )^d}{N + M} + C_2 \frac{M}{N + M} ,
	\end{align}
	where $C_1$ and $C_2$ are constants dependent solely on $p$ and $u$.
\end{corollary}

We note that i.i.d.~random samples from the uniform density $u$ also satisfy a bound of the star discrepancy that holds almost surely \citep{Morokoff1994}.
However, we prioritize the deterministic sequence in this work due to its faster convergence rate.
Nevertheless, it is worth highlighting that i.i.d.~random samples still ensure the almost sure convergence.
The established convergence rate implies the consistency of the Monte Carlo estimator by BIS.

\begin{corollary} \label{col:weak_convergence}
	Under the assumptions in \Cref{col:convergence_Halton}, the empirical distribution $\delta_N$ of the weighted samples $\{ \theta_n^*, w_n^* \}_{n=1}^{N}$ converges weakly to the target density $p$ as $N \to \infty$.
	Consequently, the resulting Monte Carlo estimator of $\E_{\theta \sim p}[ f(\theta) ]$ is consistent for any continuous integrand $f$.
\end{corollary}

\subsection{Concentration Rate in General Case} \label{sec:convergence_general}

We extend our analysis to a general setting where the proposal density $u$ is arbitrary, and the proposal sequence $\{ \theta_i \}_{i=1}^{\infty}$ consists of i.i.d.~random samples drawn from $u$.
Given the stochastic nature of the proposal sequence $\{ \theta_i \}_{i=1}^{\infty}$, we establish a concentration bound for the MMD.

\begin{theorem} \label{thm:concentration}
	Assume that the proposal density $u$ is uniformly bounded on $\Theta$.
	Let the proposal sequence $\{ \theta_n \}_{n=1}^{M+N}$ be i.i.d.~samples from $u$. 
	For any selection criterion $U_n$, the weighted samples $\{ \theta_n^*, w_n^* \}_{n=1}^{N}$ of Algorithm \ref{alg:tmp} satisfy that, for any $0 < \epsilon < 1$,
	\begin{align}
		\mathbb{P}\left( d(p, \delta_N) \le \frac{C_1 + C_2 \sqrt{\log(2 / \epsilon)}}{\sqrt{N + M}} + C_3 \frac{M}{N + M} \right) \ge 1 - \epsilon 
	\end{align}
	where $\mathbb{P}$ denotes the probability with respect to the random samples $\theta_1, \dots, \theta_{M+N} \overset{i.i.d.}{\sim} u$, and $C_1$, $C_2$, and $C_3$ denote constants dependent sorely on $p$ and $u$.
\end{theorem}

This theorem leads to an immediate corollary regarding the almost-sure weak convergence of the weighted samples, and hence the consistency of the Monte Carlo estimator by BIS.

\begin{corollary} \label{col:weak_convergence_as}
	Under the assumptions in \Cref{thm:concentration}, the empirical distribution $\delta_N$ of the weighted samples $\{ \theta_n^*, w_n^* \}_{n=1}^{N}$ converges weakly to the target density $p$ almost surely as $N \to \infty$.
	Consequently, the resulting Monte Carlo estimator of $\E_{\theta \sim p}[ f(\theta) ]$ is consistent for any continuous integrand $f$.
\end{corollary}

Although this general setting accommodates arbitrary proposal densities, the resulting convergence rate is stochastic.
Furthermore, the deterministic rate established in \Cref{col:convergence_Halton} is notably faster in lower dimensions $d$.
This observation underscores the advantage of using the Halton sequence in \Cref{def:halton} as the proposal sequence.

\section{Adapting Bayesian optimization for Sampling} \label{sec:BO}

We now present a versatile, practical instance of the point-selection criteria, which renders BIS effective across a range of black-box densities in low dimensions.
This criterion leverages GP-surrogate modeling, analogous to BO.
In machine learning, BO has established itself as a powerful approach to the optimization of black-box functions.
BIS enables the application of this powerful bandit approach to sampling.

As mentioned in \Cref{sec:theory}, BIS allows for great flexibility in designing bespoke point-selection criteria.
The development of further variants beyond BO is reserved for future work, as the rigorous derivation of each new criterion necessitates a dedicated study, and providing an exhaustive catalog of point-selection criteria lies beyond the scope of this paper.

\subsection{Point-Selection Criterion} \label{sec:af}

The point-selection criterion $U_n$ focused in this paper leverages a GP surrogate model for the unnormalized density $q(\theta)$, the proportional term of the target density $p(\theta)$.
The unnormalized density $q(\theta)$ is strictly positive, while GP functions take values in the real line $\mathbb{R}$.
To fit a GP to the density, a link function to map the density to the real line is required.
The log-transformation, $\log q(\theta)$, is a standard approach.
Here, we generalize this by employing a convex non-negative function $\phi: \mathbb{R} \to [0, \infty)$ whose inverse $\phi^{-1}$ is well-defined on $(0, \infty)$.

\begin{figure}[t]
	\includegraphics[width=0.325\textwidth]{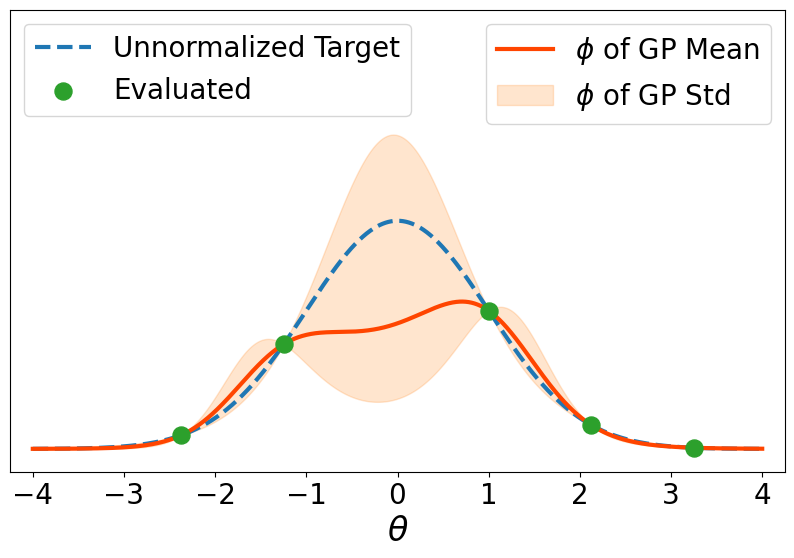}
	\hfill
	\includegraphics[width=0.325\textwidth]{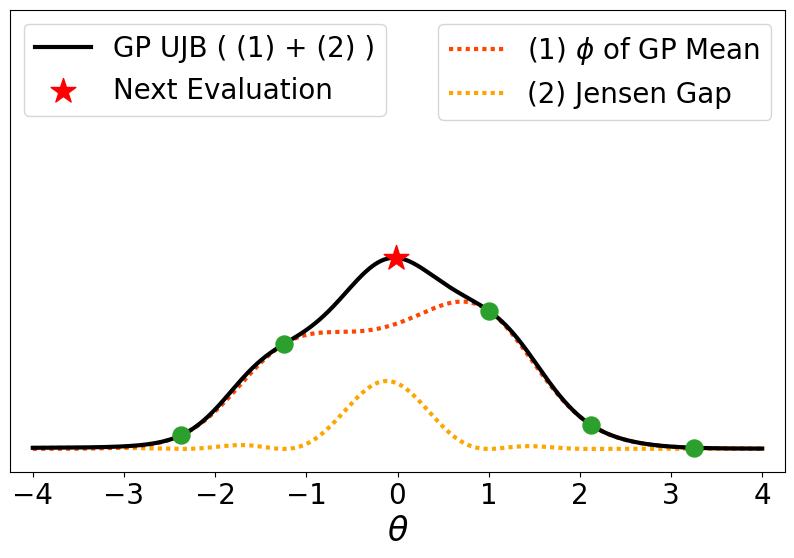}
	\hfill
	\includegraphics[width=0.325\textwidth]{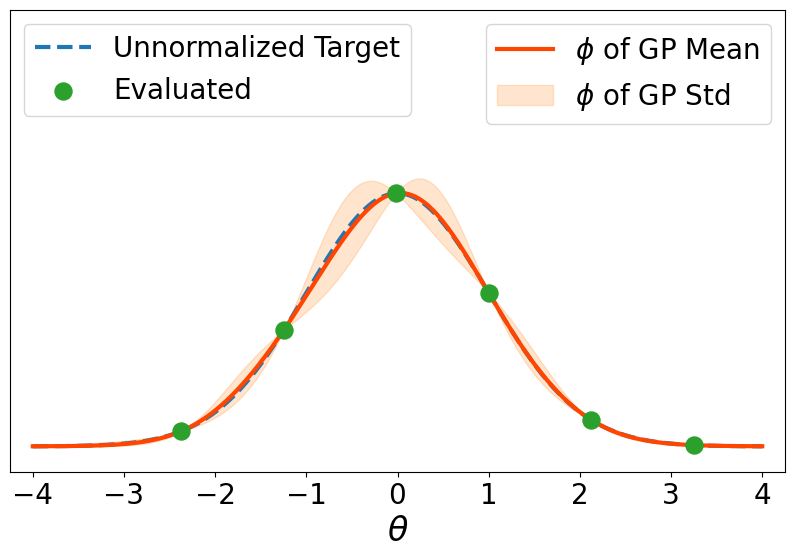}
	\hfill
	\caption{Illustration of GP-UJB using the unnormalized standard normal target $q$ and the exponential map $\phi(\cdot) = \exp(\cdot)$. Left:~The transformed GP posterior $\phi( f(\theta) )$ conditional on five evaluations of $q$. Center:~The resulting GP-UJB, the selected location for the next evaluation, and the decomposition into the exploitation and exploration terms. The exploitation term drives selection towards the mode of the current estimate, while the exploration term drives selection towards where $\phi( f(\theta) )$ is most uncertain. Right:~The transformed GP posterior after the new evaluation.} \label{fig:fig_51_1}
\end{figure}

Under this formulation, we model the transformed unnormalized density $g(\theta) = \phi^{-1}( q(\theta) )$ via a GP, where the fitted GP can be transformed back to the non-negative range by $\phi(\cdot)$.
For example, the log-transformation $\phi^{-1}(\cdot) = \log(\cdot)$ corresponds to the choice $\phi(\cdot) = \exp(\cdot)$. 
Now, we introduce our criterion, GP-UJB, equipped with appealing properties for multi-armed bandits.

\begin{definition}[GP-UJB] \label{def:gp_ujb}
Given a GP posterior $\mathcal{GP}(m_n, k_n)$ conditional on a set of $n$ evaluations $\{ \theta_i, \phi^{-1}( q(\theta_i) ) \}_{i=1}^{n}$, GP-UJB is a function $U_n: \Theta \to \R$ defined pointwise by 
\begin{align}
	U_n(\theta) := \E_{f \sim \mathcal{GP}(m_n, k_n)} \left[ \phi\big( f(\theta) \big) \right] , \label{eq:gp_ujb}
\end{align}
where $f(\theta)$ denotes the GP sample path $f \sim \mathcal{GP}(m_n, k_n)$ evaluated at $\theta$.
\end{definition}

Driving both exploitation of past information and exploration of unexplored domains is a pivotal factor in most bandit problems.
Notably, GP-UJB admits an interpretable decomposition into two terms that drive, respectively, exploitation and exploration.

\begin{remark}[Exploitation-Exploration Decomposition]
GP-UJB can be decomposed as
\begin{align}
	U_n(\theta) = \underbrace{ \phi\big( m_n(\theta) \big) }_{\text{exploitation}} + \underbrace{ \E_{f \sim \mathcal{GP}(m_n, k_n)} \left[ \phi\big( f(\theta) \big) \right] - \phi\big( m_n(\theta) \big) }_{\text{exploration}} .
\end{align}
The first term is the best predictor based on current information, i.e., the least-error plug-in approximation of $q(\theta)$ attainable by the GP.
The second term is known as the \emph{Jensen's gap} and non-negative due to Jensen's inequality.
It measures the amount of uncertainty in $\phi(f(\theta))$ at each location $\theta$.
This term is zero at the training points $\{ \theta_i \}_{i=1}^{n}$, where the GP posterior has no uncertainty. 
\Cref{fig:fig_51_1} illustrates the decomposition of GP-UJB computed for a standard normal density in $\R$.
\end{remark}

\subsection{Approximation Error of GP-UJB} \label{sec:app}

We establish that the L2 error between GP-UJB and the unnormalized target $q(\theta)$ is controlled by the average-case interpolation error of the GP regression.
In particular, if an additional regularity condition is met, the former is specifically smaller than the later.
This also supports the use of GP-UJB as it converges faster than the concentration of the GP regression itself.
In the following result, $\| \cdot \|_{L_2(\Theta)}$ denotes the L2 norm of functions on $\Theta$ under the Lebesgue measure.

\begin{proposition} \label{prop:gp_ujb_bound}
Let $g(\theta) := \phi^{-1}(q(\theta))$, that is, $q(\theta) = \phi(g(\theta))$.
Suppose that $\phi$ is differentiable in $(0, \infty)$ with the derivative $\phi'$ satisfying $| \phi'(a) | \le \exp(a)$ for all $a \in (0, \infty)$.
Let $U_n(\theta)$ be GP-UJB with the GP posterior $\mathcal{GP}(m_n, k_n)$.
It holds for some constant $C > 0$ that
\begin{align}
	\| U_n - q \|_{L_2(\Theta)}^2 \le C \E_{f \sim \mathcal{GP}(m_n, k_n)}\left[ \| f - g \|_{L^2(\Theta)}^2 \right] .
\end{align}
In particular, if $g(\theta) < -1/2$ and $m_n(\theta) + k_n(\theta, \theta) < -1/2$, we have $C < 1$.
\end{proposition}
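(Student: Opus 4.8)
The plan is to reduce the claim to a pointwise-in-$\theta$ comparison of two one-dimensional Gaussian expectations and then integrate over $\Theta$. Fix $\theta$ and abbreviate $X := f(\theta) \sim \mathcal{N}(\mu, \sigma^2)$ with $\mu := m_n(\theta)$, $\sigma^2 := k_n(\theta,\theta)$, and $c := g(\theta)$, so that $q(\theta) = \phi(c)$ and $U_n(\theta) = \E[\phi(X)]$. Since $\phi(c)$ is deterministic, $U_n(\theta) - q(\theta) = \E[\phi(X) - \phi(c)]$, and Jensen's inequality for $t \mapsto t^2$ gives $(U_n(\theta) - q(\theta))^2 \le \E[(\phi(X) - \phi(c))^2]$. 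The derivative hypothesis, integrated via the fundamental theorem of calculus, yields $|\phi(a) - \phi(b)| = \big|\int_b^a \phi'(t)\,dt\big| \le |\exp(a) - \exp(b)|$ (the running example $\phi = \exp$ satisfies $|\phi'| \le \exp$ on all of $\R$), whence pointwise $(U_n(\theta) - q(\theta))^2 \le \E[(\exp(X) - \exp(c))^2]$. It therefore suffices to establish the scalar inequality $\E[(\exp(X) - \exp(c))^2] \le C\,\E[(X - c)^2]$, where $\E[(X-c)^2] = (\mu - c)^2 + \sigma^2$; integrating it over $\theta$ and exchanging expectation and integral by Tonelli delivers the stated $L_2$ bound.

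For the existence of a finite $C$, I would use the closed-form Gaussian moments $\E[(\exp(X) - \exp(c))^2] = \exp(2\mu + 2\sigma^2) - 2\exp(c + \mu + \sigma^2/2) + \exp(2c)$, so the ratio $R(\theta) := \E[(\exp(X)-\exp(c))^2]/((\mu-c)^2 + \sigma^2)$ is an explicit continuous function of $(\mu,\sigma^2,c)$. Because $\Theta$ is compact and $m_n,k_n,g$ are continuous, these parameters range over a compact set, and a short Taylor expansion shows that at the only degeneracy $(\mu,\sigma^2)\to(c,0)$ both numerator and denominator vanish while $R$ extends continuously to $\exp(2c)$. Hence $R$ is bounded and $C := \sup_\theta R(\theta) < \infty$ furnishes the general inequality.

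The substantive part is the refinement $C<1$ under $c<-1/2$ and $\mu+\sigma^2<-1/2$, for which I would show $R(\theta)<1$ uniformly. Writing $N := \E[(\exp(X)-\exp(c))^2]$ and $D := (\mu-c)^2+\sigma^2$, I split on the size of $D$. If $D\ge 1$, the cross term is nonpositive, so $N \le \exp(2(\mu+\sigma^2)) + \exp(2c) < 2/e < 1 \le D$. If $D<1$, I use the variance decomposition $N = \operatorname{Var}(\exp(X)) + (\E[\exp(X)]-\exp(c))^2 = \exp(2a)(\exp(\sigma^2)-1) + (\exp(a)-\exp(c))^2$ with $a := \mu+\sigma^2/2$. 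The hypotheses force $a<-1/2$ and $2a<-1-\sigma^2$, so the first term is at most $e^{-1}(1-\exp(-\sigma^2)) \le e^{-1}\sigma^2$; and since $\max(a,c)<-1/2$, a mean-value estimate gives $(\exp(a)-\exp(c))^2 \le e^{-1}(a-c)^2$. Writing $v := \mu-c$ and combining, $N < e^{-1}(\sigma^2 + (v+\sigma^2/2)^2)$, while $D<1$ forces $v<1$ and $\sigma^2<1$; an elementary rearrangement reduces $N<D$ to $(1-e^{-1})(v^2+\sigma^2) > e^{-1}\sigma^2(v+\sigma^2/4)$, which holds because $v+\sigma^2/4 < 5/4$ and $1-e^{-1} > \tfrac{5}{4e}$. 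Taking the supremum over the compact parameter set, where these strict inequalities are uniform, yields $C<1$.

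The main obstacle is precisely this last step: the pointwise contraction $\exp(2\max(X,c))<1$ needed to compare $(\exp(X)-\exp(c))^2$ with $(X-c)^2$ fails on the event $\{X>0\}$, so a single global mean-value bound is lost there. The device that rescues the argument is to isolate the small-denominator regime $D<1$ (where $v$ and $\sigma^2$ are automatically controlled) and to spend the decay $\mu+\sigma^2<-1/2$ through the variance term together with the sharpened constant $e^{-1}$ on the negative region, rather than through one global Lipschitz constant; in the complementary regime $D\ge 1$ the crude bound $N<2/e$ already closes the gap, so no tension remains.
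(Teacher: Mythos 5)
Your proof is correct, but it departs from the paper's argument at the decisive estimation step, in an instructive way. After the same Jensen reduction, the paper bounds $|\phi(f(\theta))-\phi(g(\theta))| \le \max\left(|\phi'(f(\theta))|, |\phi'(g(\theta))|\right)|f(\theta)-g(\theta)|$ via the mean value inequality together with monotonicity of $\phi'$ (this is exactly where convexity of $\phi$ is used), and then applies Cauchy--Schwarz to split off $\E_{f}\left[|f(\theta)-g(\theta)|^2\right]$; the Gaussian moment $\E\left[e^{2f(\theta)}\right]=e^{2m_n(\theta)+2k_n(\theta,\theta)}$ then yields the constant $C=\sup_\theta e^{2(m_n+k_n)}+\sup_\theta e^{2g}$, from which $C<1$ is a one-line consequence because each term is below $e^{-1}$ under the stated hypotheses. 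You instead use the fundamental theorem of calculus to obtain $|\phi(a)-\phi(b)|\le|e^a-e^b|$ (which, notably, does not need convexity of $\phi$ at all), compute the exact second moment $\E\left[(e^X-e^c)^2\right]$, and run a two-regime ratio analysis against $D=(\mu-c)^2+\sigma^2$. This buys a genuinely pointwise contraction statement: your estimates in fact give ratio bounds $2/e$ when $D\ge 1$ and $9/(4e)$ when $D<1$, both uniform in $\theta$, so the closing appeal to compactness and continuity of $(m_n,k_n,g)$ is not actually needed for the $C<1$ claim --- a good thing, since the proposition (and the paper's proof) only implicitly assumes these quantities are bounded, not continuous. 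The price is a longer argument and a slightly worse constant ($9/(4e)\approx 0.83$ versus the paper's $2/e\approx 0.74$); compactness and the continuous extension of the ratio at the degeneracy $(\mu,\sigma^2)\to(c,0)$ are still needed for your existence-of-finite-$C$ step in the general (unrestricted) case, where the paper's supremum-based constant is the more economical device. One caveat you share with the paper: both proofs apply the hypothesis $|\phi'(a)|\le\exp(a)$ at arguments that can be negative, although it is only assumed on $(0,\infty)$; you at least flag this explicitly, and it is harmless for the default choice $\phi=\exp$.
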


\begin{remark} \label{re:gp_ujb_bound}
	The last conditions $g(\theta) < -1/2$ and $m_n(\theta) + k_n(\theta, \theta) < -1/2$ are mild when employing the log-transformation $\phi^{-1}(\cdot) = \log(\cdot)$.
	Crucially, the target density $p(\theta)$ is invariant to the scaling of the proportional $q(\theta)$ due to normalization.
	Consequently, $q(\theta)$ can be arbitrarily rescaled by a constant factor without altering the original density $p(\theta)$.
	Without loss of generality, we may therefore assume that $q(\theta)$ is bounded by a constant $B$, s.t., $g(\theta) = \log q(\theta) \le \log B < -1/2$.
	By similar reasoning, a sufficiently small scaling factor ensures that the GP approximation also satisfies that $m_n(\theta) + k_n(\theta, \theta) < -1/2$.
	Furthermore, this latter condition is readily verifiable numerically.
\end{remark}

The proof of \Cref{prop:gp_ujb_bound} is provided in \Cref{apx:proof}.
\Cref{prop:gp_ujb_bound} assures that, if the GP posterior converges, the GP-UJB also converges. 
Convergence analysis of the average-case GP regression error is well established; see, e.g., \citep{Kanagawa2018}.

\subsection{Hyperparameters of GP-UJB} \label{sec:setting}

Finally, we discuss the specification of the hyperparameters for GP-UJB: the convex non-negative function $\phi$ and the hyperparameters of the GP prior.

First, the choice of $\phi$ governs the trade-off between exploitation and exploration in GP-UJB. 
Computationally, it is advantageous to select $\phi$ that yields a closed-form expression of GP-UJB.
\Cref{tab:cf} summarizes examples of such choices. 
Modeling the log unnormalized density $\log q(\theta)$ via GPs is a well-established approach in the literature \citep[e.g.,][]{Osborne2012,Jarvenpaa2021}. 
Following these precedents, we adopt the log transformation $\phi^{-1}(\cdot) = \log(\cdot)$ for the inverse, meaning that $\phi(\cdot) = \exp(\cdot)$, as the default setting. 
We explore alternative choices in \Cref{apx:choice_phi}.

\begin{table}[h] 
\caption{Closed-form expressions of GP-UJB for selected functions $\phi$. Here, $\varphi$ and $\Phi$ denote the probability density function and the cumulative distribution function of the standard normal distribution, respectively, and we define $\sigma_n(\theta) := \sqrt{k_n(\theta, \theta)}$ for better presentation.} \label{tab:cf}
\centering
\begin{tabular}{ c c c }
	\hline
	Function $\phi(x)$ & Inverse $\phi^{-1}(x)$ & GP-UJB $U_n(\theta)$ \\
	\hline
	$\exp(x)$ & $\log(x)$ & $\exp\left( m_n(\theta) + 0.5 \sigma_n^2(\theta) \right)$ \\ 
	$\max(0, x)$ & $x$ & $m_n(\theta) \Phi\left( m_n(\theta) / \sigma_n(\theta) \right) + \sigma_n(\theta) \varphi\left( m_n(\theta) / \sigma_n(\theta) \right) $ \\
	$x^2$ & $\sqrt{x}$ & $m_n^2(\theta) + \sigma_n^2(\theta)$ \\
	\hline
\end{tabular}
\end{table}

Next, the mean and covariance functions of the GP prior typically depend on a set of hyperparameters $\psi$, such as the length scale and variance of a Gaussian kernel. 
Properly tuning $\psi$ is critical for the performance of GP regression. 
In the context of BO, updating $\psi$ via maximum likelihood estimation at each iteration is standard practice to maximize the performance.
Accordingly, we update $\psi$ for GP-UJB by maximizing the log-likelihood of the GP posterior at each step.

\begin{remark}[Initial Points]
	In sequential design settings, GP surrogate models may suffer from instability during the initial iterations due to overfitting to scarce data. 
	To mitigate this, it is common to select the first few points at random rather than optimization. 
	Algorithm \ref{alg:tmp} readily accommodates this strategy: for a fixed integer $N_0 > 0$, the initial $N_0$ samples $\{ \theta_n^* \}_{n=1}^{N_0}$ can be selected from the proposal sequence, such as the scaled Halton sequence, without optimization.
\end{remark}

\section{Empirical Assessment} \label{sec:experiment}

This section examines the empirical performance of BIS across four distinct experiments, ranging from numerical validation to practical application.
First, we begin with three benchmarks to validate sample quality and convergence of BIS.
Next, we consider a weather forecast model from \cite{Lorenz1995}, where Bayesian inference is performed via synthetic likelihoods based on simulations \citep{Hakkarainen2012}.
We then apply BIS to exact Bayesian inference for g-and-k models \citep{Prangle2020}, a setting where density evaluation requires numerical optimization at each data point.
Finally, we illustrate the practical utility of our approach through a real-world application, modeling precipitation data \citep{Kaufman2008} via an MRF.
Source code for the experiments is available at \url{https://github.com/takuomatsubara/Bandit-Importance-Sampling}.

\subsection{Benchmark Densities} \label{sec:benchmark}

We evaluate BIS on three bivariate benchmark densities, each designed to pose a distinct challenge.
The first is a simple unimodal Gaussian density.
The second is a bimodal density, representing a scenario with multiple modes that requires efficient domain exploration.
The third is a banana-shaped density, illustrating a case with complex tail-probability geometries.
All the three densities are defined in the following form proposed in \cite{Jarvenpaa2021}:
\begin{align}
	p(\theta) \propto \exp\bigg( - \frac{1}{2}
		\begin{bmatrix}
			T_1(\theta) \\
			T_2(\theta)
		\end{bmatrix}^\mathrm{T}
		\begin{bmatrix}
			1 & \rho \\
			\rho & 1
		\end{bmatrix}
		\begin{bmatrix}
			T_1(\theta) \\
			T_2(\theta)
		\end{bmatrix}
	\bigg) ,
\end{align}
where the definitions of $T_1$, $T_2$, and $\rho$ for each density are provided in \Cref{apx:experiment_1}. 
The domains are defined as $\Theta = [-16, 16]^2$, $\Theta = [-6, 6]^2$, and $\Theta = [-6, 6] \times [-20, 2]$, respectively.
The aim of this experiment is to demonstrate the performance of BIS under each benchmark scenario.

\begin{figure}[t]
	\includegraphics[width=0.31\textwidth]{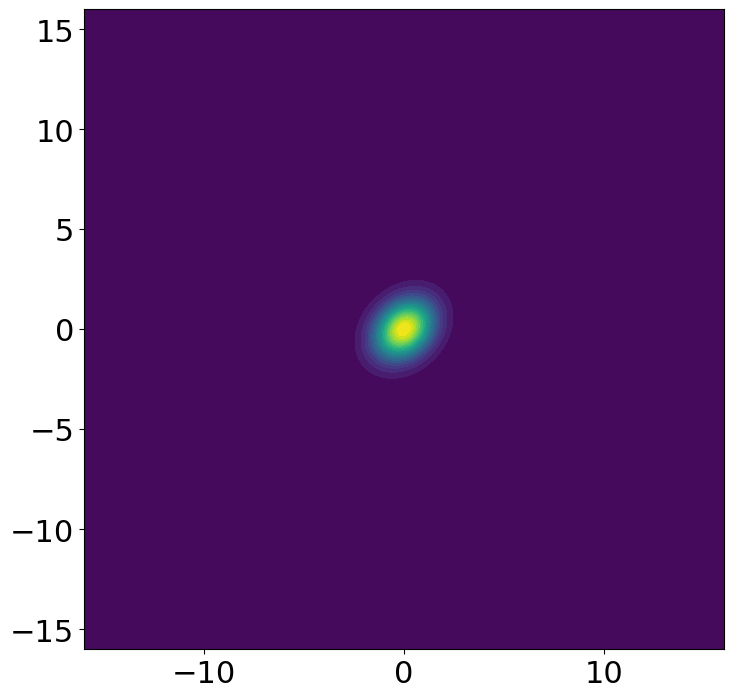}
	\hfill
	\includegraphics[width=0.3025\textwidth]{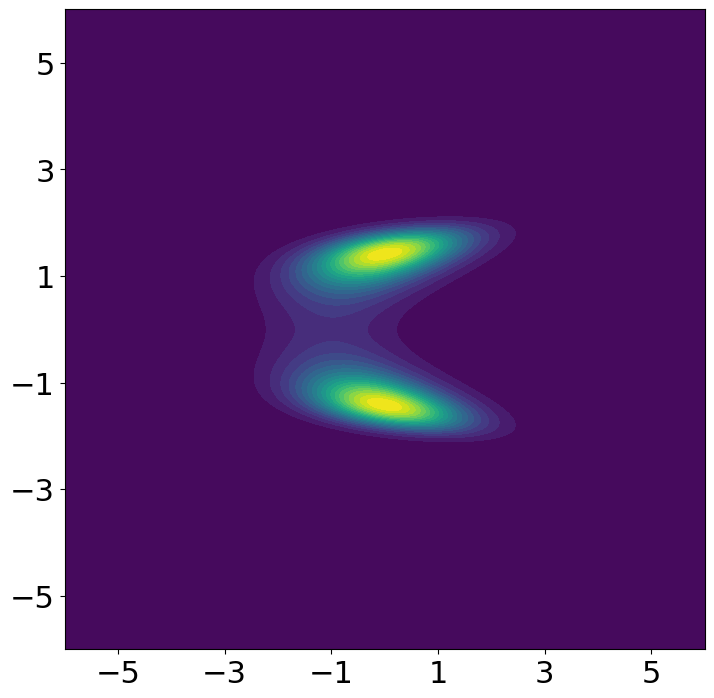}
	\hfill
	\includegraphics[width=0.31\textwidth]{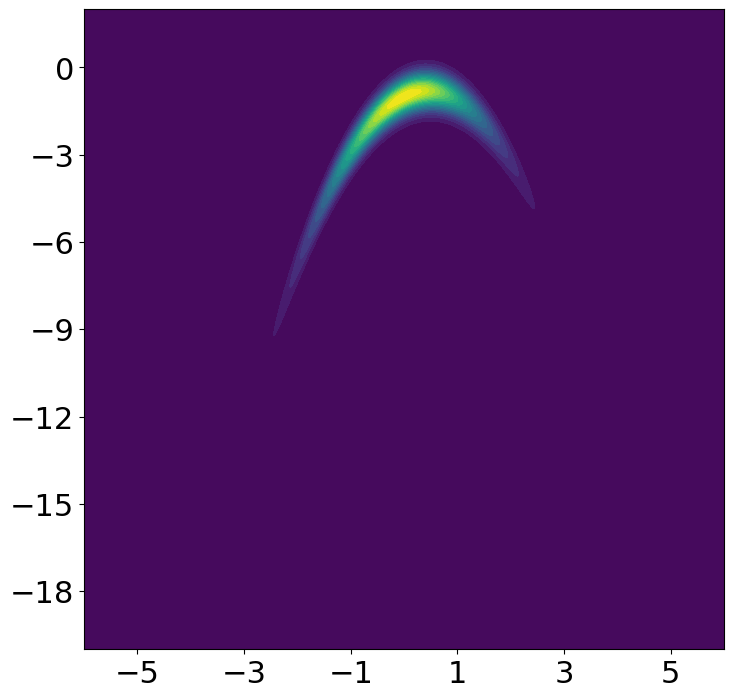}
	\hfill
	
	\subcaptionbox{gaussian}{\includegraphics[width=0.31\textwidth]{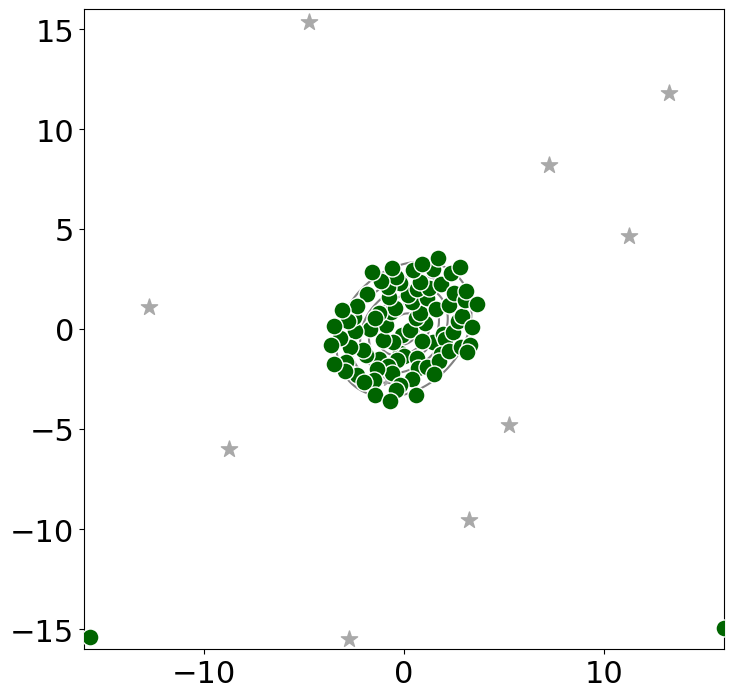}}
	\hfill
	\subcaptionbox{bimodal}{\includegraphics[width=0.3025\textwidth]{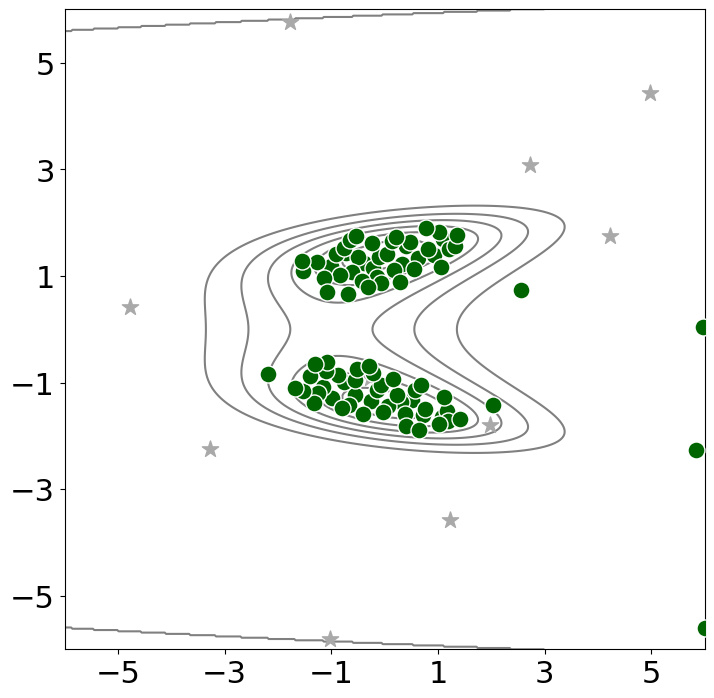}}
	\hfill
	\subcaptionbox{banana}{\includegraphics[width=0.31\textwidth]{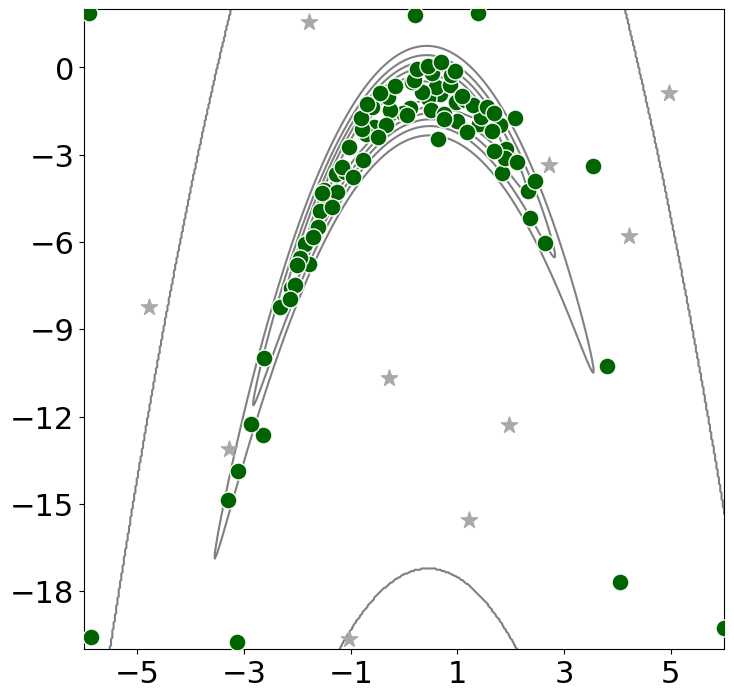}}
	\hfill
	\caption{
		Visualization of 100 samples generated by BIS for each density. The top panels show the contour colourmap of each density. The bottom panels show the samples, distinguishing between the initial 10 points (stars) and the subsequent 90 points selected by BIS (circles). Solid lines indicate density contours, where the values are raised to the power of $1/3$ to accentuate the tail geometry.
	} \label{fig:fig_41_1}
\end{figure}

For each target density, we generated 100 weighted samples using BIS.
The initial 10 points were selected from a Halton sequence to ensure the stability of the GP surrogate model, following the protocol in \cite{Jarvenpaa2021}.
The GP-UJB prior employed a zero mean function and a Gaussian kernel, treating the length scale and variance as tunable hyperparameters.
The candidate pool size was fixed at $M = 2048$, where a sensitivity analysis regarding this parameter is provided in \Cref{apx:pool_size}.
\Cref{fig:fig_41_1} presents a comparison between the target densities and the obtained samples, demonstrating that BIS effectively identifies the high-probability regions.
Notably, the method successfully resolved the complex geometries of the bimodal and banana-shaped densities.

We quantified the approximation quality of the obtained samples using the MMD.
\Cref{fig:fig_41_2} displays the MMD between the target density and the BIS samples.
For comparison, we evaluated a standard importance sampling baseline using the scaled Halton sequence with self-normalized weights.
\Cref{apx:experiment_1} investigated the number of samples required for the standard importance sampling baseline to match the approximation error of 100 BIS samples.
On average, BIS reduces the sample size required to achieve a comparable error by nearly 95\%.

\begin{figure}[t]
	\hfill
	\subcaptionbox{gaussian}{\includegraphics[width=0.32\textwidth]{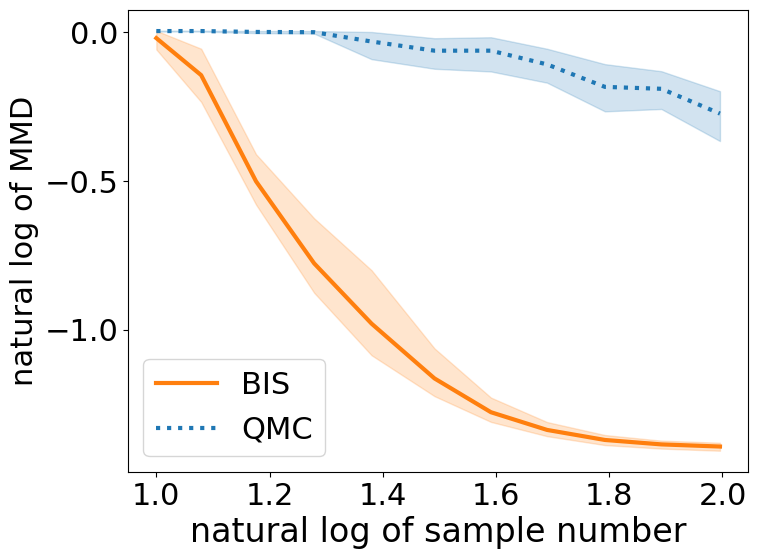}}
	\hfill
	\subcaptionbox{bimodal}{\includegraphics[width=0.32\textwidth]{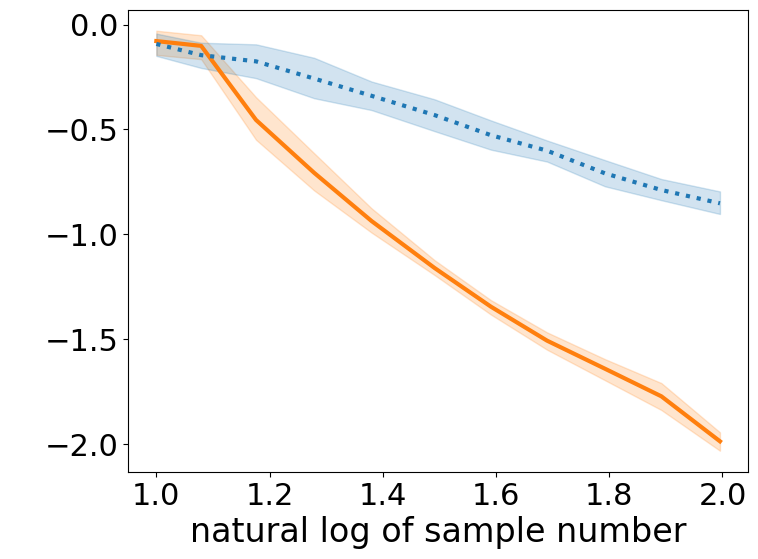}}
	\hfill
	\subcaptionbox{banana-shaped}{\includegraphics[width=0.32\textwidth]{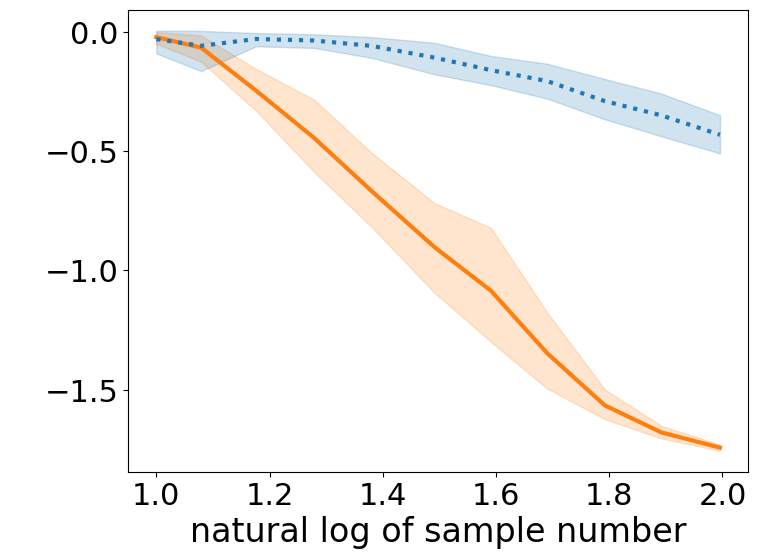}}
	\hfill
	\hfill
	\caption{
		Approximation error for BIS (solid line) and standard importance sampling (dotted line). Shaded regions indicate 95\% confidence intervals over 10 independent runs.
	} \label{fig:fig_41_2}
\end{figure}

\begin{figure}[t]
	\hfill
	\subcaptionbox{gaussian}{\includegraphics[width=0.32\textwidth]{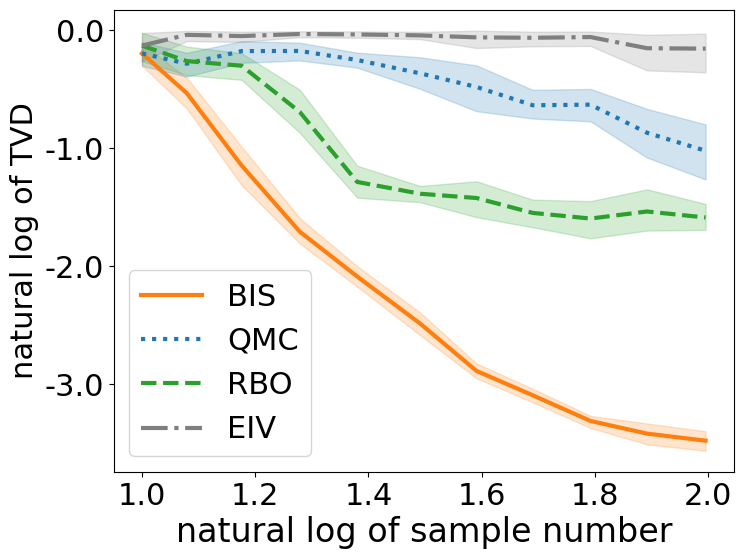}}
	\hfill
	\subcaptionbox{bimodal}{\includegraphics[width=0.32\textwidth]{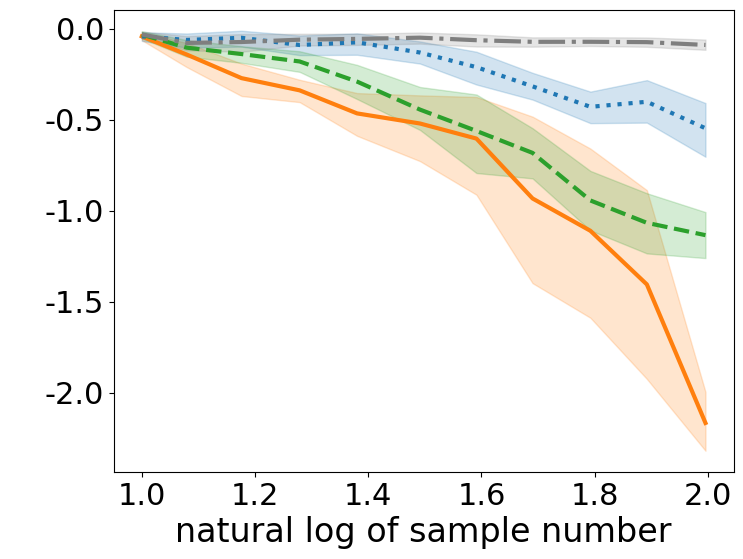}}
	\hfill
	\subcaptionbox{banana-shaped}{\includegraphics[width=0.32\textwidth]{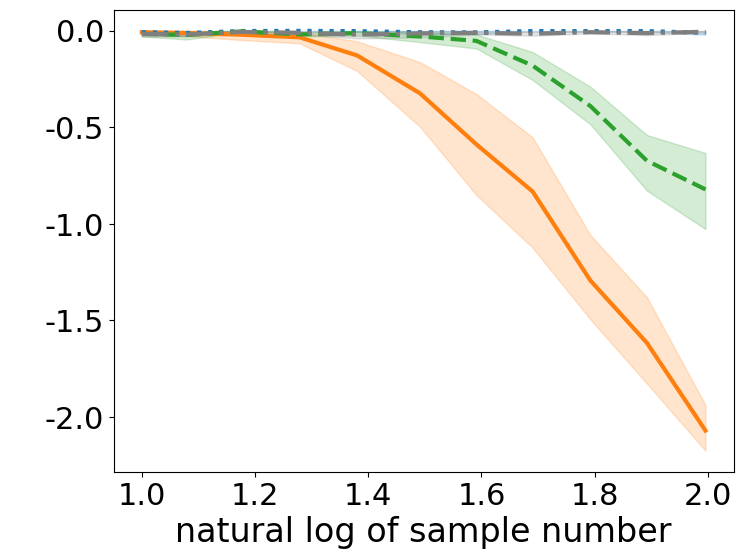}}
	\hfill
	\hfill
	\caption{
		Approximation error of GP-based density estimators trained on input points selected by BIS (solid line), QMC (dotted line), the randomized BO (dashed line), and the EIV design (dash-dotted line).
		Shaded regions indicate 95\% confidence intervals over 10 independent runs.
	} \label{fig:fig_41_5}
\end{figure}

BIS does \emph{not} require the GP surrogate to converge to the target density, where a reasonable approximation suffices for guiding sample selection.
Nonetheless, assessing the GP surrogate's fidelity provides valuable insight.
To this end, we evaluated the total variation distance (TVD) between the target density and the density estimator derived by plugging the GP mean $m_n(\theta)$ in the exponential map.
\Cref{fig:fig_41_5} illustrates the decay of the approximation error. 
We compared the estimator trained on the BIS samples against the estimator trained on three alternative designs.
The first is standard QMC samples.
The second is the expected integrated variance (EIV) design of \cite{Jarvenpaa2021}, which modifies \cite{Sinsbeck2017} for better numerical stability.
The last is the randomized BO of \cite{Kim2024}.
The distinction lies solely in the selection of input points; the GP hyperparameters were identical across all methods.
As demonstrated in \Cref{fig:fig_41_5}, the GP surrogate trained on BIS samples consistently yields the lowest TVD across all cases.

\subsection{Lorenz Weather Model} \label{sec:weather}

\cite{Wilks2005} proposed a stochastic modification of the Lorenz model for weather forecast \citep{Lorenz1995}.
The formulation distinguishes between `slow' observable and `fast' unobservable variables, where the influence of the latter is modeled as a stochastic noise term.
The system consists of 40 observable variables $( x_1(t), \dots, x_{40}(t) ) \in \R^{40}$, whose time evolution is described by 
\begin{align}
	\frac{d}{d t} x_k(t) = - x_{k-1}(t) \big( x_{k-2}(t) - x_{k+1}(t) \big) - x_{k}(t) + 10 - \theta_1 - \theta_2 x_k(t) + \eta_k(t) \label{eq:Lorenz}
\end{align}
where $\theta := (\theta_1, \theta_2) \in \R^2$ represents the system parameter, and $\eta_k(t)$ denotes the stochastic noise defined subsequently.
Here, the indices $k$ follow cyclic boundary conditions, meaning that $x_{0} = x_{40}$, $x_{-1} = x_{39}$, and $x_{41} = x_{1}$.
The model is defined over the time interval $[0, 4]$.

We assume the initial state of the 40 variables at $t = 0$ is known, followed by daily observations over a 20-day period.
In this model, one day corresponds to $0.2$ time units; thus, we observe a sample path at 20 equidistant time points within the interval $[0, 4]$.
Numerical integration of the differential equation \eqref{eq:Lorenz} is performed using the fourth-order Runge-Kutta method with a time step of $\delta t = 0.025$, following \cite{Thomas2022}.
Adopting the formulation of \cite{Wilks2005}, the stochastic noise $\eta_k(t)$ is modeled as a first-order autoregressive process:
\begin{align}
	\eta_k(t + \delta t) = \phi \eta_k(t) + \sqrt{1 - \phi^2} \epsilon_k(t),
\end{align}
where the autoregressive coefficient is fixed at $\phi = 0.4$, and $\epsilon_k(t)$ represents i.i.d.~standard normal noise.
The initial noise term $\eta_k(0)$ is set to $\sqrt{1 - \phi^2} \epsilon_k(0)$.
In our experiments, the initial states of the variables were drawn from a standard normal distribution, and the observed sample path was simulated using the ground-truth parameters $(\theta_1, \theta_2) = (2.0, 0.1)$.

\begin{figure}[t]
	\hfill
	\subcaptionbox{exact posterior}{\includegraphics[width=0.3\textwidth]{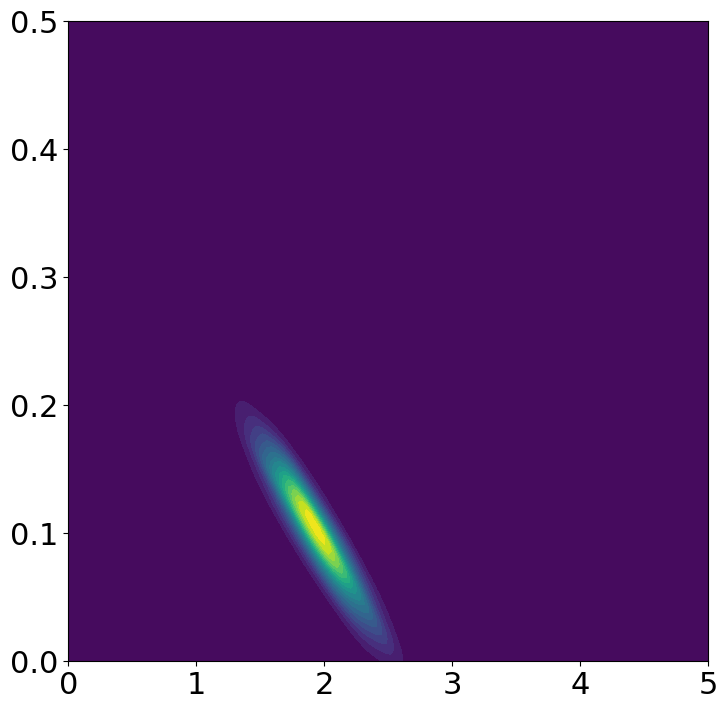}}
	\hfill
	\subcaptionbox{BIS samples}{\includegraphics[width=0.3\textwidth]{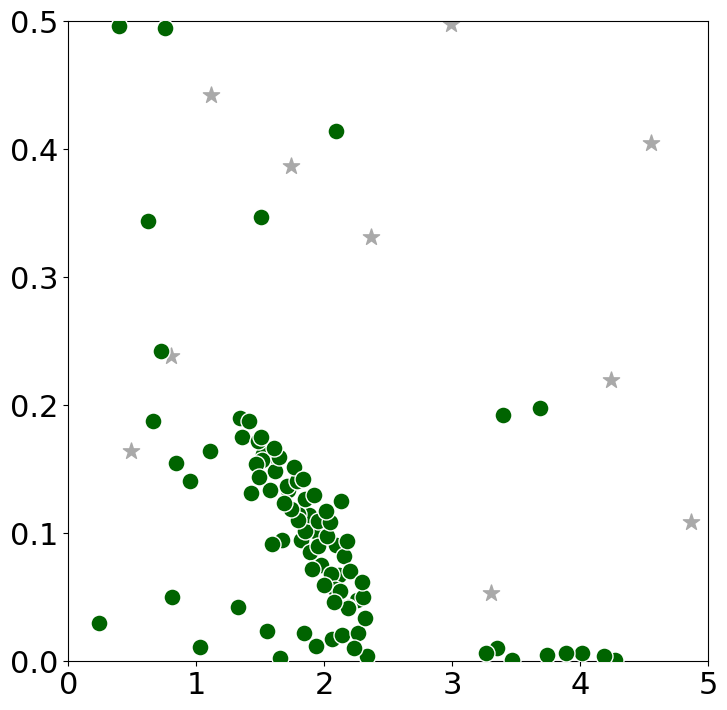}}
	\hfill
	\subcaptionbox{KDE based on BIS samples}{\includegraphics[width=0.3\textwidth]{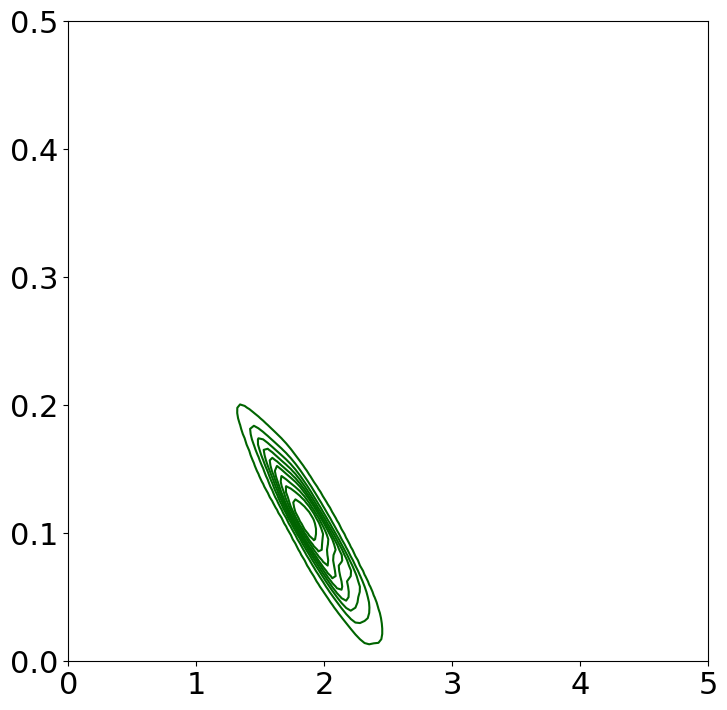}}
	\hfill
	\caption{Visualization of 100 samples generated by BIS for the Lorenz model. The left panel is the contour colourmap of the target posterior. The middle panels show the samples, distinguishing between the initial 10 points (stars) and the subsequent 90 points selected by BIS (circles). The right panel shows the contour colourmap of the KDE of the 100 samples with the associated weights.} \label{fig:fig_43_1}
\end{figure}

\cite{Hakkarainen2012} proposed using a synthetic likelihood to perform Bayesian inference of the Lorenz model; see \cite{Frazier2023} for a general introduction of synthetic likelihoods.
The summary statistics defining the synthetic likelihood are detailed in \Cref{apx:experiment_2}.
We employed a uniform prior over the parameter domain $\Theta = [0.0, 5.0] \times [0.0, 0.5]$, following \cite{Jarvenpaa2021}.
To ensure that the estimation error of the synthetic likelihood is negligible, we computed it using 10,000 simulation paths at each query point $\theta$.
The experimental configuration for BIS follows that of \Cref{sec:benchmark}.
\Cref{fig:fig_43_1} presents the 100 samples generated by BIS, alongside a kernel density estimator (KDE) derived from the weighted samples.
For comparison, we constructed a ground-truth proxy using standard importance sampling with 10,000 points from a scaled Halton sequence and self-normalized weights.
The figure demonstrates that the KDE obtained by BIS accurately reconstructs the target posterior.
This observation is quantitatively supported by the MMD between the target posterior and the 100 BIS samples, which was $0.008$.

\subsection{G-and-K Model} \label{sec:gandk}

The g-and-k model is an expressive parametric family capable of capturing location, scale, skewness, and kurtosis.
Notable applications include modeling of foreign exchange returns, which frequently exhibit heavy tails and asymmetry \citep{Prangle2020}.
The model is parameterized by a vector $\theta = (\theta_1, \theta_2, \theta_3, \theta_4) \in \mathbb{R}^4$.
It is defined explicitly through its quantile function $Q(u \mid \theta)$, as the density is unavailable in closed form.
For $u \in (0, 1)$, the quantile function is given by
\begin{align}
	Q(u \mid \theta) = \theta_1 + \theta_2 z(u) \big( 1 +  c \tanh( 0.5 \theta_3 z(u)) \big) \big( 1 + z(u)^2 \big)^{\theta_4} 
\end{align}
where $z(u)$ denotes the standard normal quantile function, and $c$ is a constant fixed at $0.8$.

\begin{figure}[t]
	\includegraphics[width=\textwidth]{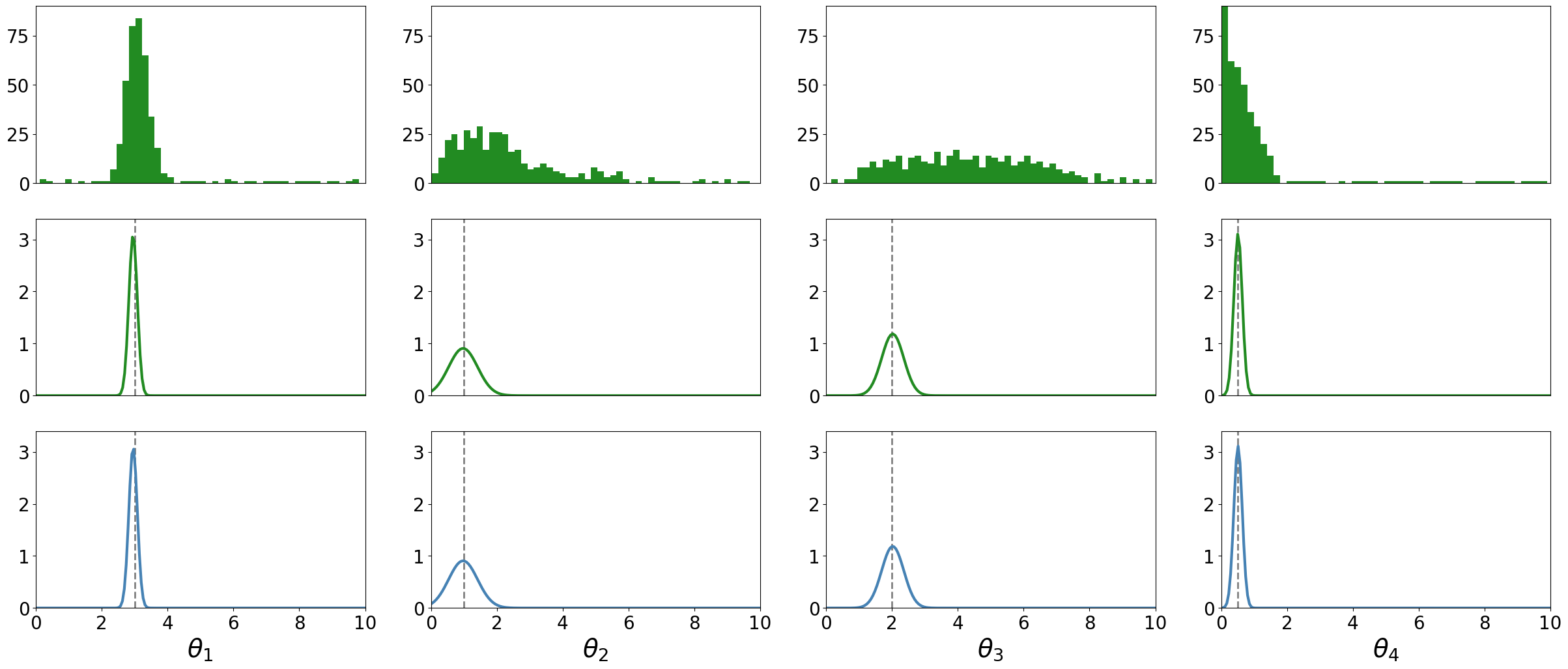}
	\caption{
		Marginal distributions for the g-and-k model parameters $\theta = (\theta_1, \theta_2, \theta_3, \theta_4)$. The top panels display the histograms of 400 samples generated by BIS without weights. The middle panels display the KDEs derived from the 400 samples with the associated importance weights. The bottom panels show ground-truth marginals. Vertical dotted lines indicate the true parameter values.
	} \label{fig:fig_42_1}
\end{figure}

Let $Q'(u \mid \theta)$ denote the derivative of the quantile function with respect to $u$.
\cite{Rayner2002} showed that the density $p(x \mid \theta)$ can be expressed in terms of the quantile function and its inverse, denoted as $u = Q^{-1}(x \mid \theta)$.
This expression is given by
\begin{align}
	p(x \mid \theta) = \frac{1}{ Q'( Q^{-1}(x \mid \theta) \mid \theta) } .
\end{align}
However, evaluating this density is computationally expensive because the inverse $Q^{-1}(x \mid \theta)$ has no closed-form expression and must be computed numerically for each data point $x$.
While approximate Bayesian computation (ABC) is a standard approach to bypass this intractability, it targets an approximation of the posterior, which often results in variance inflation.
In contrast, BIS performs exact Bayesian inference by directly targeting the true posterior density.

We generated a synthetic dataset of 1,000 observations $\{ x_i \}_{i=1}^{1000}$ from the g-and-k model with true parameters $\theta_0 = (3, 1, 2, 0.5)$.
The parameter domain was defined as the hyperrectangle $\Theta = [0, 10]^4$, with a uniform prior placed over this domain.
BIS was configured to produce 400 weighted samples, initialized with 40 points from the scaled Halton sequence to stabilize the GP surrogate.
To encode sufficient tail decay, we assigned a quadratic polynomial mean function with coefficient hyperparameters to the GP prior, mitigating the risk of probability mass escaping outward due to the concentration of measure.
We employed a Gaussian kernel with length-scale and variance hyperparameters for the covariance.
The candidate pool size was set to $M = 320,000$.
\Cref{fig:fig_42_1} displays the marginal histograms for each parameter coordinate derived from the 400 BIS samples, along with the corresponding KDE.
For benchmarking, we compared these results against ground-truth marginals constructed from 320,000 samples of the scaled Halton sequence with self-normalized importance weights.
The figure confirms that the BIS samples accurately reconstruct the target posterior, correctly concentrating probability mass around the true parameter $\theta_0$.

\subsection{US Precipitation Anomalies} \label{sec:mrfm}

Our final experiment demonstrates the practical utility of BIS through a large-scale real-world application involving MRFs.
The dataset, previously analyzed in \cite{Kaufman2008} and \cite{Bolin2024}, consists of annual precipitation anomalies for the year 1962, observed at $N = 7352$ weather stations across the United States (see \Cref{fig:fig_44_1}).
We employ a standard Gaussian MRF \citep{Rue2005} to model this dataset.
Notably, \cite{Lyne2015} highlighted this class of models as a potential failure case for their posterior sampling approach relying on stochastic approximation of the computationally expensive likelihood.
This challenge motivates the use of BIS, which evaluates the exact likelihood while minimizing the computational burden through efficient evaluation.

\begin{figure}[t]
	\hfill
	\includegraphics[width=0.5\textwidth]{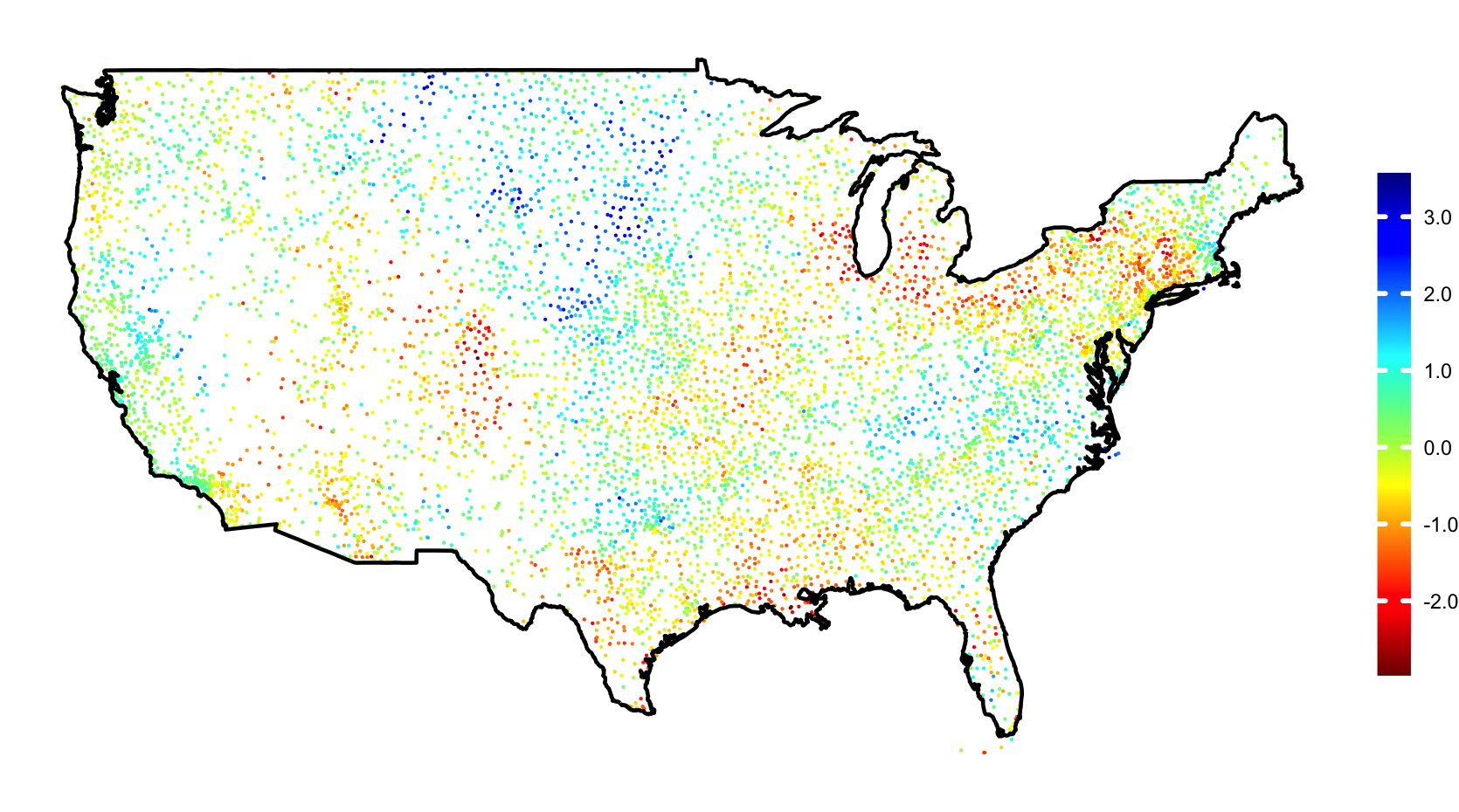}
	\hfill
	\hfill
	\caption{
		Visualization of the United States annual precipitation anomalies in 1962.
	} \label{fig:fig_44_1}
\end{figure}

Let $\mathbf{y} = (y_1, \dots, y_N) \in \mathbb{R}^N$ denote the observations.
Let $\mathbf{X}$ be a latent Gaussian MRF defined on a fixed triangulation of the continental United States.
The covariance structure of the MRF is governed by the Mat\'{e}rn kernel with range and variance parameters $\rho$ and $\sigma$, respectively.
The smoothness parameter of the Mat\'{e}rn kernel is fixed at $2.0$ to ensure that the latent field is at least once differentiable.
Let $\mathbf{Q}(\rho, \sigma)$ be the precision matrix of the stochastic partial differential equation (SPDE) associated with the Mat\'{e}rn kernel.
See, e.g., \cite{Lindgren2011} for the SPDE formulation of the Mat\'{e}rn kernel.
Let $\mathbf{A}$ be the projection matrix mapping the latent field $\mathbf{X}$ from the triangulation to the observation locations.
This hierarchical model is then expressed as:
\begin{align}
	\mathbf{y} \mid \mathbf{X}, \rho, \sigma, \sigma_0 \sim \mathcal{N}(\mathbf{A} \mathbf{X}, \sigma_0^{-2} \mathbf{I}); \qquad \mathbf{X} \mid \rho, \sigma \sim \mathcal{N}(\mathbf{0}, \mathbf{Q}(\rho, \sigma)^{-1}),
\end{align}
where $\sigma_0$ denotes the scale of the observation noise.

Our interest lies in Bayesian inference of the three-dimensional parameter vector $\theta = (\rho, \sigma, \sigma_0)$.
Since the posterior distribution of the latent field $\mathbf{X}$ is Gaussian, $\mathbf{X}$ can be marginalized out analytically.
The marginal log-likelihood $l(\theta)$ is explicitly available as:
\begin{align}
	2 l(\theta) = & N \log \left( \sigma_0^2 \right) + \log \det \left( \mathbf{Q}(\rho, \sigma) \right) - \log \det \left( \mathbf{Q}(\rho, \sigma) + \sigma_0^2 \mathbf{A}^{\mathrm{T}} \mathbf{A} \right) \\
	& - \sigma_0^2 \mathbf{y}^{\mathrm{T}} \mathbf{y} + \sigma_0^4 \mathbf{y}^{\mathrm{T}} \mathbf{A} \left( \mathbf{Q}(\rho, \sigma) + \sigma_0^2 \mathbf{A}^{\mathrm{T}} \mathbf{A} \right)^{-1} \mathbf{A}^{\mathrm{T}} \mathbf{y} - N \log \left( 2 \pi \right) .
\end{align}
Although the likelihood is available in closed form, its evaluation presents a significant computational challenge due to the determinant and inverse operations involving massive precision matrices.
While the sparsity of $\mathbf{Q}(\rho, \sigma)$ mitigates the cost, these linear algebra operations scale superlinearly as the dataset size increases.
This necessitates an evaluation-efficient strategy for posterior sampling.
We selected this application because, while the precision matrix is high-dimensional, performing MCMC remains feasible, allowing us to establish a robust baseline for comparison.
We used the efficient implementation of the likelihood provided by \cite{Bolin2020}.

For Bayesian inference, we assigned a uniform prior to $\theta$ over the compact domain $\Theta := [0.1, 5.0]^3$.
The interval for the range parameter $\rho$ corresponds to approximately 0\%--10\% of the diameter of the observation region.
The same interval was used for the variance parameter $\sigma$ and noise scale $\sigma_0$, as it was large enough to cover 2 to 3 times the standard deviation of the observed values.
BIS was configured to produce 200 weighted samples, initialized with 20 points from the scaled Halton sequence.
To encode sufficient tail decay, we assigned a quadratic polynomial mean function with coefficient hyperparameters to the GP prior.
We employed a Gaussian kernel with length-scale and variance hyperparameters for the covariance.
The candidate pool size was set to $M = 80,000$.
For the baseline, we performed $10,000$ iterations of MCMC, with a burn-in period of $5,000$ iterations and thinning applied every $10$ iterations to the remaining $5,000$ samples.
\Cref{fig:fig_44_2} presents the resulting posterior approximation, using the KDE of the MCMC samples as the ground truth.
The proposed BIS algorithm achieves an accurate approximation of the target posterior using only $200$ samples.
The ability of BIS to recover the posterior structure with a limited sample budget demonstrates its computational advantage.

\begin{figure}[t]
	\hfill
	\includegraphics[width=0.85\textwidth]{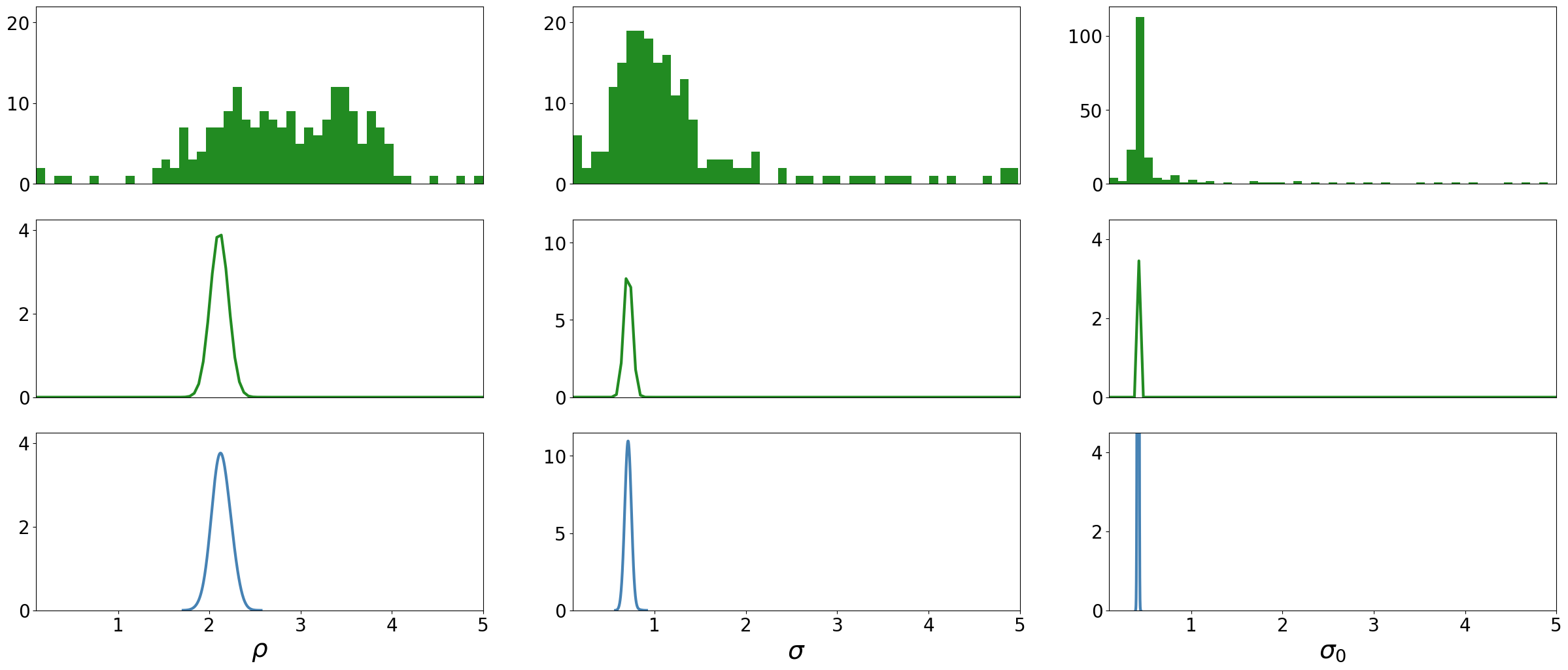}
	\hfill
	\hfill
	\caption{
		Marginal distributions for the MRF parameters $\theta = (\rho, \sigma, \sigma_0)$. The top panels display the histograms of 200 samples generated by BIS. The middle panels display the KDEs derived from the 200 samples with the associated weights. The bottom panels show baseline marginals by MCMC.
	} \label{fig:fig_44_2}
\end{figure}

\section{Related Work} \label{sec:related}

To the best of our knowledge, BIS is the first framework to formulate the sample design in importance sampling as a bandit problem.
In most existing experimental design approaches for GPs, selected points serve merely as training inputs for a surrogate model.
In contrast, BIS leverages the GP surrogate to guide the adaptive design of samples that converge to the target density.

\cite{Sinsbeck2017} developed a sequential design strategy to efficiently approximate the likelihood via a GP surrogate.
Their method utilizes the Bayes estimator of the log-likelihood that minimizes the Bayes risk associated with the GP surrogate, proposing the design that maximizes this Bayes estimator at each iteration.
Once the specified number of iterations is finished, the posterior is approximated via a plug-in estimator, where the log-likelihood is replaced by the GP posterior mean.
\cite{Jarvenpaa2021} extended this approach by modeling the posterior density directly and proposing a batched sequential design.
However, these methods rely on several heuristics and lack theoretical guarantees regarding the convergence of the surrogate model to the true target.

More recently, \cite{Kim2024} proposed a modification of BO for posterior approximation.
Their strategy augments the standard BO acquisition with an additional random point at each iteration to ensure exploration.
They demonstrated that, under a well-specified GP prior, the mean function of the GP posterior converges to the log target posterior.
While this result is expected, given that random sampling eventually covers the entire domain, it offers the important insight that imposing simple constraints to ensure domain coverage is sufficient to guarantee consistency of the GP surrogate.

Beyond density approximation, GP surrogate modeling has been widely adopted in Bayesian computation.
Notable applications include computing model evidence via active learning \citep{Osborne2012, Gunter2014}, accelerating ABC through summary statistic modeling \citep{Meeds2014} or discrepancy modeling \citep{Gutmann2016}, merging posteriors from data subsets \citep{Nemeth2018}, and acting as physics-informed surrogates for forward simulators in inverse problems \citep{Bai2024}.

\section{Conclusion} \label{sec:conclusion}

Sampling from black-box densities has become an increasingly common task, particularly in Bayesian inference for complex models.
Despite its importance, there is limited literature on particle-based sampling methods specifically tailored for such densities.
To address this gap, we introduced BIS, the first importance sampling framework that formulates the sample design as a bandit problem.

BIS offers a simple yet powerful framework, defined by a proposal sequence and a point-selection criterion.
In this work, we proposed GP-UJB as a default criterion, leveraging GP surrogate models of the target density.
The GP-UJB criterion naturally decomposes into two terms that balance the exploration of new regions with the exploitation of existing information.
Theoretically, we established the weak convergence of the weighted samples generated by BIS.
Empirically, we demonstrated the efficacy of the method on both benchmark densities and real-world applications.

While this paper focused on GP-UJB, the general BIS framework allows for flexible choices of point-selection criteria.
As aforementioned in \Cref{sec:introduction}, applying GPs in high-dimensional spaces remains a well-known open challenge, a limitation inherited by GP-UJB.
Consequently, designing criteria suitable for high-dimensional densities requires substantial further investigation.
Promising directions for future research include developing novel point-selection strategies that extend beyond the scope of GP surrogate modeling.

\bibliographystyle{abbrvnat}
\bibliography{bibliography}

\appendix

\setcounter{figure}{0}
\setcounter{table}{0}
\setcounter{equation}{0}

\renewcommand{\thefigure}{S\arabic{figure}}
\renewcommand{\thetable}{S\arabic{table}}
\renewcommand{\theequation}{S\arabic{equation}}

\begin{center}
	\LARGE \textbf{Supplementary Material}
\end{center}

\vspace{40pt}

This supplementary material contains details of the results presented in the main text.
\Cref{apx:preliminary} presents preliminary lemmas useful for subsequent proofs.
\Cref{apx:proof} provides proofs of all the theoretical results presented in the main text.
\Cref{apx:simulation} delivers simulation studies on the settings of BIS.
\Cref{apx:experiment} describes additional details of the experiments.
It is convenient to introduce the following multi-index notations that will be used throughout.

\paragraph*{Notations}
Let $\alpha = (\alpha_1, \dots, \alpha_d)$ be a $d$-dimensional vector of binary values, that is, $\alpha \in \{ 0, 1 \}^d$. 
We call $\alpha$ a binary multi-index.
Let $| \alpha |$ be the number of non-zero elements of $\alpha$.
Recall that $\partial_i$ denotes the partial derivative of a function with respect to the i-th coordinate of the argument.
Let $\partial_i^0$ be an identity operator and let $\partial_i^1$ be the partial derivative $\partial_i$.
Denote by $\partial^\alpha$ the mixed partial derivative $\partial^\alpha = \partial_1^{\alpha_1} \cdots \partial_d^{\alpha_d}$.
Denote by $\theta^\alpha$ a $|\alpha|$-dimensional vector consisting of all i-th coordinates of $\theta$ such that $\alpha_i = 1$.
Denote by $\Theta^\alpha$ a set of all values of $\theta^\alpha$.
Denote by $\theta_*^\alpha$ a $d$-dimensional vector s.t.~the i-th coordinate is equal to that of $\theta$ if $\alpha_i = 1$ or fixed to a constant $b_i$ otherwise, where $b_i$ denotes the right boundary of the i-th coordinate of the hyperrectangular domain $\Theta$.

\section{Preliminary Lemmas} \label{apx:preliminary}

This section contains preliminary lemmas on RKHS and the star discrepancy, which will be useful for the subsequent proofs in \Cref{apx:proof}.

\subsection{Preliminary Lemmas on RKHS} \label{apx:premilinary_rkhs}

Recall that $\kappa$ is the kernel used for MMD, which meets the kernel condition presented in \Cref{sec:theory}.
Denote by $\langle \cdot, \cdot \rangle_{\H_\kappa}$ the inner product of the RKHS $\H_\kappa$ of the kernel $\kappa$.
Any function $f$ in the RKHS $\H_\kappa$ satisfies an identity called the \emph{reproducing property} s.t.
\begin{align}
	f(\theta) = \langle f(\cdot), \kappa(\cdot, \theta) \rangle_{\H_\kappa}
\end{align}
at each $\theta$ pointwise \citep[][Section 4.2]{Steinwart2008}.
The reproducing property also applies to the kernel $\kappa$ itself, where it holds that $\kappa(\theta, \theta') = \langle \kappa(\cdot, \theta), \kappa(\cdot, \theta') \rangle_{\H_\kappa}$ at each $\theta, \theta'$ pointwise.
We have the following useful lemma on a uniform bound of functions in $\H_\kappa$.

\begin{lemma} \label{lem:rk_bound}
	Any function $f \in \H_\kappa$ s.t.~$\| f \|_{\H_\kappa} \le 1$ satisfies that
	\begin{align}
		\sup_{\theta \in \Theta} | f(\theta) | \le 1 .
	\end{align}
\end{lemma}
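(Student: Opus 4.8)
The plan is to prove this as a direct consequence of the reproducing property together with the Cauchy--Schwarz inequality in the RKHS $\H_\kappa$, using only the uniform boundedness built into the kernel condition. The result is the standard fact that reproducing kernel Hilbert spaces with bounded diagonal consist of bounded functions, and the norm of the point evaluation functional at $\theta$ is exactly $\sqrt{\kappa(\theta,\theta)}$.

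First I would fix an arbitrary $\theta \in \Theta$ and invoke the reproducing property stated just above the lemma to write $f(\theta) = \langle f(\cdot), \kappa(\cdot, \theta) \rangle_{\H_\kappa}$. Applying the Cauchy--Schwarz inequality in $\H_\kappa$ then yields
\begin{align}
	| f(\theta) | = \big| \langle f(\cdot), \kappa(\cdot, \theta) \rangle_{\H_\kappa} \big| \le \| f \|_{\H_\kappa} \, \| \kappa(\cdot, \theta) \|_{\H_\kappa} .
\end{align}
The next step is to identify $\| \kappa(\cdot, \theta) \|_{\H_\kappa}$ explicitly. Using the reproducing property applied to the kernel itself, namely $\kappa(\theta, \theta') = \langle \kappa(\cdot, \theta), \kappa(\cdot, \theta') \rangle_{\H_\kappa}$, and taking $\theta' = \theta$, I obtain $\| \kappa(\cdot, \theta) \|_{\H_\kappa}^2 = \langle \kappa(\cdot, \theta), \kappa(\cdot, \theta) \rangle_{\H_\kappa} = \kappa(\theta, \theta)$, hence $\| \kappa(\cdot, \theta) \|_{\H_\kappa} = \sqrt{\kappa(\theta, \theta)}$.

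It then remains to bound the diagonal $\kappa(\theta, \theta)$. Here I would appeal to the kernel condition, which provides the factorisation $\kappa(\theta, \theta) = \prod_{i=1}^{d} \kappa_i(\theta_{(i)}, \theta_{(i)})$ with each $\kappa_i$ uniformly bounded by $1$; consequently $\kappa(\theta, \theta) \le \prod_{i=1}^{d} 1 = 1$, so $\| \kappa(\cdot, \theta) \|_{\H_\kappa} \le 1$. Combining this with the hypothesis $\| f \|_{\H_\kappa} \le 1$ in the displayed inequality gives $| f(\theta) | \le 1$. Since $\theta \in \Theta$ was arbitrary, taking the supremum over $\theta$ completes the proof.

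I do not anticipate a genuine obstacle here, as every step is elementary once the reproducing property is in hand; the only point requiring care is to make sure the diagonal bound $\kappa(\theta,\theta)\le 1$ is drawn correctly from the factorised kernel condition rather than assumed outright, and to note that the argument does not even need the integral strict positive-definiteness part of that condition.
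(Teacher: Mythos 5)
Your proof is correct and follows essentially the same route as the paper: the reproducing property, the Cauchy--Schwarz inequality, the identity $\| \kappa(\cdot,\theta) \|_{\H_\kappa} = \sqrt{\kappa(\theta,\theta)}$, and the kernel condition's bound $\kappa(\theta,\theta) \le 1$. Your version merely spells out the diagonal bound via the factorisation, which the paper states more tersely as ``the kernel $\kappa$ is bounded by $1$.''
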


\begin{proof}
	By the reproducing property and the Cauchy-Schwartz inequality, we have
	\begin{align}
		| f(\theta) | \le \| f \|_{\H_\kappa} \| \kappa(\cdot, \theta) \|_{\H_\kappa} = \| f \|_{\H_\kappa} \sqrt{ \kappa(\theta, \theta) } \le 1
	\end{align}
	since the kernel $\kappa$ is bounded by $1$ under the kernel condition in \Cref{sec:theory}.
\end{proof}

We have another useful lemma on a uniform bound of the mixed partial derivative of functions in $\H_\kappa$.
Recall that $\partial_i$ denotes the partial derivative with respect to the i-th coordinate of the argument $\theta$, and that $\partial^\alpha$ denotes the mixed partial derivative specified by a given binary multi-index $\alpha \in \{ 0, 1 \}^d$.
Denote by $\partial_i \kappa(\theta, \theta')$ and $\partial'_i \kappa(\theta, \theta')$ the partial derivative of $\kappa(\theta, \theta)$ with respect to the i-th coordinate of, respectively, the first argument $\theta$ and the second argument $\theta'$.

\begin{lemma} \label{lem:deriv}
	Any function $f \in \H_\kappa$ s.t.~$\| f \|_{\H_\kappa} \le 1$ satisfies that 
	\begin{align}
		\sup_{\theta \in \Theta} | \partial^\alpha f(\theta) | \le 1
	\end{align}
	for all binary multi-index $\alpha \in \{ 0, 1 \}^d$.
\end{lemma}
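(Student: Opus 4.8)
The plan is to replicate the argument of \Cref{lem:rk_bound}, replacing the plain reproducing property by its differentiated analogue. Fix a binary multi-index $\alpha \in \{0,1\}^d$, and write $\partial^\alpha_{\theta'}\kappa(\cdot, \theta)$ for the function obtained by differentiating the \emph{second} argument of $\kappa$ according to $\alpha$ and then fixing the differentiated slot at $\theta$. I would first establish that $\partial^\alpha_{\theta'}\kappa(\cdot, \theta)$ lies in $\H_\kappa$ and that differentiation commutes with the inner product, so that the derivative reproducing property
\[
	\partial^\alpha f(\theta) = \big\langle f, \partial^\alpha_{\theta'}\kappa(\cdot, \theta) \big\rangle_{\H_\kappa}
\]
holds pointwise for every $f \in \H_\kappa$. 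Applying the Cauchy--Schwarz inequality then gives $| \partial^\alpha f(\theta) | \le \| f \|_{\H_\kappa}\, \| \partial^\alpha_{\theta'}\kappa(\cdot, \theta) \|_{\H_\kappa}$, which reduces the whole claim to bounding the norm of this differentiated feature map by $1$.

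The norm is computed by applying the derivative reproducing property a second time to the element $\partial^\alpha_{\theta'}\kappa(\cdot, \theta)$ itself, yielding
\[
	\big\| \partial^\alpha_{\theta'}\kappa(\cdot, \theta) \big\|_{\H_\kappa}^2 = \big( \partial^\alpha_{\theta}\partial^\alpha_{\theta'}\kappa \big)(\theta, \theta) ,
\]
the mixed derivative taken in the $\alpha$-selected coordinates of both arguments and evaluated on the diagonal. Here the factorisation in the kernel condition is decisive: since $\kappa(\theta, \theta') = \prod_{i=1}^d \kappa_i(\theta_{(i)}, \theta'_{(i)})$, this mixed derivative splits into a product in which each coordinate $i$ with $\alpha_i = 1$ contributes the factor $(\partial_{\theta_{(i)}}\partial_{\theta'_{(i)}}\kappa_i)(\theta_{(i)}, \theta_{(i)})$ while each coordinate with $\alpha_i = 0$ contributes $\kappa_i(\theta_{(i)}, \theta_{(i)})$. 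By the kernel condition every such factor is bounded by $1$, so the product is at most $1$. Combining this with Cauchy--Schwarz and $\| f \|_{\H_\kappa} \le 1$ gives $| \partial^\alpha f(\theta) | \le 1$ uniformly in $\theta$, which is the claim; the case $\alpha = 0$ recovers \Cref{lem:rk_bound}.

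The one nontrivial point, and the step I expect to require the most care, is the justification of the derivative reproducing property itself, namely that $\partial^\alpha_{\theta'}\kappa(\cdot, \theta) \in \H_\kappa$ and that differentiation may be interchanged with the $\H_\kappa$ inner product. This is not automatic, but it is guaranteed precisely by the smoothness encoded in the kernel condition: the existence and uniform boundedness of the mixed partial $\partial_{\theta_{(i)}}\partial_{\theta'_{(i)}}\kappa_i$ for each factor ensures that the full mixed derivative $\partial^\alpha_{\theta}\partial^\alpha_{\theta'}\kappa$ exists and is continuous, which is the standard sufficient condition (see e.g.\ \citep{Steinwart2008}) for RKHS functions to admit the derivatives $\partial^\alpha f$ and for the differentiated reproducing identity to hold. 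Once this regularity is in place, the only remaining ingredients are the two applications of Cauchy--Schwarz and the elementary product bound described above.
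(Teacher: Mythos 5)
Your proof is correct and follows essentially the same route as the paper: the bound $| \partial^\alpha f(\theta) | \le \| f \|_{\H_\kappa} \sqrt{ (\partial^\alpha_{\theta}\partial^\alpha_{\theta'}\kappa)(\theta, \theta) }$ that you derive from the differentiated reproducing property and Cauchy--Schwarz is exactly the content of Corollary 4.36 of \cite{Steinwart2008}, which the paper invokes directly. After that, your factorisation of the mixed derivative on the diagonal and the coordinatewise bounds from the kernel condition coincide with the paper's argument.
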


\begin{proof}
	Let $I_\alpha$ be a set of indices $i$ s.t.~$\alpha_i = 1$.
	Similarly, let $I_{\setminus \alpha}$ be a set of indices $i$ s.t.~$\alpha_i = 0$.
	The mixed partial derivative $\partial^\alpha$ can be expressed as $\partial^\alpha = \prod_{i \in I_\alpha} \partial_i$ using the set $I_\alpha$.
	Corollary 4.36 of \cite{Steinwart2008} implies that
	\begin{align}
		| \partial^\alpha f(\theta) | \le \| f \|_{\H_\kappa} \sqrt{ \prod_{i \in I_\alpha} \partial_i \partial_i' \kappa(\theta, \theta) } .
	\end{align}
	It follows from $\| f \|_{\H_\kappa} \le 1$ and the factorization $\kappa(\theta, \theta') = \prod_{i=1}^{d} \kappa_i(\theta_{(i)}, \theta'_{(i)})$ that
	\begin{align}
		| \partial^\alpha f(\theta) | \le \sqrt{ \prod_{i \in I_{\setminus \alpha}} \kappa_i(\theta_{(i)}, \theta_{(i)}) } \sqrt{ \prod_{i \in I_\alpha} \partial_i \partial_i' \kappa_i(\theta_{(i)}, \theta_{(i)}) } .
	\end{align}
	Recall the kernel condition that every $\kappa_i$ is bounded by $1$ and admits the derivative $\partial_i \partial_i' \kappa_i(\theta_{(i)}, \theta_{(i)}')$ uniformly bounded by $1$.
	Therefore we have
	\begin{align}
		\sup_{\theta \in \Theta} | \partial^\alpha f(\theta) | \le \sup_{\theta \in \Theta} \sqrt{ \prod_{i \in I_{\setminus \alpha}} \kappa_i(\theta_{(i)}, \theta_{(i)}) } \sqrt{ \prod_{i \in I_\alpha} \partial_i \partial_i' \kappa_i(\theta_{(i)}, \theta_{(i)}) } \le 1 .
	\end{align}
	This concludes the proof since the upper bound does not depend on a choice of $\alpha$.
\end{proof}

\subsection{Preliminary Lemmas on Star Discrepancy} \label{apx:premilinary_sd}

We defined the star discrepancy $\operatorname{D}^*$ for points in the domain $\Theta$ in the main text.
Typically, the star discrepancy is defined for points in the unit cube $[0, 1]^d$.
We shall see that the star discrepancy $\operatorname{D}^*$ defined for the domain $\Theta$ is equivalent to that defined for the unit cube $[0, 1]^d$.
To distinguish the two star discrepancies, let $\operatorname{D}^\dagger$ be the star discrepancy for the unit cube $[0, 1]^d$.
For a given sequence of $K$ points $\{ \eta_i \}_{i=1}^{K}$ in the unit cube $[0, 1]^d$, the latter star discrepancy $\operatorname{D}^\dagger$ is defined as follows:
\begin{align}
	\operatorname{D}^\dagger\left( \{ \eta_n \}_{n=1}^{K} \right) := \sup_{ B_0 \in \mathcal{B}_0 } \left| \frac{\# \{ \eta_n \in B_0 \} }{K} - \operatorname{Vol}(B_0) \right| \label{eq:star_discrepancy_0}
\end{align}
where $\mathcal{B}_0$ is a set of all subboxes contained in $[0, 1]^d$ s.t.~the left-boundary at every i-th coordinate is $0$, and $\# \{ \eta_n \in B_0 \}$ denotes the number of points contained in a box $B_0 \in \mathcal{B}_0$.
Notice that the former star discrepancy $\operatorname{D}^*$ introduced in the main text is a generalization of the above definition of the latter $\operatorname{D}^\dagger$, where they coincide with each other when $\Theta = [0, 1]^d$.

Consider a bijective affine map $\tau: [0, 1]^d \to \Theta$ between the unit cube $[0, 1]^d$ and the domain $\Theta$.
Since $\Theta$ is hyperrectangular, the bijective affine map $\tau$ always exists uniquely.
Now we formally state the equivalence of one star discrepancy $\operatorname{D}^*$ to the other $\operatorname{D}^\dagger$ under the map $\tau$.

\begin{lemma} \label{lem:sd}
	Let $\{ \eta_n \}_{n=1}^{K}$ be the image of a given sequence of points $\{ \theta_n \}_{n=1}^{K}$ in $\Theta$ under the map $\tau^{-1}$, that is, each point satisfies $\eta_n = \tau^{-1}(\theta_n)$.
	It holds for any sequence $\{ \theta_n \}_{n=1}^{K}$ that
	\begin{align}
		\operatorname{D}^*\left( \{ \theta_n \}_{n=1}^{K} \right) = \operatorname{D}^\dagger\left( \{ \eta_n \}_{n=1}^{K} \right) .
	\end{align}
\end{lemma}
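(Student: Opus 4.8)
The plan is to show that the affine map $\tau$ sets up a bijection between the subboxes appearing in the two supremum definitions that preserves both the point-counting fraction and the volume ratio, so that the two suprema are taken over identical sets of numbers and hence coincide. First I would recall that $\tau(\eta) = A + B \eta$ with $A$ the vector of left boundaries and $B$ the diagonal matrix of side lengths, exactly as in \Cref{def:halton}. Since $B$ is diagonal with positive entries, $\tau$ maps the unit cube onto $\Theta$ bijectively and, crucially, maps each axis-aligned subbox $B_0 = \prod_{i=1}^d [0, c_i] \subset [0,1]^d$ (with left boundary $0$ at every coordinate) onto the axis-aligned subbox $\tau(B_0) = \prod_{i=1}^d [a_{(i)}, a_{(i)} + (b_{(i)} - a_{(i)}) c_i] \subset \Theta$, whose left boundary at every coordinate agrees with that of $\Theta$. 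Thus $\tau$ restricts to a bijection $B_0 \mapsto \tau(B_0)$ between the index set $\mathcal{B}_0$ (subboxes anchored at the origin) and the index set $\mathcal{B}$ (subboxes anchored at the left corner of $\Theta$).

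Next I would verify the two equalities that make the corresponding summands match. For the counting term, since $\eta_n = \tau^{-1}(\theta_n)$ and $\tau$ is a bijection, we have $\eta_n \in B_0$ if and only if $\theta_n \in \tau(B_0)$; hence $\#\{\eta_n \in B_0\} = \#\{\theta_n \in \tau(B_0)\}$, and dividing by $K$ gives equal empirical fractions. For the volume term, the Jacobian of $\tau$ is the constant $\det B = \prod_{i=1}^d (b_{(i)} - a_{(i)}) = \operatorname{Vol}(\Theta)$, so for any measurable set $B_0$ one has $\operatorname{Vol}(\tau(B_0)) = \operatorname{Vol}(\Theta)\,\operatorname{Vol}(B_0)$, and therefore $\operatorname{Vol}(B_0) = \operatorname{Vol}(\tau(B_0)) / \operatorname{Vol}(\Theta)$. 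Substituting these two identities into \eqref{eq:star_discrepancy_0} turns each summand $\left| \#\{\eta_n \in B_0\}/K - \operatorname{Vol}(B_0) \right|$ into $\left| \#\{\theta_n \in \tau(B_0)\}/K - \operatorname{Vol}(\tau(B_0))/\operatorname{Vol}(\Theta) \right|$, which is exactly the summand defining $\operatorname{D}^*$ at the box $\tau(B_0)$.

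Finally, because $B_0 \mapsto \tau(B_0)$ is a bijection from $\mathcal{B}_0$ onto $\mathcal{B}$, taking the supremum over $B_0 \in \mathcal{B}_0$ on the left is the same as taking the supremum over $B = \tau(B_0) \in \mathcal{B}$ on the right, yielding $\operatorname{D}^\dagger(\{\eta_n\}) = \operatorname{D}^*(\{\theta_n\})$ and completing the proof.

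I expect the only delicate point to be confirming that $\tau$ really does carry the origin-anchored subboxes onto exactly the family $\mathcal{B}$ of $\Theta$-anchored subboxes and nothing more or less; i.e., that the bijection between index families is onto as well as into. This is where the diagonal (rather than merely invertible) structure of $B$ is essential, since a non-axis-aligned linear map would destroy the ``anchored subbox'' structure. The counting and volume identities are then routine consequences of bijectivity and the constant Jacobian, so the matching-of-suprema argument is the substance of the proof.
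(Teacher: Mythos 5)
Your proposal is correct and follows essentially the same route as the paper's proof: both rest on the observation that $\tau$ induces a bijection between the anchored-box families $\mathcal{B}_0$ and $\mathcal{B}$, that point counts are preserved under this bijection, and that $\operatorname{Vol}(B_0) = \operatorname{Vol}(\tau(B_0))/\operatorname{Vol}(\Theta)$, after which the two suprema range over the same set of values. The only difference is one of explicitness—you verify the box-to-box correspondence and compute the Jacobian $\det B = \operatorname{Vol}(\Theta)$ concretely, where the paper states these invariances more abstractly—which is a presentational rather than substantive distinction.
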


\begin{proof}
	Recall that $\# \{ \eta_n \in B_0 \}$ denotes the number of points contained in a box $B_0 \in \mathcal{B}_0$.
	The number is invariant to the bijective transform $\tau$, meaning that $\# \{ \eta_n \in B_0 \} = \# \{ \tau(\eta_n) \in \tau(B_0) \}$, where we denote by $\tau(C)$ the image of a subset $C$ of $[0, 1]^d$ under the map $\tau$.
	Thus, the star discrepancy $\operatorname{D}^\dagger( \{ \eta_n \}_{n=1}^{K} )$ on the right-hand side can be expressed as 
	\begin{align}
		\operatorname{D}^\dagger\left( \{ \eta_n \}_{n=1}^{K} \right) & = \sup_{ B_0 \in \mathcal{B}_0 } \left| \frac{ \# \{ \tau(\eta_n) \in \tau(B_0) \} }{K} - \frac{ \operatorname{Vol}(B_0) }{ \operatorname{Vol}( [0, 1]^d ) } \right| .
	\end{align}
	The ratio between the volume $\operatorname{Vol}(B_0)$ and $\operatorname{Vol}( [0, 1]^d )$ is invariant to the transform $\tau$, meaning that
	\begin{align}
		\frac{ \operatorname{Vol}(B_0) }{ \operatorname{Vol}( [0, 1]^d ) } = \frac{ \operatorname{Vol}( \tau( B_0 ) ) }{ \operatorname{Vol}( \tau( [0, 1]^d ) ) } = \frac{ \operatorname{Vol}( \tau( B_0 ) ) }{ \operatorname{Vol}( \Theta ) }
	\end{align}
	Let $\mathcal{B}$ be all subboxes contained in $\Theta$ s.t.~the left-boundary at every i-th coordinate is equal to that of $\Theta$.
	Notice that the map $\tau$ defines the bijection between the set $\mathcal{B}_0$ of subboxes in $[0, 1]^d$ and the set $\mathcal{B}$ of subboxes in $\Theta$.
	This means that we have $\sup_{ B_0 \in \mathcal{B}_0 } f( \tau(B_0) ) = \sup_{ B \in \mathcal{B} } f( B )$ for any function $f$ that takes a box in $\mathcal{B}$ as an argument.
	Therefore, we have
	\begin{align}
		\operatorname{D}^\dagger\left( \{ \eta_n \}_{n=1}^{K} \right) & = \sup_{ B_0 \in \mathcal{B}_0 } \left| \frac{\# \{ \tau(\eta_n) \in \tau(B_0) \} }{K} - \frac{ \operatorname{Vol}( \tau(B_0) ) }{ \operatorname{Vol}(\Theta) } \right| = \sup_{ B \in \mathcal{B} } \left| \frac{\# \{ \theta_n \in B \} }{K} - \frac{ \operatorname{Vol}( B ) }{ \operatorname{Vol}(\Theta) } \right|
	\end{align}
	where we used $\theta_n = \tau(\eta_n)$.
	This concludes the proof.
\end{proof}

\section{Proofs} \label{apx:proof}

This section contains proofs of all the theoretical results presented in the main text.

\subsection{Proof of \Cref{thm:convergence}} \label{apx:convergence}

We introduce two technical lemmas that will be useful.
The first lemma is on a decomposition of the MMD.
Recall that $\delta_N$ denotes the empirical distribution of the weighted samples $\{ w_n^*, \theta_n^* \}_{n=1}^{N}$, and $d(p, \delta_N)$ denotes the MMD between the target density $p$ and the empirical distribution $\delta_N$.

\begin{lemma} \label{lem:second}
	Suppose that the target density $p(\cdot) = q(\cdot) / Z$ and the proposal density $u$ are positive and continuous in Algorithm \ref{alg:tmp}. 
	Given an arbitrary proposal sequence $\{ \theta_n \}_{n=1}^{M+N}$, define a functional
	\begin{align}
		I_{M+N}(f) := \left| \int_\Theta f(\theta) \frac{ q(\theta) }{ u(\theta) } u(\theta) d \theta - \frac{1}{N+M} \sum_{n=1}^{N + M} f(\theta_n) \frac{ q(\theta_n) }{ u(\theta_n) } \right|
	\end{align}
	for functions $f$ on $\Theta$.
	Then the weighted samples $\{ w_n^*, \theta_n^* \}$ of Algorithm \ref{alg:tmp} satisfies
	\begin{align}
		d(p, \delta_N) \le B_1 \sup_{\| f \|_{\mathcal{H}_k} \le 1} I_{M+N}(f) + B_2 I_{M+N}(1) + B_3 \frac{M}{N+M}
	\end{align}
	for some constants $B_1, B_2, B_3 > 0$ dependent only on $p$ and $u$.
\end{lemma}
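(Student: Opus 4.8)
The plan is to exploit the structural identity that the $N$ selected points $\{\theta_n^*\}_{n=1}^N$ together with the $M$ points remaining in the final pool $\S_{N+1}$ form exactly a partition of the first $M+N$ terms of the proposal sequence. Indeed, each iteration removes one point from the pool and injects the next sequence element $\theta_{M+n}$, so the points ever present over the $N$ iterations are precisely $\theta_1,\dots,\theta_{M+N}$, of which $N$ are selected and $M$ are left over. Writing $r(\theta):=q(\theta)/u(\theta)$, the self-normalised estimator is $\E_{\delta_N}[f]=A/W_N$ with $A:=\sum_{n=1}^N f(\theta_n^*)\,r(\theta_n^*)$ and $W_N:=\sum_{n=1}^N r(\theta_n^*)$, and the partition lets me split the full-sequence averages as $\frac{1}{M+N}\sum_{n=1}^{M+N} f(\theta_n)r(\theta_n)=a+b$ and $\frac{1}{M+N}\sum_{n=1}^{M+N} r(\theta_n)=w+w'$, where $a,w$ are the scaled sums over the selected points (so $A/W_N=a/w$) and $b,w'$ the scaled sums over the $M$ leftover points of $\S_{N+1}$. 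I would bound $d(p,\delta_N)=\sup_{\|f\|_{\mathcal{H}_\kappa}\le 1}|\E_p[f]-a/w|$ for each fixed $f$ and then take the supremum at the end.

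First I would rewrite the per-test-function error as a single fraction,
\begin{align}
	\E_p[f]-\frac{a}{w}=\frac{w\,\E_p[f]-a}{w},
\end{align}
and expand the numerator using the two importance-sampling identities $a+b=Z\,\E_p[f]+\eta_f$ and $w+w'=Z+\eta_1$, valid because $\int f\,\tfrac{q}{u}\,u\,d\theta=Z\,\E_p[f]$ and $Z=\int q\,d\theta$, with $|\eta_f|\le I_{M+N}(f)$ and $|\eta_1|\le I_{M+N}(1)$ by the definition of $I_{M+N}$. Substituting $a=Z\,\E_p[f]+\eta_f-b$ and $w=Z+\eta_1-w'$ cancels the leading $Z\,\E_p[f]$ term and leaves $w\,\E_p[f]-a=\eta_1\,\E_p[f]-\eta_f+(b-w'\,\E_p[f])$. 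Using \Cref{lem:rk_bound} to get $\sup_\Theta|f|\le 1$, hence $|\E_p[f]|\le 1$, together with $|b|\le w'$ (since $r\ge 0$ and $|f|\le 1$) and the boundedness $r\le R:=\sup_\Theta q/u<\infty$ (finite since $q,u$ are positive and continuous on the compact $\Theta$), the numerator is bounded by $I_{M+N}(f)+I_{M+N}(1)+2w'$, where $w'\le R\,M/(M+N)$ because the leftover pool contains exactly $M$ of the $M+N$ points.

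The delicate step, and the main obstacle, is controlling the denominator $w=W_N/(M+N)$, the scaled self-normalisation constant, which must be bounded below by a positive quantity for the division to be harmless; the difficulty is that the final constants must be independent of $N$ and of the criterion $U_n$, yet for a poorly equidistributing sequence $w$ could be small. I would resolve this by a case split based on the identity $w=Z+\eta_1-w'$, which gives $w\ge Z-I_{M+N}(1)-R\,M/(M+N)$. When the right-hand side is at least $Z/2$, dividing through yields the claim directly with $B_1=B_2=2/Z$ and $B_3=4R/Z$; taking the supremum over $\|f\|_{\mathcal{H}_\kappa}\le 1$ is immediate because $w$ and the $M/(M+N)$ term are independent of $f$ and $\sup_f I_{M+N}(f)$ is exactly the first term. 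In the complementary case $w<Z/2$, the same identity forces $I_{M+N}(1)+R\,M/(M+N)>Z/2$, so at least one of $I_{M+N}(1)$ and $M/(M+N)$ exceeds a fixed positive multiple of $Z$; since $d(p,\delta_N)\le 2$ always (both arguments are probability measures and $\sup_\Theta|f|\le 1$), enlarging $B_2$ and $B_3$ makes the right-hand side exceed $2$, so the inequality holds trivially. Taking the worst of the constants over the two cases gives $B_1,B_2,B_3$ depending only on $Z$ and $R$, hence only on $p$ and $u$, as claimed.
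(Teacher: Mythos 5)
Your proof is correct, and it differs from the paper's precisely in the step you yourself flagged as delicate: the treatment of the self-normalising denominator. The shared skeleton is the same — both proofs use the disjoint decomposition of the proposal sequence $\{\theta_n\}_{n=1}^{M+N}$ into the selected points $\{\theta_n^*\}_{n=1}^{N}$ and the final pool $\S_{N+1}$, the uniform bound $\sup_\Theta |f| \le 1$ from \Cref{lem:rk_bound}, and both isolate a $\sup_f I_{M+N}(f)$ term, an $I_{M+N}(1)$ term, and a leftover-pool term of size $M/(N+M)$. The paper, however, never needs a lower bound on the empirical normaliser: it telescopes through a chain of triangle inequalities and kills every ratio of sums that appears by termwise domination, $|f(\theta) q(\theta)/u(\theta)| \le q(\theta)/u(\theta)$, so each ratio is bounded either by $1$ or by $\frac{U}{L}\frac{M}{M+N}$ with $U = \sup_\Theta q/u$ and $L = \inf_\Theta q/u > 0$; this yields $B_1 = B_2 = 1/Z$ and $B_3 = 2U/L$ with no case analysis. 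You instead put everything over the common denominator $w$, cancel $Z\,\E_{\theta \sim p}[f(\theta)]$ algebraically, and handle $w$ by a dichotomy: if $w \ge Z/2$ you divide, and otherwise the identity $w = Z + \eta_1 - w'$ forces $I_{M+N}(1) + R M/(M+N) > Z/2$, so inflating $B_2, B_3$ makes the right-hand side exceed the trivial bound $d(p, \delta_N) \le 2$ (valid because $\delta_N$ is a probability measure and $|f| \le 1$). Your route needs only $R = \sup_\Theta q/u$ and not the lower bound $L$ that the paper invokes for its pool terms, at the price of larger constants and a case split; the paper's domination trick is exactly what dissolves the ``main obstacle'' you identified. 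One cosmetic slip: your two cases as literally stated (lower bound on $w$ at least $Z/2$, versus $w < Z/2$) are not exhaustive; the split should be on $w \ge Z/2$ versus $w < Z/2$ throughout, which is what your deductions in fact use, so the fix is purely presentational.
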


\begin{proof}
	Recall that the target density $p(\theta)$ is defined as $p(\theta) = q(\theta) / Z$.
	Using the proposal density $u(\theta)$, we rewrite the MMD of interest $d(p, \delta_N)$ as follows:
	\begin{align}
		d(p, \delta_N) = \sup_{\| f \|_{\H_\kappa} \le 1} \underbrace{ \left| \frac{ 1 }{ Z } \int_\Theta f(\theta) \frac{ q(\theta) }{ u(\theta) } u(\theta) d \theta - \sum_{i=1}^{N} w_n^* f(\theta_n^*) \right| }_{ =: (*) } \quad \text{where} \quad w_n^* = \frac{ \frac{1}{N} \frac{ q(\theta_n^*) }{ u(\theta_n^*) } }{ \frac{1}{N} \sum_{n=1}^{N} \frac{ q(\theta_n^*) }{ u(\theta_n^*) } } .
	\end{align} 
	For better presentation, define a function $h_f(\theta) := f(\theta) q(\theta) / u(\theta)$ for each $f$ to see that
	\begin{align}
		(*) = \left| \frac{ \int_\Theta h_f(\theta) u(\theta) d \theta }{ Z } - \frac{\frac{1}{N} \sum_{n=1}^{N} h_f(\theta_n^*) }{ \frac{1}{N} \sum_{n=1}^{N} \frac{ q(\theta_n^*) }{ u(\theta_n^*) } } \right|
	\end{align}
	In addition, define two constants $Z_{M+N}$ and $Z_N^*$ by, respectively,
	\begin{align}
		Z_{M+N} := \frac{1}{N+M} \sum_{n=1}^{M+N} \frac{ q(\theta_n) }{ u(\theta_n) } \qquad \text{and} \qquad Z_N^* := \frac{1}{N} \sum_{n=1}^{N} \frac{ q(\theta_n^*) }{ u(\theta_n^*) }  .
	\end{align} 
	By the triangle inequality, the term $(*)$ is upper-bounded as follows:
	\begin{align}
		(*) & \le \underbrace{ \left| \frac{ \int_\Theta h_f(\theta) u(\theta) d \theta }{Z} - \frac{ \frac{1}{N+M} \sum_{n=1}^{N + M} h_f(\theta_n) }{Z_{N+M}} \right| }_{ =: (*_1) } + \underbrace{ \left| \frac{\frac{1}{N+M} \sum_{n=1}^{N + M} h_f(\theta_n) }{Z_{N+M}} - \frac{\frac{1}{N} \sum_{n=1}^{N} h_f(\theta_n^*) }{ Z_N^* } \right| }_{ =: (*_2) }
	\end{align}
	where $\{ \theta_n \}_{n=1}^{N+M}$ is the proposal sequence.
	For the term $(*_1)$, by the triangle inequality again,
	\begin{align}
		(*_1) & \le \underbrace{ \left| \frac{ \int_\Theta h_f(\theta) u(\theta) d \theta }{ Z } - \frac{ \frac{1}{N+M} \sum_{n=1}^{N + M} h_f(\theta_n) }{ Z } \right| }_{ = \frac{1}{Z} I_{M+N}(f) =: (*_{11}) } + \underbrace{ \left| \frac{ \frac{1}{N+M} \sum_{n=1}^{N + M} h_f(\theta_n) }{ Z } - \frac{ \frac{1}{N+M} \sum_{n=1}^{N + M} h_f(\theta_n) }{Z_{M+N}} \right| }_{ =: (*_{12}) } .
	\end{align}
	We shall find a further upper bound of the latter term $(*_{12})$.
	The term $(*_{12})$ is upper bounded as
	\begin{align}
		(*_{12}) & = \left| Z - Z_{N+M} \right| \left| \frac{ \frac{1}{N+M} \sum_{n=1}^{N + M} h_f(\theta_n) }{ Z Z_{N+M} } \right| \le \frac{1}{Z} \left| Z - Z_{N+M} \right| \underbrace{ \frac{ \frac{1}{N+M} \sum_{n=1}^{N + M} | h_f(\theta_n) | }{ \frac{1}{N+M} \sum_{n=1}^{N + M} \frac{ q(\theta_n) }{ u(\theta_n) } } }_{ =: (\star) } .
	\end{align}
	Here, it follows from the definition of $h_f$ and \Cref{lem:rk_bound} that we have $| h_f(\theta) | \le q(\theta) / u(\theta)$, where no absolute value is needed in the right-hand side since $q$ and $u$ are positive.
	This immediately implies that the term $(\star)$ satisfies the upper bound $(\star) \le 1$.
	Therefore, we have
	\begin{align}
		(*_{12}) & \le \frac{1}{Z} \left| Z - Z_{N+M} \right| = \frac{1}{Z} \left| \int_\Theta \frac{ q(\theta) }{ u(\theta) } u(\theta) d\theta - \frac{1}{N+M} \sum_{n=1}^{M+N} \frac{ q(\theta_n) }{ u(\theta_n) } \right| = \frac{1}{Z} I_{M+N}(1) .
	\end{align}
	Finally, we shall find an upper bound of the term $(*_2)$.
	By the triangle inequality, 
	\begin{align}
		(*_2) & \le \underbrace{ \left| \frac{\frac{1}{N+M} \sum_{n=1}^{N + M} h_f(\theta_n) }{Z_{N+M}} - \frac{ \frac{1}{M + N} \sum_{n=1}^{N} h_f(\theta_n^*) }{Z_{N+M}} \right| }_{=:(*_{21})} + \underbrace{ \left| \frac{ \frac{1}{M + N} \sum_{n=1}^{N} h_f(\theta_n^*) }{Z_{N+M}} - \frac{ \frac{1}{N} \sum_{n=1}^{N} h_f(\theta_n^*) }{ Z_N^* } \right| }_{=:(*_{22})} .
	\end{align}
	Notice that, by construction, the proposal sequence $\{ \theta_n \}_{n=1}^{M + N}$ can be disjointly decomposed into the samples $\{ \theta_n^* \}_{n=1}^{N}$ obtained after the $N$-th iteration and the selection pool $\mathcal{S}_{N+1}$ at the $(N+1)$-th iteration.
	In other words, the proposal sequence $\{ \theta_n \}_{n=1}^{M + N}$ is a union of the $N$ samples $\{ \theta_n^* \}_{n=1}^{N}$ and the selection pool $\mathcal{S}_{N+1}$ at the $(N+1)$-th iteration. 
	Hence, the term $(*_{21})$ can be expressed as
	\begin{align}
		(*_{21}) & = \left| \frac{ \sum_{n=1}^{N + M} h_f(\theta_n) - \sum_{n=1}^{N} h_f(\theta_n^*) }{ (M+N) Z_{M+N} } \right| = \left| \frac{ \sum_{\theta \in \mathcal{S}_{N+1}} h_f(\theta) }{ \sum_{n=1}^{M+N} q(\theta_n) / u(\theta_n) } \right| \le \frac{ \sum_{\theta \in \mathcal{S}_{N+1}} q(\theta) / u(\theta) }{ \sum_{n=1}^{M+N} q(\theta_n) / u(\theta_n) }
	\end{align}
	where we used the fact that $| h_f(\theta) | \le q(\theta) / u(\theta)$ again.
	Since $q$ and $u$ are both positive and continuous, the ratio $q(\cdot) / u(\cdot)$ is also positive and continuous.
	Since $\Theta$ is compact, the extreme value theorem implies that $q(\cdot) / u(\cdot)$ admits some maximum value $U$ and minimum value $L$ over $\Theta$, where $U \ge L > 0$ due to the positivity. 
	Notice also that, by construction, the selection pool $\S_{N+1}$ always consists of $M$ points across the iterations.
	These two facts provide an upper bound $\sum_{\theta \in S_{N+1}} q(\theta) / u(\theta) \le U M$ and a lower bound $\sum_{n=1}^{M+N} q(\theta_n) / u(\theta_n) \ge L (M + N)$.
	Therefore
	\begin{align}
		(*_{21}) & \le \frac{U}{L} \frac{M}{M + N} .
	\end{align}
	We use the same argument for the term $(*_{22})$.
	The term $(*_{22})$ can be expressed as
	\begin{align}
		(*_{22}) & = \left| \frac{ \sum_{n=1}^{N} h_f(\theta_n^*) }{ (M + N) Z_{N+M} } - \frac{ \sum_{n=1}^{N} h_f(\theta_n^*) }{ N Z_N^* } \right| \\
		& = \frac{\left| (M + N) Z_{N+M} - N Z_N^* \right|}{(M+N) Z_{N+M}} \left| \frac{ \sum_{n=1}^{N} h_f(\theta_n^*) }{ N Z_n } \right| \\
		& = \frac{\left| \sum_{n=1}^{M+N} q(\theta_n) / u(\theta_n) - \sum_{n=1}^{N} q(\theta_n^*) / u(\theta_n^*) \right| }{ \sum_{n=1}^{M+N} q(\theta_n) / u(\theta_n) } \left| \frac{ \sum_{n=1}^{N} h_f(\theta_n^*) }{ \sum_{n=1}^{N} q(\theta_n^*) / u(\theta_n^*) } \right| .
	\end{align}
	Since the proposal sequence $\{ \theta_n \}_{n=1}^{M + N}$ can be disjointly decomposed into the samples $\{ \theta_n^* \}_{n=1}^{N}$ and the selection pool $\mathcal{S}_{N+1}$, the same argument used for the term $(*_{21})$ provides that
	\begin{align}
		(*_{22}) & \le \frac{ \sum_{\theta \in \S_{N+1}} q(\theta) / u(\theta) }{ \sum_{n=1}^{M+N} q(\theta_n) / u(\theta_n) } \le \frac{U}{L} \frac{M}{M + N} .
	\end{align}
	By all the arguments above, the MMD of interest $d(p, \delta_n)$ is bounded as
	\begin{align}
		d(p, \delta_n) & \le \sup_{\| f \|_{\H_\kappa} \le 1} (*_{11}) + \sup_{\| f \|_{\H_\kappa} \le 1} (*_{12}) + \sup_{\| f \|_{\H_\kappa} \le 1} (*_{21}) + \sup_{\| f \|_{\H_\kappa} \le 1} (*_{22}) \\
		& \le \frac{1}{Z} \sup_{\| f \|_{\H_\kappa} \le 1} I_{M+N}(f) + \frac{1}{Z} I_{M+N}(1) + 2 \frac{U}{L} \frac{M}{N + M} .
	\end{align}
	Setting $B_1 = 1 / Z$, $B_2 = 1 / Z$, and $B_3 = 2 U / L$ completes the proof.
\end{proof}

The second technical lemma is an extension of the celebrated Koksma-Hlawka inequality to the domain $\Theta$.
Recall that $\operatorname{D}^*$ is the star discrepancy introduced in \Cref{sec:theory} and $\partial_{1:d}$ denotes the mixed partial derivative.
Recall also the multi-index notation.

\begin{lemma} \label{lem:kh}
	Let $g$ be a function on $\Theta$ s.t.~the first mixed partial derivative $\partial_{1:d} g(\theta)$ exists.
	Let $\{ \theta_n \}_{n=1}^{K}$ be a sequence of $K$ arbitrary points in $\Theta$.
	We have
	\begin{align}
		 \left| \int_\Theta g(\theta) d \theta - \frac{\operatorname{Vol}(\Theta)}{K} \sum_{n=1}^{K} g(\theta_n) \right| \le \operatorname{Vol}(\Theta) \operatorname{V}( g ) \operatorname{D}^*( \{ \theta_n \}_{n=1}^{K} ) 
	\end{align}
	where $V(g)$ is the Hardy-Krause variation of the function $g$ defined as follows:
	\begin{align}
		V(g) := \sum_{\alpha \ne 0} \int_{\Theta^\alpha} \left| \partial^\alpha g(\theta_*^\alpha) \right| d \theta^\alpha 
	\end{align}
	where $\alpha \ne 0$ is the binary multi-index $\alpha$ whch is not zero vector $(0, \dots, 0)$.
\end{lemma}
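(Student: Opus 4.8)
The plan is to reduce the statement to the classical Koksma--Hlawka inequality on the unit cube $[0,1]^d$ via the affine bijection $\tau: [0,1]^d \to \Theta$ already introduced for \Cref{lem:sd}, and then transfer all three ingredients---the integration error, the star discrepancy, and the Hardy--Krause variation---back to $\Theta$. First I would set $\tilde{g} := g \circ \tau$ on $[0,1]^d$ and $\eta_n := \tau^{-1}(\theta_n)$. Writing $\tau(\eta) = A + B\eta$ as in \Cref{def:halton}, the change of variables gives $\int_\Theta g(\theta)\, d\theta = |\det B| \int_{[0,1]^d} \tilde{g}(\eta)\, d\eta = \operatorname{Vol}(\Theta) \int_{[0,1]^d} \tilde{g}(\eta)\, d\eta$, since $|\det B| = \prod_{i=1}^d (b_{(i)} - a_{(i)}) = \operatorname{Vol}(\Theta)$. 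The sum is invariant under $\tau$, i.e.\ $\sum_{n=1}^K g(\theta_n) = \sum_{n=1}^K \tilde{g}(\eta_n)$, so the left-hand side equals $\operatorname{Vol}(\Theta)$ times the integration error of $\tilde{g}$ against the empirical measure on $[0,1]^d$.

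Second, I would invoke the classical Koksma--Hlawka inequality \citep[see e.g.][]{Niederreiter1992} on the unit cube to bound that error by $V(\tilde{g})\, \operatorname{D}^\dagger(\{\eta_n\}_{n=1}^K)$, where $\operatorname{D}^\dagger$ is the unit-cube star discrepancy of \eqref{eq:star_discrepancy_0}. \Cref{lem:sd} then replaces $\operatorname{D}^\dagger(\{\eta_n\}_{n=1}^K)$ by $\operatorname{D}^*(\{\theta_n\}_{n=1}^K)$ at no cost. It remains only to show $V(\tilde{g}) = V(g)$, with $V$ understood through the mixed-derivative expression in the statement.

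Third, this variation invariance is the core of the argument. For each binary multi-index $\alpha \neq 0$, the chain rule applied to $\tilde{g}(\eta) = g(A + B\eta)$ yields $\partial^\alpha \tilde{g}(\eta) = \big(\prod_{i:\alpha_i = 1}(b_{(i)} - a_{(i)})\big)\, (\partial^\alpha g)(\tau(\eta))$, contributing one factor $(b_{(i)} - a_{(i)})$ per differentiated coordinate. Since $\tau$ sends the right boundary $\eta_{(i)} = 1$ to $\theta_{(i)} = b_{(i)}$, the anchored point $\eta_*^\alpha$ maps to $\theta_*^\alpha$, so the non-$\alpha$ coordinates are fixed consistently. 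A further change of variables $\theta^\alpha = \tau(\eta^\alpha)$ in the integral over $\Theta^\alpha$ supplies a Jacobian $\prod_{i:\alpha_i=1}(b_{(i)} - a_{(i)})^{-1}$, which exactly cancels the factor from the derivative. Summing the equal terms over $\alpha \neq 0$ gives $V(\tilde{g}) = V(g)$, and chaining the three steps produces the claimed bound.

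The main obstacle I anticipate is the bookkeeping in the third step: one must verify that the two opposite Jacobian factors cancel term-by-term and that the boundary-anchoring convention defining $\theta_*^\alpha$ (fixing each non-active coordinate to its right boundary) is exactly the image under $\tau$ of the unit-cube anchoring at $1$. Care is also needed to cite the classical Koksma--Hlawka inequality in precisely the derivative-based, corner-anchored Hardy--Krause form used here, so that the transferred variation $V(\tilde g)$ matches the definition of $V(g)$ summand-by-summand rather than merely up to a constant.
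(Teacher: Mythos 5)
Your proposal is correct and follows essentially the same route as the paper's proof: reduce to the unit cube via the affine bijection $\tau$, apply the classical Koksma--Hlawka inequality there, transfer the discrepancy back with \Cref{lem:sd}, and verify that the Hardy--Krause variation is invariant because the chain-rule factor $\prod_{i:\alpha_i=1}(b_{(i)}-a_{(i)}) = \operatorname{Vol}(\Theta^\alpha)$ from differentiating $g \circ \tau$ is exactly cancelled by the Jacobian of the change of variables in each summand, with the unit-cube anchoring at $1$ mapping to the right-boundary anchoring $\theta_*^\alpha$. The bookkeeping you flag as the main obstacle is precisely what the paper's proof carries out, and it works just as you describe.
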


\begin{proof}
	Since $\Theta$ is hyperrectangular, $\Theta$ can be expressed as the product $[a_1, b_1] \times \cdots \times [a_d, b_d]$ of some coordinatewise intervals $[ a_i, b_i ]$ for each $i = 1, \dots, d$.
	There exists a bijective affine map $\tau$ between the unit cube $[0, 1]^d$ and the domain $\Theta$.
	Any point $\theta$ in $\Theta$ can be expressed as the image of some point $\eta$ in $[0, 1]^d$ under the map $\tau$.
	For each $i = 1, \dots, d$, let $\tau_i: [0, 1] \to [a_i, b_i]$ be a map s.t.~$\theta_{(i)} = \tau_i( \eta_{(i)} ) = a_i + ( b_i - a_i ) \eta_{(i)}$, where $\theta_{(i)}$ and $\eta_{(i)}$ denote the $i$-th coordinate of $\theta$ and $\eta$.
	Then, clearly, the bijective map $\tau$ is given by $\theta = \tau(\eta) = ( \tau_1(\eta_{(1)}), \dots, \tau_d(\eta_{(d)}) )$.
	Let $\{ \eta_n \}_{n=1}^{K}$ be a sequence in $[0, 1]^d$ s.t.~the image under the map $\tau$ corresponds to the original sequence $\{ \theta_n \}_{n=1}^{K}$ in $\Theta$.
	This means that $\theta_n = \tau(\eta_n)$ for each $n = 1, \dots, K$.
	By the change of variables, we have
	\begin{align}
		\left| \int_\Theta g(\theta) d \theta - \frac{\operatorname{Vol}(\Theta)}{K} \sum_{n=1}^{K} g(\theta_n) \right| = \operatorname{Vol}(\Theta) \underbrace{ \left| \int_{[0, 1]^d} g \circ \tau(\eta) d \eta - \frac{1}{K} \sum_{n=1}^{K} g \circ \tau(\eta_n) \right| }_{ =: (*) }
	\end{align}
	where we used that $\prod_{i=1}^{d} | (\partial / \partial \eta_{(i)} ) \tau_i(\eta_{(i)}) | = \prod_{i=1}^{d} | b_i - a_i | = \operatorname{Vol}(\Theta)$ for the change of variables.
	Recall the star discrepancy $\operatorname{D}^\dagger$ for the unit cube $[0, 1]^d$ defined in \eqref{eq:star_discrepancy_0} in \Cref{apx:premilinary_sd}.
	It follows from the Koksma-Hlawka inequality \citep[][Proposition 2.18]{Dick2010} that
	\begin{align}
		(*) \le V^\dagger(g \circ \tau) \operatorname{D}^\dagger( \{ \eta_n \}_{n=1}^{K} ) . \label{eq:kh_bound_v}
	\end{align}
	where $V^\dagger(g \circ \tau)$ is the Hardy-Krause variation of the function $g \circ \tau$ defined by
	\begin{align}
		V^\dagger(g \circ \tau) := \sum_{\alpha \ne 0} \int_{[0, 1]^\alpha} \left| \partial^\alpha g \circ \tau(\eta_1^\alpha) \right| d \eta^\alpha . \label{eq:v_dagger}
	\end{align}
	Here $\eta_1^\alpha$ denotes a $d$-dimensional vector whose $i$-th coordinate is equal to that of $\eta$ if $\alpha_i = 1$ and otherwise is fixed to a constant $1$.
	We shall show that $V^\dagger(g \circ \tau) = V(g)$.
	Let $I_\alpha$ be a set of indices $i$ s.t.~$\alpha_i = 1$, by which the mixed partial derivative $\partial^\alpha$ can be expressed as $\partial^\alpha = \prod_{i \in I_\alpha} \partial_i$.
	By the chain rule and by that the map $\tau(\eta) = ( \tau_1(\eta_{(1)}), \dots, \tau_d(\eta_{(d)}) )$ is coordinatewise, we have
	\begin{align}
		\partial^\alpha g \circ \tau(\eta_1^\alpha) = \partial^\alpha g(\theta) \big|_{\theta = \tau(\eta_1^\alpha)} \left( \prod_{i \in I_\alpha} \partial_i \tau_i(\eta_{(i)}) \right) = \partial^\alpha g(\theta) \big|_{\theta = \tau(\eta_1^\alpha)} \cdot \operatorname{Vol}(\Theta^\alpha) \label{eq:g_tau_eq}
	\end{align}
	where we used $\prod_{i \in I_\alpha} \partial_i \tau_i(\eta_{(i)}) = \prod_{i \in I_\alpha} b_i - a_i = \operatorname{Vol}(\Theta^\alpha)$.
	Denote by $h(\theta)$ the mixed partial derivative $\partial^\alpha g(\theta)$ to see that $h(\tau(\eta_1^\alpha)) = \partial^\alpha g(\theta) |_{\theta = \tau(\eta_1^\alpha)}$.
	Combining \eqref{eq:v_dagger} and \eqref{eq:g_tau_eq} gives that
	\begin{align}
		V^\dagger(g \circ \tau) = \sum_{\alpha \ne 0} \int_{[0, 1]^\alpha} \operatorname{Vol}(\Theta^\alpha) \left| h(\tau(\eta_1^\alpha)) \right| d \eta^\alpha .
	\end{align}
	Now recall that $\theta_*^\alpha$ is a $d$-dimensional vector whose $i$-th coordinate is equal to that of $\theta$ if $\alpha_i = 1$ and otherwise is fixed to the constant $b_i$ that is the right boundary of the hyperrectangular domain $\Theta$ at the $i$-th axis.
	We can express $\theta_*^\alpha$ as $\theta_*^\alpha = \tau(\eta^\alpha_1)$ given that $\theta = \tau(\eta)$.
	We apply the change of variables to $\eta^\alpha$ to show that $V^\dagger(g \circ \tau) = V(g)$ as intended.
	Specifically, we apply the transform $\theta_{(i)} = \tau_i( \eta_{(i)} )$ for each coordinate $i$ s.t.~$\alpha_i = 1$, to see that
	\begin{align}
		V^\dagger(g \circ \tau) = \sum_{\alpha \ne 0} \int_{\Theta^\alpha} \operatorname{Vol}(\Theta^\alpha) \left| h(\tau(\eta_1^\alpha)) \right| d \eta^\alpha = \sum_{\alpha \ne 0} \int_{\Theta^\alpha} \left| \partial^\alpha g(\theta_*^\alpha) \right| d \theta^\alpha = V(g) .
	\end{align}
	where $\operatorname{Vol}(\Theta^\alpha)$ is canceled by the change of variables.
	Therefore, the inequality \eqref{eq:kh_bound_v} is rewritten as
	\begin{align}
		(*) \le V(g) \operatorname{D}^\dagger( \{ \eta_n \}_{n=1}^{K} )
	\end{align}
	By \Cref{lem:sd}, we have $\operatorname{D}^\dagger\left( \{ \eta_n \}_{n=1}^{K} \right) = \operatorname{D}^*\left( \{ \theta_n \}_{n=1}^{K} \right)$, which concludes the proof.
\end{proof}

Now we present the main proof:

\begin{proof}
	First, we apply the upper bound of \Cref{lem:second} to see that
	\begin{align}
		d(p, \delta_N) \le B_1 \underbrace{ \sup_{\| f \|_{\mathcal{H}_k} \le 1} I_{M+N}(f) }_{ =:(*_1) } + B_2 \underbrace{\vphantom{\sup_{\| f \|_{\mathcal{H}_k} \le 1}} I_{M+N}(1) }_{ =:(*_2) } + B_3 \frac{M}{N+M} .
	\end{align}
	We shall upper bound the term $(*_1)$ and $(*_2)$ in what follows.
	
	\vspace{5pt}
	\noindent
	\textbf{Upper Bounding $(*_1)$:}
	Since $u$ is uniform, the term $(*_1)$ can be bounded as follows:
	\begin{align}
		(*_1) & = \sup_{\| f \|_{\H_\kappa \le 1}} \left| \int_\Theta f(\theta) q(\theta) d \theta - \frac{\operatorname{Vol}(\Theta)}{N+M} \sum_{n=1}^{N + M} f(\theta_n) q(\theta_n) \right| \le \operatorname{Vol}(\Theta) \underbrace{ \sup_{\| f \|_{\H_\kappa \le 1}} V( f \cdot q ) }_{ =: (*_{11}) } \operatorname{D}^*( \{ \theta_n \}_{n=1}^{M+N} ) 
	\end{align}
	where the inequality follows from \Cref{lem:kh}.
	We shall upper bound the term $(*_{11})$, i.e., the supremum of the Hardy-Krause variation $V(f \cdot q)$.
	It is easy to see from the definition of $V$ that
	\begin{align}
		V( f \cdot q ) & \le \sum_{\alpha \ne 0} \operatorname{Vol}(\Theta^\alpha) \sup_{\theta \in \Theta^\alpha} \left| \partial^\alpha ( f(\theta_*^\alpha) q(\theta_*^\alpha) ) \right| \le \sum_{\alpha \ne 0} \operatorname{Vol}(\Theta^\alpha) \underbrace{ \sup_{\theta \in \Theta} \left| \partial^\alpha ( f(\theta) q(\theta) ) \right| }_{ =: (*_{12}) } .
	\end{align}
	Similarly to the product rule of derivatives, mixed partial derivatives can be extended by the general Leibniz rule.
	The general Leibniz rule slightly simplifies for binary multi-indices, in which case the mixed partial derivative of the product in the term $(*_{12})$ can be expressed as follows:
	\begin{align}
		\partial^\alpha ( f(\theta) q(\theta) ) = \sum_{\beta \le \alpha} ( \partial^{\alpha - \beta} f(\theta) ) ( \partial^{\beta} q(\theta) )
	\end{align}
	where $\beta \le \alpha$ means a binary multi-index $\beta$ s.t.~$\beta_i \le \alpha_i$ for all $i = 1, \dots, d$.
	By \Cref{lem:deriv}, we have $\sup_{\theta \in \Theta} \left| \partial^{\alpha-\beta} f(\theta) \right| \le 1$ for all $\alpha - \beta$ and all $\| f \|_{\H_\kappa} \le 1$.
	By the assumption that $\partial_{1:d} q(\theta)$ is uniformly continuous, $\partial^\beta q(\theta)$ is uniformly continuous for any binary multi-index $\beta$.
	By the continuous extension theorem, $\partial^\beta q(\theta)$ can be continuously extended to the boundary of the domain $\Theta$.
	This means that, by the extreme value theorem, there exists some constant $C_q^\beta$ s.t.~$\sup_{\theta \in \Theta} | \partial^\beta q(\theta) | \le C_q^\beta$ for every binary multi-index $\beta$.
	By these two facts, we have
	\begin{align}
		(*_{12}) \le \sup_{\theta \in \Theta} \sum_{\beta \le \alpha} \left| ( \partial^{\alpha - \beta} f(\theta) ) ( \partial^{\beta} q(\theta) ) \right| \le \sum_{\beta \le \alpha} C_q^\beta .
	\end{align}
	Since this bound of $(*_{12})$ holds for all $\| f \|_{\H_\kappa} \le 1$, the original term $(*_{11})$ is bounded as
	\begin{align}
		(*_{11}) \le \sum_{\alpha \ne 0} \operatorname{Vol}(\Theta^\alpha) \sum_{\beta \le \alpha} C_q^\beta =: A_1
	\end{align}
	where we denote the constant in the right-hand side by $A_1$.
	
	\vspace{5pt}
	\noindent
	\textbf{Upper Bounding $(*_2)$:}
	Since $u$ is uniform, the term $(*_2)$ can be bounded as follows:
	\begin{align}
		(*_2) = \left| \int_\Theta q(\theta) d \theta - \frac{\operatorname{Vol}(\Theta)}{N+M} \sum_{n=1}^{N + M} q(\theta_n) \right| \le \operatorname{Vol}(\Theta) V( q ) \operatorname{D}^*( \{ \theta_n \}_{n=1}^{M+N} ) 
	\end{align}
	where the inequality follows from \Cref{lem:kh}.
	It is easy to see from the definition of $V$ that
	\begin{align}
		V( q ) & \le \sum_{\alpha \ne 0} \operatorname{Vol}(\Theta^\alpha) \sup_{\theta \in \Theta^\alpha} \left| \partial^\alpha q(\theta_*^\alpha) \right|  \le \sum_{\alpha \ne 0} \operatorname{Vol}(\Theta^\alpha) \sup_{\theta \in \Theta} \left| \partial^\alpha q(\theta) \right| .
	\end{align}
	By the same argument as that used for upper-bounding the first term $(*_1)$, there exists some constant $C_q^\alpha$ s.t.~$\sup_{\theta \in \Theta} | \partial^\alpha q(\theta) | \le C_q^\alpha $ for every binary multi-index $\alpha$.
	By this facts, we have
	\begin{align}
		V( q ) & \le \sum_{\alpha \ne 0} \operatorname{Vol}(\Theta^\alpha) C_q^\alpha =: A_2
	\end{align}
	where we denote the constant in the right-hand side by $A_2$.
	
	\vspace{5pt}
	\noindent
	\textbf{Overall Bound:}
	Plugging the derived upper bounds of $(*_1)$ and $(*_2)$ in, we have 
	\begin{align}
		(*) & \le \operatorname{Vol}(\Theta) A_1 B_1 \operatorname{D}^*( \{ \theta_n \}_{n=1}^{M+N} ) + \operatorname{Vol}(\Theta) A_2 B_2 \operatorname{D}^*( \{ \theta_n \}_{n=1}^{M+N} ) + B_3 \frac{M}{N+M} .
	\end{align}
	Setting $C_1 = \operatorname{Vol}(\Theta) A_1 B_1 +  \operatorname{Vol}(\Theta) A_2 B_2$ and $C_2 = B_3$ concludes the proof.
\end{proof}

\subsection{Proof of \Cref{col:convergence_Halton}} \label{apx:proof_convergence_Halton}

\begin{proof}
	Recall that $\operatorname{D}^\dagger$ is the star discrepancy for the unit cube $[0, 1]^d$ defined in \eqref{eq:star_discrepancy_0} in \Cref{apx:premilinary_sd}.
	The scaled Halton sequence $\{ \theta_i \}_{i=1}^{\infty}$ is defined as an image of the original Halton sequence $\{ \eta_i \}_{i=1}^{\infty}$ under the bijective affine map.
	By \Cref{lem:sd}, we have $\operatorname{D}^*\left( \{ \theta_n \}_{n=1}^{K} \right) = \operatorname{D}^\dagger\left( \{ \eta_n \}_{n=1}^{K} \right)$ for any number of points $K$.
	By Theorem 3.6 of \cite{Niederreiter1992}, there exists some constant $C > 0$ s.t.~the star discrepancy $\operatorname{D}^\dagger$ of the original Halton sequence $\{ \eta_n \}_{n=1}^{K}$ satisfies that
	\begin{align}
		\operatorname{D}^\dagger\left( \{ \eta_n \}_{n=1}^{K} \right) \le C \frac{ \log(K)^d }{ K } .
	\end{align}
	Combining this bound with the result of \Cref{thm:convergence} concludes the proof.
\end{proof}

\subsection{Proof of \Cref{col:weak_convergence}} \label{apx:proof_weak_convergence}

\begin{proof}
	By assumption, the kernel $\kappa$ is bounded, continuous, and integrally strictly positive definite.
	Hence, the resulting MMD metrizes the weak convergence by Theorem 7 of \cite{Simon-Gabriel2023}.
	By \Cref{col:convergence_Halton}, the MMD between $\delta_N$ and $p$ converges to zero.
	The consistency of the Monte Carlo estimator immediately follows from the definition of the weak convergence.
	Here, since the domain $\Theta$ is compact, the continuous integrand $f$ is bounded by extreme value theorem.
\end{proof}

\subsection{Proof of \Cref{thm:concentration}} \label{apx:concetration}

The proof uses \Cref{lem:second} presented in \Cref{apx:convergence}, which holds for any proposal sequence.

\begin{proof}
	First, we apply the upper bound of \Cref{lem:second} to see
	\begin{align}
		d(p, \delta_N) \le B_1 \underbrace{ \sup_{\| f \|_{\mathcal{H}_k} \le 1} I_{M+N}(f) }_{ =:(*_1) } + B_2 \underbrace{\vphantom{\sup_{\| f \|_{\mathcal{H}_k} \le 1}} I_{M+N}(1) }_{ =:(*_2) } + B_3 \frac{M}{N+M} .
	\end{align}
	Define $g(\theta) := q(\theta) / u(\theta)$ to see that
	\begin{align}
		I_{M+N}(f) & = \left| \int_\Theta f(\theta) g(\theta) u(\theta) d \theta - \frac{1}{N+M} \sum_{n=1}^{N + M} f(\theta_n) g(\theta_n) \right| .
	\end{align}
	We shall upper bound the term $(*_1)$ and $(*_2)$ in what follows.
	Denote by $u_{M+N}$ the empirical distribution of the proposal sequence $\{ \theta_n \}_{n=1}^{M+N}$ for better presentation.
	
	\vspace{5pt}
	\noindent
	\textbf{Upper Bounding $(*_1)$:}
	Let $\mathcal{H}$ be a set of functions s.t.~$\mathcal{H} := \{ \theta \mapsto f(\theta) g(\theta) \mid \| f \|_{\H_\kappa} \le 1 \} $ for the function $g$.
	Then, the term $(*_1)$ can be expressed as
	\begin{align}
		(*_1) & = \sup_{\| f \|_{\H_\kappa} \le 1} \left| \E_{\theta \sim u}\left[ f(\theta) g(\theta) \right] - \E_{\theta \sim u_{M+N}}\left[ f(\theta) g(\theta) \right] \right| = \sup_{h \in \mathcal{H}} \left| \E_{\theta \sim u}\left[ h(\theta) \right] - \E_{\theta \sim u_{M+N}}\left[ h(\theta) \right] \right| .
	\end{align}
	Recall that the proposal sequence $\{ \theta_n \}_{n=1}^{M+N}$ is i.i.d.~random samples from the proposal density $u$.
	Let $\{ \epsilon_n \}_{n=1}^{M+N}$ be i.i.d.~random variables taking the value in $\{ +1, -1 \}$ with equiprobability $1 / 2$.
	The Rademacher complexity $\mathcal{R}_{M+N}( \mathcal{H} )$ of the set $\H$ given the random variables $\{ \theta_n \}_{n=1}^{M+N}$ is defined by
	\begin{align}
		\mathcal{R}_{M+N}( \mathcal{H} ) & := \E_{\theta_1, \dots, \theta_{M+N}} \E_{\epsilon_1, \dots, \epsilon_{M+N}} \left[ \sup_{h \in \H} \left| \frac{1}{M+N} \sum_{n=1}^{M+N} \epsilon_n h(\theta_n) \right| \right] \\
		& = \E_{\theta_1, \dots, \theta_{M+N}} \E_{\epsilon_1, \dots, \epsilon_{M+N}} \left[ \sup_{\| f \|_{\H_\kappa} \le 1} \left| \frac{1}{M+N} \sum_{n=1}^{M+N} \epsilon_n f(\theta_n) g(\theta_n) \right| \right] .
	\end{align}
	\Cref{lem:rk_bound} implies that $| f(\theta) | \le 1$ uniformly for all functions $f$ s.t.~$\| f \|_{\H_\kappa} \le 1$.
	It is thus straightforward to see that any function $h \in \H$ is uniformly bounded by $b := \sup_{\theta \in \Theta} | g(\theta) |$ because 
	\begin{align}
		\sup_{\theta \in \Theta}| h(\theta) | \le \sup_{\theta \in \Theta} | f(\theta) | \sup_{\theta \in \Theta}| g(\theta) | \le \sup_{\theta \in \Theta} | g(\theta) | = b .
	\end{align}
	It then follows from Theorem 4.10 of \cite{Wainwright2019} that
	\begin{align}
		\mathbb{P}\left( (*_1) \le 2 \mathcal{R}_{M+N}( \mathcal{H} ) + \delta \right) \ge 1 - \exp\left( - \frac{ (M+N) \delta^2 }{ 2 b^2 } \right) \label{eq:bounded_concentration}
	\end{align}
	where the probability $\mathbb{P}$ is taken with respect to the i.i.d.~random samples $\{ \theta_n \}_{n=1}^{M+N}$.
	We shall upper bound the Rademacher complexity next.
	We apply the reproducing property of $f \in \H_\kappa$ to see
	\begin{align}
		\mathcal{R}_{M+N}( \mathcal{H} ) & = \E_{\theta_1, \dots, \theta_{M+N}} \E_{\epsilon_1, \dots, \epsilon_{M+N}} \left[ \sup_{ \| f \|_{\H_\kappa} \le 1 } \left| \frac{1}{M+N} \sum_{n=1}^{M+N} \epsilon_n \langle f(\cdot), \kappa(\cdot, \theta_n) \rangle_{\H_\kappa} g(\theta_n) \right| \right] \\
		& = \frac{1}{M+N} \E_{\theta_1, \dots, \theta_{M+N}} \E_{\epsilon_1, \dots, \epsilon_{M+N}} \left[ \sup_{ \| f \|_{\H_\kappa} \le 1 } \left| \left\langle f(\cdot), \sum_{n=1}^{M+N} \epsilon_n \kappa(\cdot, \theta_n) g(\theta_n) \right\rangle_{\H_\kappa} \right| \right] .
	\end{align}
	We apply the fact that $\sup_{ \| f \|_{\H_\kappa} \le 1 } | \langle f(\cdot), v(\theta) \rangle_{\H_k} | = \| v(\theta) \|_{\H_k}$ for any $v \in \H_\kappa$, which follows from the condition of the Cauchy-Schwartz inequality under which equality holds, to see that
	\begin{align}
		\mathcal{R}_{M+N}( \mathcal{H} ) & = \frac{1}{M+N} \E_{\theta_1, \dots, \theta_{M+N}} \E_{\epsilon_1, \dots, \epsilon_{M+N}} \left[ \left\| \sum_{n=1}^{M+N} \epsilon_n \kappa(\cdot, \theta_n) g(\theta_n) \right\|_{\H_\kappa} \right] \\
		& \le \frac{1}{M+N} \sqrt{ \E_{\theta_1, \dots, \theta_{M+N}} \E_{\epsilon_1, \dots, \epsilon_{M+N}} \left[ \left\| \sum_{n=1}^{M+N} \epsilon_n \kappa(\cdot, \theta_n) g(\theta_n) \right\|_{\H_\kappa}^2 \right] } \\
		& = \frac{1}{M+N} \sqrt{ \E_{\theta_1, \dots, \theta_{M+N}} \E_{\epsilon_1, \dots, \epsilon_{M+N}} \left[ \sum_{n=1}^{M+N} \sum_{m=1}^{M+N} \epsilon_n \epsilon_m g(\theta_n) \kappa(\theta_n, \theta_m) g(\theta_m) \right] } .
	\end{align}
	In the above, the second inequality follows from the Jensen's inequality and the last equality follows from the reproducing property of the kernel $\kappa$.
	Note that the random variables $\epsilon_n$ and $\epsilon_m$ are independent when $n \ne m$, and that $\epsilon_n \epsilon_n = \epsilon_n^2 = 1$ with probability $1$ by definition.
	Therefore
	\begin{align}
		\mathcal{R}_{M+N}( \mathcal{H} ) & \le \frac{1}{M+N} \sqrt{ \E_{\theta_1, \dots, \theta_{M+N}} \left[ \sum_{n=1}^{M+N} g(\theta_n) \kappa(\theta_n, \theta_n) g(\theta_n) \right] } \\
		& \le \frac{1}{M+N} \sqrt{ (M + N) b^2 } \le \frac{b}{\sqrt{M + N}} 
	\end{align}
	where the second inequality follows from that $\kappa$ is bounded by $1$ and that $g(\theta) \le \sup_{\theta \in \Theta} | g(\theta) | = b$.
	Plugging this upper bound of $\mathcal{R}_{M+N}( \mathcal{H} )$ in the concentration inequality \eqref{eq:bounded_concentration} gives that
	\begin{align}
		\mathbb{P}\left( (*_1) \le \frac{2 b}{\sqrt{M + N}} + \delta \right) \ge 1 - \exp\left( - \frac{ (M+N) \delta^2 }{ 2 b^2 } \right) \ge 1 - 2 \exp\left( - \frac{ (M+N) \delta^2 }{ 2 b^2 } \right)
	\end{align}
	where the trivial inequality in the right-hand side is applied for convenience of subsequent analysis.
	This concentration inequality is equivalent to the following alternative expression
	\begin{align}
		\mathbb{P}\left( (*_1) \le \frac{2 b + \sqrt{2 \log(2 / \epsilon)} b }{\sqrt{M + N}} \right) \ge 1 - \epsilon 
	\end{align}
	where we set $\epsilon := 2 \exp( - (M + N) \delta^2 / (2 b^2) ) > 0$.
	
	\vspace{5pt}
	\noindent
	\textbf{Upper Bounding $(*_2)$:}
	Define a function $F(\theta_1, \dots, \theta_{M+N}) := ( 1 / ( M+N ) ) \sum_{n=1}^{N + M} g(\theta_n)$.
	Since the proposal sequence $\{ \theta_n \}_{n=1}^{M+N}$ is i.i.d.~random variables, the term $(*_2)$ can be expressed as
	\begin{align}
		(*_2) & = \left| \E_{\theta \sim u}\left[ g(\theta) \right] - \E_{\theta \sim u_{M+N}}\left[ g(\theta) \right] \right| = \left| \E_{\theta_1, \dots, \theta_{M+N} \overset{i.i.d.}{\sim} u}[ F(\theta_1, \dots, \theta_{M+N}) ] - F(\theta_1, \dots, \theta_{M+N}) \right| .
	\end{align}
	Our aim is to apply the bounded difference inequality in Corollary 2.21 of \cite{Wainwright2019}.
	The function $F$ admits the `bounded difference' with respect to each $n$-th argument $\theta_n$.
	To see this, fix all arguments $\theta_1, \dots, \theta_{M+N}$, except the $n$-th argument $\theta_n$, and observe for any $n$ that
	\begin{align}
		\left| F(\theta_1, \dots, \theta_n, \dots, \theta_{M+N}) - F(\theta_1, \dots, \theta_n', \dots, \theta_{M+N}) \right| = \frac{1}{M+N} \left| g(\theta_n) - g(\theta_n') \right| \le \frac{2 b}{M+N}
	\end{align}
	where $\theta_n$ and $\theta_n'$ are arbitrary.
	Thus we can apply Corollary 2.21 of \cite{Wainwright2019} to see that
	\begin{align}
		\mathbb{P}\left( (*_2) \le \delta \right) \ge 1 - 2 \exp\left( - \frac{(M+N) \delta^2}{ 2 b^2 } \right)
	\end{align}
	where $\mathbb{P}$ is the probability taken with respect to the i.i.d.~random variables $\{ \theta_n \}_{n=1}^{M+N}$.
	This concentration inequality is equivalent to the following alternative expression
	\begin{align}
		\mathbb{P}\left( (*_2) \le \frac{\sqrt{2 \log(2 / \epsilon)} b }{\sqrt{M + N}} \right) \ge 1 - \epsilon .
	\end{align}
	where we set $\epsilon := 2 \exp( - (M + N) \delta^2 / (2 b^2) ) > 0$.
	
	\vspace{5pt}
	\noindent
	\textbf{Overall Bound:}
	Combining the concentration inequalities of $(*_1)$ and $(*_2)$, we have
	\begin{align}
		\mathbb{P}\left( d(p, \delta_N) \le B_1 \frac{2 b + \sqrt{2 \log(2 / \epsilon)} b }{\sqrt{M + N}} + B_2 \frac{\sqrt{2 \log(2 / \epsilon)} b }{\sqrt{M + N}} + B_3 \frac{M}{N+M} \right) \ge 1 - \epsilon
	\end{align}
	Setting $C_1 = 2 b B_1$, $C_2 = 2 \sqrt{2} b$, and $C_3 = B_3$ concludes the proof.
\end{proof}

\subsection{Proof of \Cref{col:weak_convergence_as}} \label{apx:proof_weak_convergence_as}

\begin{proof}
	As shown in the proof of \Cref{col:weak_convergence}, the MMD of the kernel $\kappa$ metrizes the weak convergence.
	With no loss of generality, suppose that $\epsilon$ is small enough that $\sqrt{\log(2 / \epsilon)} > 1$.
	Let $C$ be a constant larger than $C_1$, $C_2$, and $C_3 M$.
	Since $N \ge 1$ and $M \ge 1$, we have
	\begin{align}
		d(p, \delta_N) \le \frac{C_1 + C_2 \sqrt{\log(2 / \epsilon)}}{\sqrt{N + M}} + C_3 \frac{M}{N + M} \le \frac{C \sqrt{\log(2 / \epsilon)}}{\sqrt{N + M}} =: \delta
	\end{align}
	Thus we have a simplified concentration inequality $\mathbb{P}( d(p, \delta_N) \le \delta) \ge 1 - \epsilon$, which is equivalent to $\mathbb{P}( d(p, \delta_N) > \delta) \le \epsilon$.
	This concentration inequality can be expressed in terms of $\delta$ as follows:
	\begin{align}
		\mathbb{P}\left( d(p, \delta_N) > \delta \right) & \le 2 \exp\left( - \frac{\delta^2}{2 C^2} ( N + M ) \right) .
	\end{align}
	It then follows from the Borel-Cantelli lemma that $d(p, \delta_N)$ converges to zero almost surely in the limit $N \to 0$, meaning that $\delta_N$ weakly converges to $p$ almost surely.
	The consistency of the Monte Carlo estimator immediately follows from the definition of the weak convergence.
	Here, since the domain $\Theta$ is compact, the continuous integrand $f$ is bounded by extreme value theorem.
\end{proof}

\subsection{Proof of \Cref{prop:gp_ujb_bound}} \label{apx:proof_gp_ujb_bound}

\begin{proof}
	Let $(\star) := \int_\Theta | U_n(\theta) - q(\theta) |^2 d \theta$.
	By the definition of $U_n$ and $q$, the term $(\star)$ is written as
	\begin{align}
		(\star) & = \int_\Theta \Big| \E_{f \sim \mathcal{GP}(m_n, k_n)}\left[ \phi\left( f(\theta) \right) \right] - \phi\left( g(\theta) \right) \Big|^2 d \theta = \int_\Theta \Big| \E_{f \sim \mathcal{GP}(m_n, k_n)}\left[ \phi\left( f(\theta) \right) - \phi\left( g(\theta) \right) \right] \Big|^2 d \theta.
	\end{align}
	It then follows from the Jensen's inequality that
	\begin{align}
		(\star) & \le \int_\Theta \bigg( \underbrace{ \E_{f \sim \mathcal{GP}(m_n, k_n)}\Big[ \left| \phi\left( f(\theta) \right) - \phi\left( g(\theta) \right) \right| \Big] }_{ =: (*) } \bigg)^2 d \theta .
	\end{align}
	We shall upper bound the term $(*)$ at each $\theta$.
	The derivative $\phi'$ of $\phi$ is monotonically non-decreasing because $\phi$ is a convex function.
	By the mean value inequality, it holds for all $a, b \in \R$ that
	\begin{align}
		| \phi(a) - \phi(b) | \le \bigg( \sup_{c \in [a, b]} | \phi'( c ) | \bigg) | b - a | \le \max\left( | \phi'(a) |, | \phi'(b) | \right) | b - a |
	\end{align}
	where the second inequality follows from monotonicity of $\phi'$.
	It thus holds at each $\theta$ that
	\begin{align}
		(*) & \le \E_{f \sim \mathcal{GP}(m_n, k_n)}\Big[ \max\left( | \phi'(f(\theta)) |, | \phi'(g(\theta)) | \right) \left| f(\theta) - g(\theta) \right| \Big] \\
		& \le \E_{f \sim \mathcal{GP}(m_n, k_n)}\Big[ \underbrace{ \max\left( \exp(f(\theta)), \exp(g(\theta)) \right) }_{=: (**)} \left| f(\theta) - g(\theta) \right| \Big] 
	\end{align}
	where we used the assumption $| \phi'(\cdot) | \le \exp(\cdot)$ for the last inequality.
	It then follows from the Cauchy-Schwartz inequality applied for the expectation that
	\begin{align}
		(*) \le \left( \E_{f \sim \mathcal{GP}(m_n, k_n)}\left[ (**)^2 \right] \E_{f \sim \mathcal{GP}(m_n, k_n)}\left[ \left| f(\theta) - g(\theta) \right|^2 \right] \right)^{\frac{1}{2}} . \label{eq:gp_exp_ineq_1}
	\end{align}
	Observe the following trivial inequality
	\begin{align}
		(**)^2 = \max\left( \exp(2 f(\theta)), \exp(2 g(\theta)) \right) \le \exp(2 f(\theta)) + \exp(2 g(\theta)) .
	\end{align}
	Notice that, since $f$ is a sample path of the Gaussian process, the point evaluation $f(\theta)$ at each $\theta$ follows the Gaussian distribution $\mathcal{N}(m_n(\theta), k_n(\theta, \theta))$.
	It thus admits the following analytical expectation $\E_{f \sim \mathcal{GP}(m_n, k_n)}\left[ \exp(2 f(\theta)) \right] = \exp( 2 m_n(\theta) + 2 k_n(\theta, \theta) )$.
	This, in turn, implies that
	\begin{align}
		\E_{f \sim \mathcal{GP}(m_n, k_n)}\left[ (**)^2 \right] & \le \exp( 2 m_n(\theta) + 2 k_n(\theta, \theta) ) + \exp(2 g(\theta)) \\
		& \le \underbrace{ \sup_{\theta \in \Theta} \exp( m_n(\theta) + k_n(\theta, \theta) )^2 + \sup_{\theta \in \Theta} \exp( g(\theta) )^2 }_{ =: C } . \label{eq:gp_exp_ineq_2}
	\end{align}
	Combining the bounds \eqref{eq:gp_exp_ineq_1} and \eqref{eq:gp_exp_ineq_2}, we have the overall bound of the original term $(\star)$:
	\begin{align}
		(\star) \le C \int_\Theta \E_{f \sim \mathcal{GP}(m_n, k_n)}\left[ \left| f(\theta) - g(\theta) \right|^2 \right] d \theta = C \E_{f \sim \mathcal{GP}(m_n, k_n)}\left[ \int_\Theta \left| f(\theta) - g(\theta) \right|^2 d \theta \right] 
	\end{align}
	where the order the integral and expectation can be interchanged by Fubini-Tonelli theorem, whenever the right-hand side is finite.
	Finally, it is straightforward to see from the definition of $C$ that we have $C < 1$ if $m_n(\theta) + k_n(\theta, \theta) < (1/2) \log(1/2)$ and $g(\theta) < (1/2) \log(1/2)$.
	This concludes the proof because $-1/2 < \log(\sqrt{1/2})$ which is immediate to verify.
\end{proof}

\section{Simulation Studies} \label{apx:simulation}

This section presents additional simulation studies on BIS.
\Cref{apx:discrete} discusses the importance of the point-selection procedure of BIS in comparison with the randomized BO.
\Cref{apx:choice_phi} demonstrates the performance of BIS for different choices of the function $\phi$.
\Cref{apx:pool_size} illustrates the performance of BIS for different choices of the size of the proposal sequence $M$.

\subsection{Importance of Discrete Optimization with No Revisit} \label{apx:discrete}

In BO, the point at each iteration is selected by the optimization over an infinite set, such as a compact subset of $\R^d$.
In contrast, BIS select the point by the optimization over a discrete set under the constraint that points selected in the past will never be revisited.
The optimization on the discrete set, together with the constraint, assists in efficient exploration of the domain of interest.
It prevents cases where the point to be selected next will be arbitrarily close to the points already selected in the past.
In the long run, therefore, the points selected in BIS do not concentrate on a particular subset of the domain.
This is vital for the convergence guarantee of the importance sampling scheme.

The randomization approach proposed in \cite{Kim2024} extends BO with the aims of approximating posterior densities.
They proposed performing BO whlie adding an extra random point in the training dataset of GP at each iteration.
This means that at each iteration it obtains two points, where one of them is the point acquired by BO and the other is a random point.
They showed the convergence of the GP posterior mean to a target posterior, when the GP is well specified.
However, we shall demonstrate that importance sampling based on the points selected by such a randomized BO fails to recover the target posterior.

\Cref{fig:fig_s31_01} visualizes 100 points obtained by the randomized BO of \cite{Kim2024}, for each benchmark density in \Cref{sec:benchmark}.
At each iteration of the randomized BO, a point from the scaled Halton sequence is used as the random point to add.
Following the experimental setting in \Cref{sec:benchmark}, the first 10 points were randomly selected from the Halton sequence.
It illustrates that the points selected by BO (triangular) gradually concentrate to the maxima of the posterior density, while the random points from the Halton sequence (square) are dispersed over the domain.
\Cref{fig:fig_s31_02} further confirms that importance sampling based on these samples fails to converge to the target posterior.
For each target density, the approximation error of the resulting weighted samples deteriorated after a certain number of iterations. 

The failure in the convergence roots in that points selected by BO concentrates at the mode of the target posterior after some iterations.
Thus, after some iterations, the majority of the samples used for importance sampling are enforced to have one particular value, which is the mode.
Moreover, those majority points sharing the same value are assigned the highest importance weights, since the mode is the point where the posterior probability is maximal by definition.
Therefore, the influence of those majority points concentrating at the mode will be dominant.
Roughly speaking, this means that, after some iterations, the weighted samples in the randomized BO starts approximating the Dirac distribution at the mode rather than the target density.
On the other hand, BIS, by design, prevents such a concentration of samples.

\begin{figure}[t]
	\subcaptionbox{gaussian}{\includegraphics[width=0.31\textwidth]{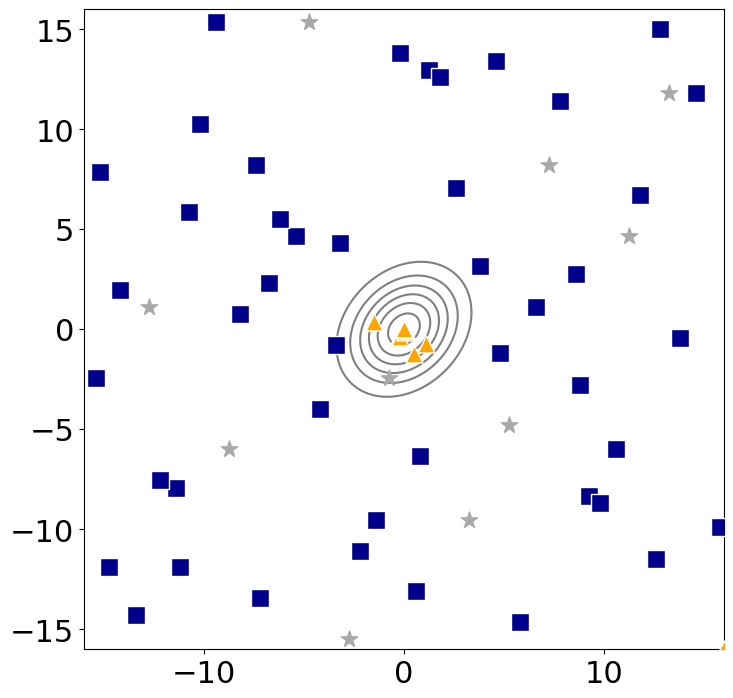}}
	\hfill
	\subcaptionbox{bimodal}{\includegraphics[width=0.3025\textwidth]{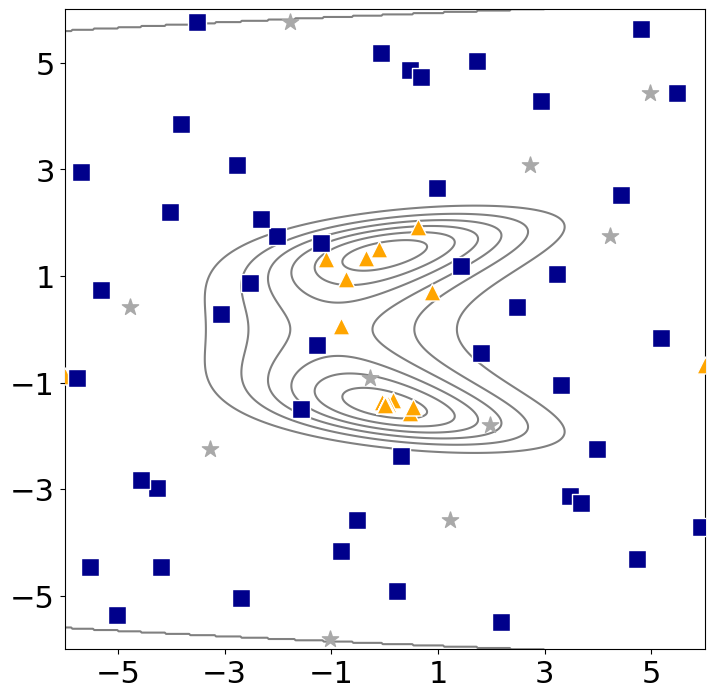}}
	\hfill
	\subcaptionbox{banana}{\includegraphics[width=0.31\textwidth]{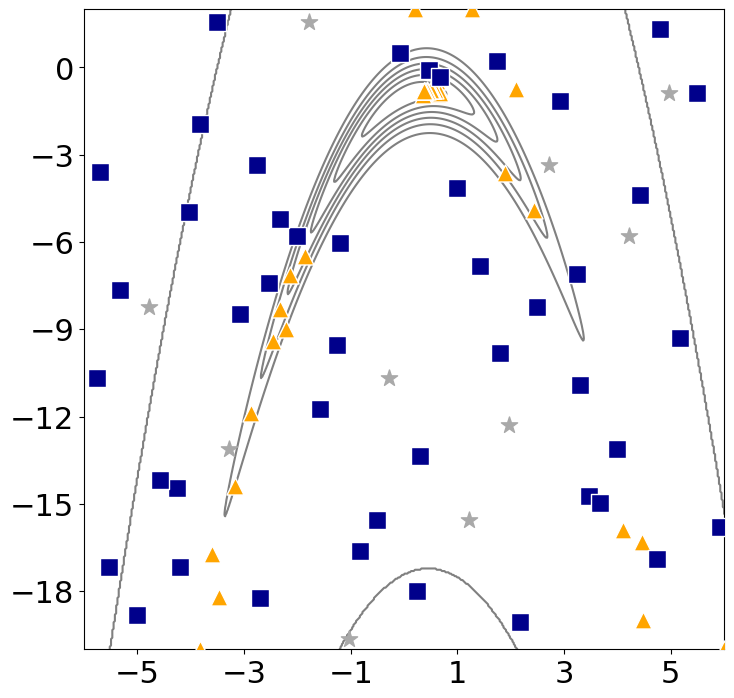}}
	\hfill
	\caption{Visualization of 100 samples obtained by the randomized BO of \cite{Kim2024} for each benchmark density. The initial 10 points are star-shaped. The following points obtained by BO and QMC are illustrated in, respectively, triangle- and square-shaped. The contour values of each density were powered to $1 / 3$ for better visualization of the tail geometry.} \label{fig:fig_s31_01}
\end{figure}

\begin{figure}[t]
	\subcaptionbox{gaussian}{\includegraphics[width=0.32\textwidth]{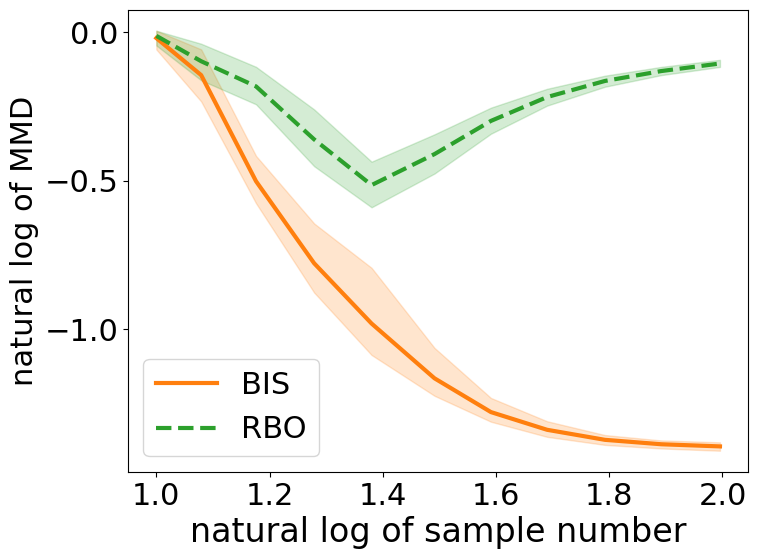}}
	\hfill
	\subcaptionbox{bimodal}{\includegraphics[width=0.32\textwidth]{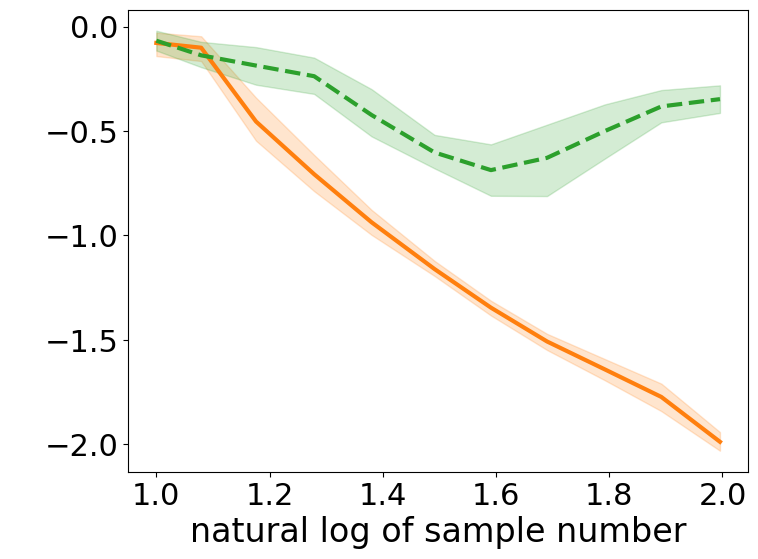}}
	\hfill
	\subcaptionbox{banana}{\includegraphics[width=0.32\textwidth]{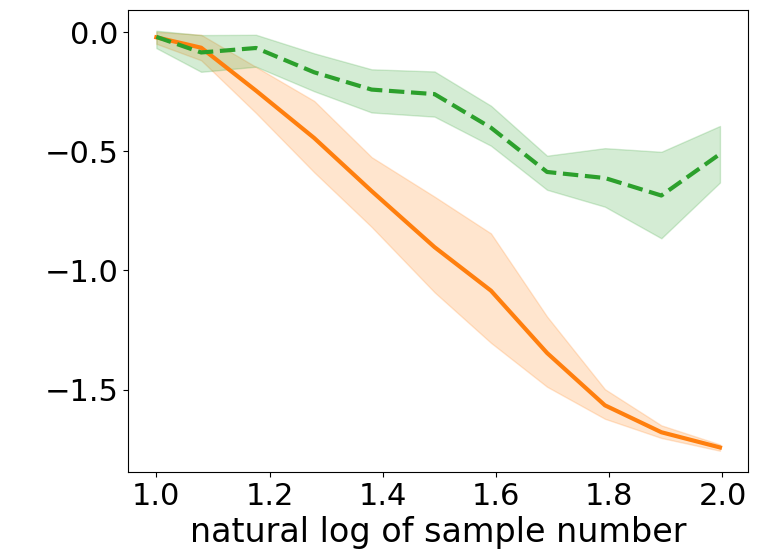}}
	\hfill
	\caption{The approximation error of BIS (solid line) and standard importance sampling based on the points of the randomized BO (dashed line). The experiment was repeated 10 times, where the bold line represents the averaged error and the band represents the 95\% confidence interval.} \label{fig:fig_s31_02}
\end{figure}

\subsection{Different Choices of Function $\phi$ in GP-UJB} \label{apx:choice_phi}

In \Cref{sec:setting}, \Cref{tab:cf} listed examples of the function $\phi$ used in GP-UJB.
This subsection compares the performance of BIS for different choices of the function $\phi$ through a simulation study.
We used the experimental setting in \Cref{sec:benchmark}, while changing the choice of the function $\phi$. 
\Cref{fig:fig_s32_01} visualizes 100 samples of BIS obtained for the banana density in \Cref{sec:benchmark}, under three different choices of the function $\phi$.
It demonstrates that, under the quadratic function $\phi(x) = x^2$ and the relu function $\phi(x) = \max(x, 0)$, a relatively large portion of samples were situated outside the high probability region of the target density.
On the other hand, the exponential function $\phi(x) = \exp(x)$ efficiently situated the majority of samples around the high probability region.

\Cref{fig:fig_s32_02} shows the approximation error of BIS for each benchmark density in \Cref{sec:benchmark}, under three different choices of the function $\phi$.
While all three choices of the function $\phi$ provide a reasonable approximation error, the default choice $\phi(x) = \exp(x)$ achieved a relatively better approximation error overall.
The approximation errors for the Gaussian and bimodal densities were similar, and those for the banana density exhibit a difference, as also exemplified in \Cref{fig:fig_s32_01}.
Given this simulation study, we recommend the exponential function $\phi(x) = \exp(x)$ as the default choice to use in GP-UJB.

\begin{figure}[t]
	\subcaptionbox{$\phi(x) = x^2$}{\includegraphics[width=0.31\textwidth]{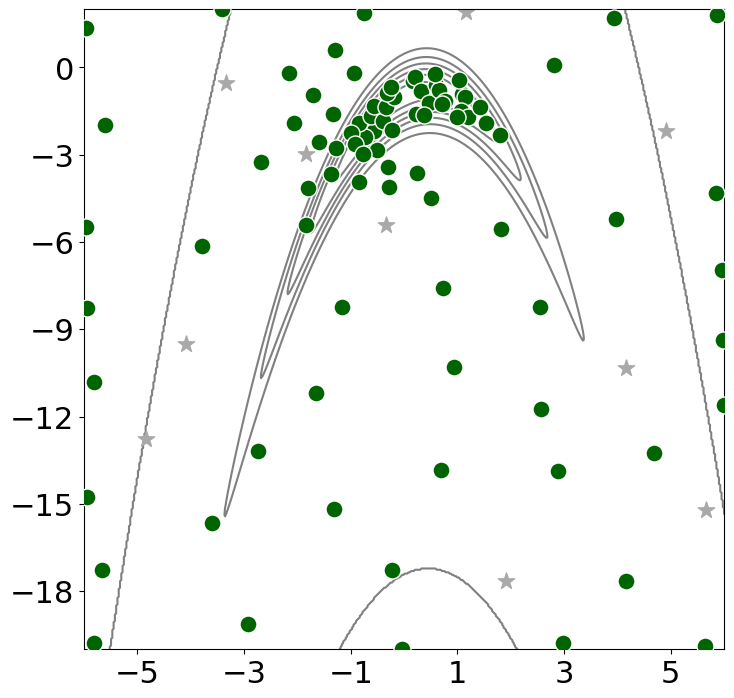}}
	\hfill
	\subcaptionbox{$\phi(x) = \max(x, 0)$}{\includegraphics[width=0.31\textwidth]{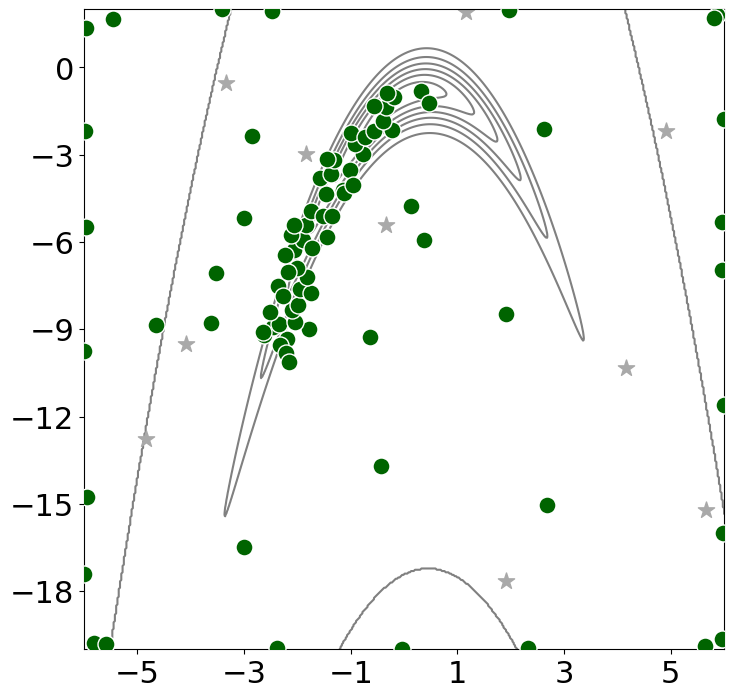}}
	\hfill
	\subcaptionbox{$\phi(x) = \exp(x)$}{\includegraphics[width=0.31\textwidth]{img/fig_41_03_sample_bis}}
	\hfill
	\caption{Visualization of 100 samples of the banana density obtained by BIS for different $\phi$. The contour values of the density were powered to $1 / 3$ for better visualization of the tail geometry.} \label{fig:fig_s32_01}
\end{figure}

\begin{figure}[t]
	\subcaptionbox{gaussian}{\includegraphics[width=0.32\textwidth]{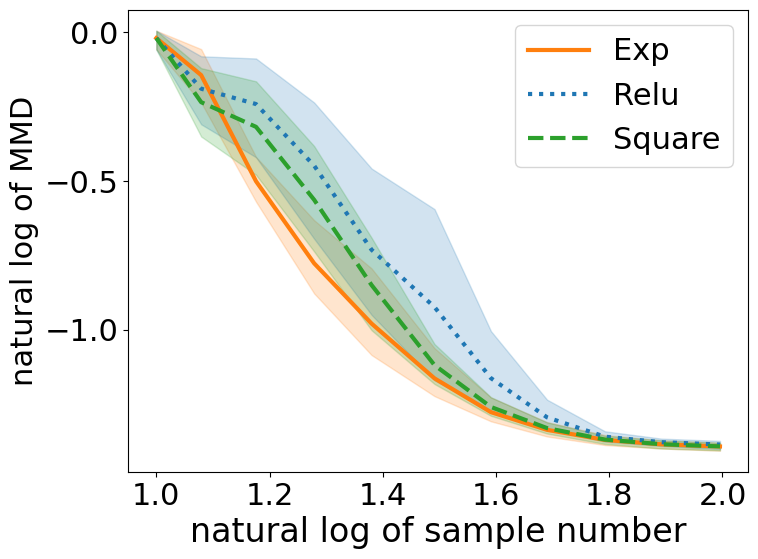}}
	\hfill
	\subcaptionbox{bimodal}{\includegraphics[width=0.32\textwidth]{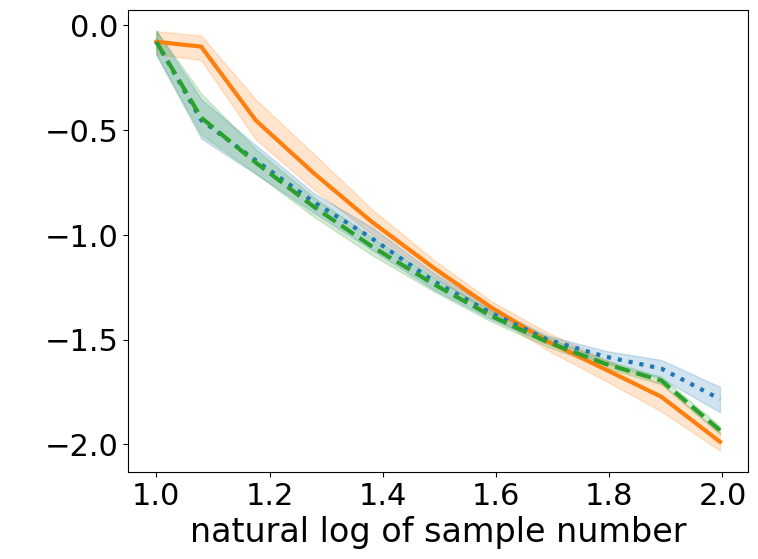}}
	\hfill
	\subcaptionbox{banana}{\includegraphics[width=0.32\textwidth]{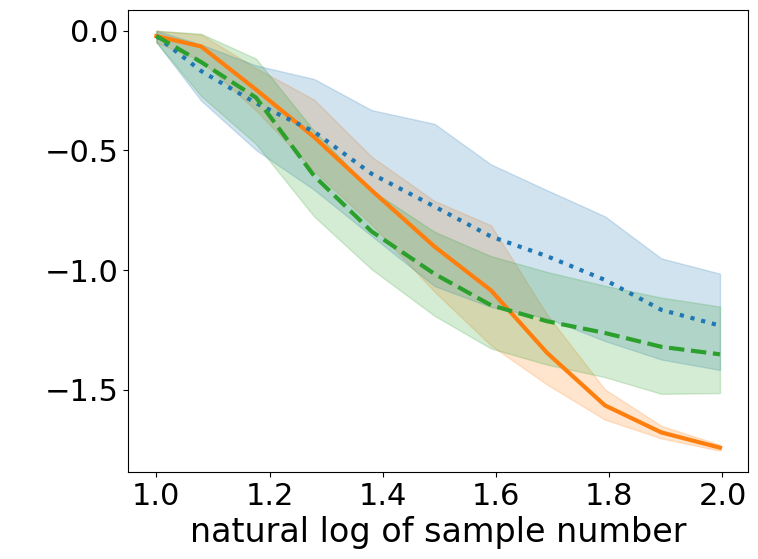}}
	\hfill
	\caption{The approximation error of BIS for each benchmark density when the function $\phi(x)$ in GP-UJB is quadratic $x^2$ (dashed line), relu $\max(x, 0)$ (dotted line), and exponential $\exp(x)$ (solid line). The experiment was repeated 10 times, where the bold line represents the averaged error and the band represents the 95\% confidence interval.} \label{fig:fig_s32_02}
\end{figure}

\subsection{Sensitivity of Proposal-Sequence Size $M$} \label{apx:pool_size}

The size of the proposal sequence $M$ is a hyperparameter of BIS.
The convergence rates in \Cref{sec:theory} imply the trade-off between small and large values of the size $M$. 
If the size $M$ is large, the first term in the convergence rates---regardless of the deterministic or probabilistic one---is suppressed, while the second term amplifies.
Conversely, if the size $M$ is small, the first term is not suppressed, while the second term does not amplify.
In the extreme case where $M$ is equal to one, BIS is identical to standard self-normalized importance sampling by construction.
Choosing a reasonable value of the size $M$ controls the trade-off, leading to an efficient approximation performance of BIS.

This subsection performs a simulation study on sensitivity of BIS to the size of the proposal sequence $M$.
We used the experimental setting in \Cref{sec:benchmark}, while changing the size of the proposal sequence $M$.
\Cref{fig:fig_s33_01} shows the approximation error of BIS for each benchmark density, under different values of the proposal-sequence size $M$.
As anticipated, when the size $M$ was the small value $2$, $8$, and $32$, the approximation performance of BIS got similar to that of standard importance sampling based on the scaled Halton sequence.
In addition, the excessively large value of the proposal-sequence size $M = 32768$ didn't result in the best approximation performance of BIS.
Overall, the size $M = 2048$ and $M = 8196$ led to the fastest decay of the approximation error of BIS.
As the smaller size $M$ reduces the computation cost in BIS, we recommend the size $M = 2048$ for experiments in two-dimensional domains based on this simulation study.

\begin{figure}[t]
	\subcaptionbox{gaussian}{\includegraphics[width=0.32\textwidth]{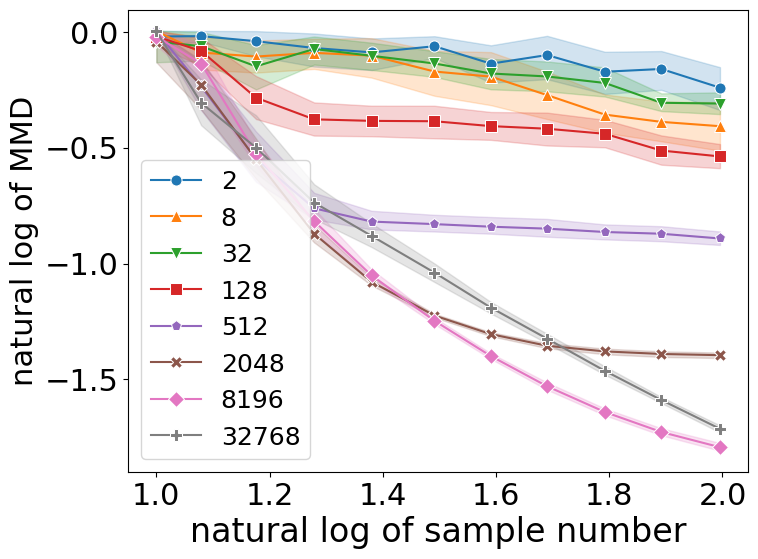}}
	\hfill
	\subcaptionbox{bimodal}{\includegraphics[width=0.32\textwidth]{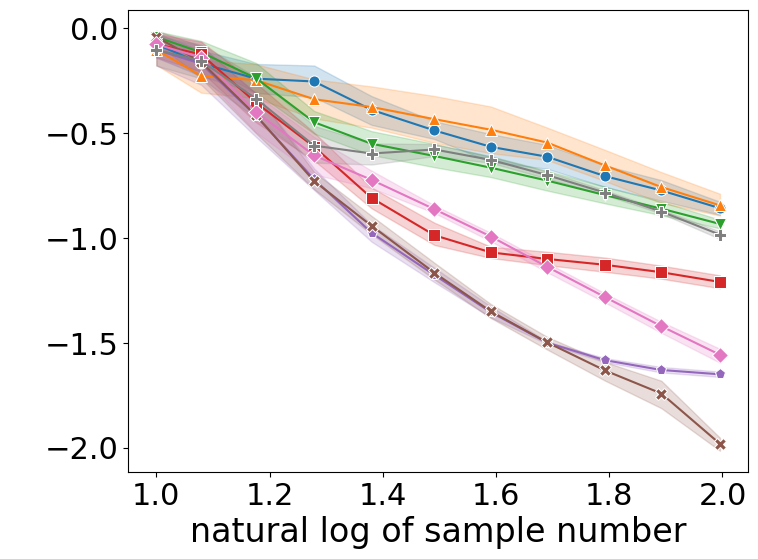}}
	\hfill
	\subcaptionbox{banana}{\includegraphics[width=0.32\textwidth]{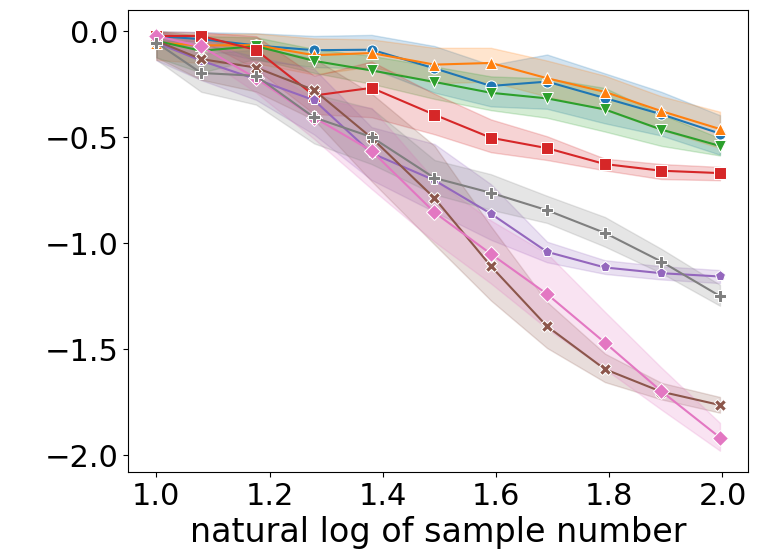}}
	\hfill
	\caption{The approximation error of BIS for each benchmark density when the size of the proposal sequence is $2$, $8$, $32$, $128$, $512$, $2048$, $8196$, and $32768$. The experiment was repeated 10 times, where the bold line represents the averaged error and the band represents the 95\% confidence interval.} \label{fig:fig_s33_01}
\end{figure}

\section{Additional Experimental Details} \label{apx:experiment}

This section provides additional details of the experiments presented in the main text.
\Cref{apx:experiment_1} contains that of \Cref{sec:benchmark} and \Cref{apx:experiment_2} contains that of \Cref{sec:weather}.

\subsection{Additional Details of \Cref{sec:benchmark}} \label{apx:experiment_1}

All the benchmark densities are given in the form proposed in \cite{Jarvenpaa2021} s.t.
\begin{align}
	p(\theta) \propto \exp\bigg( - \frac{1}{2}
	\begin{bmatrix}
		T_1(\theta) \\
		T_2(\theta)
	\end{bmatrix}^\mathrm{T}
	\begin{bmatrix}
		1 & \rho \\
		\rho & 1
	\end{bmatrix}
	\begin{bmatrix}
		T_1(\theta) \\
		T_2(\theta)
	\end{bmatrix}
	\bigg) .
\end{align}
Denote the first and second coordinates of the parameter $\theta$ by, respectively, $\theta_1$ and $\theta_2$.
The first unimodal Gaussian density is defined by $T_1(\theta) = \theta_1$, $T_2(\theta) = \theta_2$, and $\rho = 0.25$.
The second bimodal density is defined by $T_1(\theta) = \theta_1$, $T_2(\theta) = \theta_2^2 - 2$, and $\rho = 0.5$.
The third banana-shaped density is defined by $T_1(\theta) = \theta_1$, $T_2(\theta) = \theta_2 + \theta_1^2 + 1$, and $\rho = 0.9$.
We used the Gaussian kernel for the covariance kernel $k$ of the GP prior in GP-UJB.
The explicit form is given by
\begin{align}
	k(\theta, \theta') = \sigma^2 \exp\left( - \frac{\| \theta - \theta' \|^2}{2 l^2} \right)
\end{align}
where $l$ is the length-scale constant and $\sigma$ is the variance constant.

The approximation quality of the weighted samples of BIS was measured by MMD.
The form \eqref{eq:mmd} of MMD introduced in \Cref{sec:theory} admits a computationally-convenient alternative expression
\begin{align}
	\operatorname{MMD}(p, q)^2 = \E_{\theta, \theta' \sim p}[ \kappa(\theta, \theta') ] - 2 \E_{\theta \sim p, \theta' \sim q}[ \kappa(\theta, \theta') ] + \E_{\theta, \theta' \sim q}[ \kappa(\theta, \theta') ] .
\end{align}
This alternative form of MMD can be efficiently estimated by Monte Carlo integration using samples from the densities $p$ and $q$.
Throughout the experiments, we used MMD associated with the Gaussian kernel $\kappa(\theta, \theta') = \exp( - 0.5 \| \theta - \theta' \|^2 / h )$ with the constant $h = 0.1$.
The expectation in the MMD with respect to each target density was approximated using sufficiently many 100,000 weighted samples drawn from the scaled Halton sequence via standard importance sampling.

The approximation quality of the GP surrogate density was measured by TVD.
The explicit form of TVD between two densities $p$ and $q$ is given as follows:
\begin{align}
	\operatorname{TVD}(p, q) := \frac{1}{2} \int_\Theta \left| p(\theta) - q(\theta) \right| d\theta .
\end{align}
We approximate the integral over $\Theta$ by numerical integration.
In this experiment, the densities $p$ and $q$ to be plugged in TVD are available only up to the normalizing constants.
We also approximate their normalizing constants by numerical integration.
Let $\{ \eta_i \}_{i=1}^{N}$ be the $N$ points from the scaled Halton sequence on the domain $\Theta$.
Denote the volume of the domain $\Theta$ by $V$.
Let $\tilde{p}$ and $\tilde{q}$ be the proportional terms of the densities $p$ and $q$.
Then their normalizing constants, denoted by $Z_p$ and $Z_q$, are approximated as follows: 
\begin{align}
	Z_p \approx \frac{V}{N} \sum_{i=1}^{N} \tilde{p}(\eta_i) =: \hat{Z}_p \quad \text{and} \quad Z_q \approx \frac{V}{N} \sum_{i=1}^{N} \tilde{q}(\eta_i) =: \hat{Z}_q .
\end{align}
Finally, TVD between the densities $p$ and $q$ is numerically computed by 
\begin{align}
	\operatorname{TVD}(p, q) & \approx \frac{V}{2 N} \sum_{i=1}^{N} \left| \frac{ \tilde{p}(\eta_i) }{ \hat{Z}_p } - \frac{ \tilde{q}(\eta_i) }{ \hat{Z}_q } \right| .
\end{align}
We used the $N = 10,000$ points from the scaled Halton sequence throughout.

\Cref{tab:num} summarized the number of samples required for standard importance sampling to surpass the approximation error of 100 weighed samples obtained by BIS.
For standard importance sampling, we used the scaled Halton sequence as samples and assigned self-normalized importance weights to them.
It demonstrates that standard importance sampling required nearly 2,000 samples on average to achieve the approximation error that BIS does with 100 samples.

\begin{table}[h] 
	\caption{The number of samples required for standard importance sampling to surpass the approximation error of 100 BIS weighed samples. The experiment was repeated 10 times, where the averaged number was reported together with the standard deviation in a bracket.} \label{tab:num}
	\centering
	\begin{tabular}{ c c c }
		\hline
		target density & number of samples & approximation error \\
		\hline
		gaussian & 2368 ($\pm$ 232) & 0.040 ($\pm$ 0.002) \\
		bimodal & 1324 ($\pm$ 223) & 0.010 ($\pm$ 0.002) \\
		banana & 2487 ($\pm$ 293) & 0.018 ($\pm$ 0.001)  \\
		\hline
	\end{tabular}
\end{table}

\subsection{Additional Details of \Cref{sec:weather}} \label{apx:experiment_2}

The synthetic likelihood of the modified Lorentz weather model is computed using summary statistics of sample paths simulated from the model.
Denote by $s \in \R^d$ summary statistics of the observed sample path, for some dimension $d$.
Denote by $s(\theta) \in \R^d$ summary statistics of a sample path simulated from the model under a parameter $\theta$, where $s(\theta)$ is a random variable induced by simulation from the model.
Let $m(\theta) \in \R^d$ and $\Sigma(\theta) \in \R^{d \times d}$ be the mean and covariance of $s(\theta)$.
The synthetic likelihood at $\theta$ is then the Gaussian density $N(s \mid m(\theta), \Sigma(\theta))$ evaluated at the summary statistics $s$ of data.
In practice, the mean and covariance are estimated from multiple i.i.d.~realizations of the summary statistics $s(\theta)$.
Thus, the synthetic likelihood is computed using multiple sample paths simulated from the model at each $\theta$.
For the modified Lorentz weather model, \cite{Hakkarainen2012} used the following six summary statistics: (i) mean of $x_k(t)$ over time, (ii) variance of $x_k(t)$ over time, (iii) autocovariance of $x_k(t)$ with time lag one, (iv) covariance of $x_k(t)$ and $x_{k+1}(t)$ over time, (v) cross covariance of $x_k(t)$ and $x_{k-1}(t)$ with time lag one, (vi) cross covariance of $x_k(t)$ and $x_{k+1}(t)$ with time lag one.
The summary statistics computed for each variable $x_k(t)$ is averaged over $k$.
Thus, we have six-dimensional summary statistics for each sample path of the 40 variables.

\end{document}